\documentclass[11pt]{article}
\usepackage{amsmath,amssymb,fullpage,graphicx}

\usepackage[authoryear,round]{natbib}

\usepackage{color}
\usepackage{wasysym}

\usepackage{tikz-cd}
\usepackage{adjustbox}

\newtheorem{theorem}{Theorem}
\newtheorem{lemma}[theorem]{Lemma}

\newtheorem{proposition}[theorem]{Proposition}

\newenvironment{proof}{{\bf Proof.}}{$\Box$}

\DeclareMathOperator*{\argmin}{argmin}

\newenvironment{enum}{
\begin{enumerate}
  \setlength{\itemsep}{1pt}
  \setlength{\parskip}{0pt}
  \setlength{\parsep}{0pt}}
{\end{enumerate}}

\parskip 10pt
\parindent 0pt
\usepackage{sectsty}
\allsectionsfont{\bfseries\sffamily}
\let\hat\widehat
\let\tilde\widetilde

\begin{document}

\begin{center}
\textsf{\textbf{\Large Hybrid Wasserstein Distance and Fast Distribution Clustering}}\\
\textsf{\textbf{Isabella Verdinelli and Larry Wasserman}}\\
\textsf{\textbf{Department of Statistics and Data Science}}\\
\textsf{\textbf{Carnegie Mellon University}}\\
\textsf{\textbf{December 27 2018}}
\end{center}

\begin{quote}
{\em 
We define a modified Wasserstein distance
for distribution clustering
which inherits many of the properties of
the Wasserstein distance but which can be estimated easily
and computed quickly.
The modified distance is the sum of two terms.
The first term --- which has a closed form --- measures the 
location-scale differences between the distributions.
The second term is an approximation that
measures the remaining distance after accounting
for location-scale differences.
We consider several forms of approximation with our main emphasis
being a tangent space approximation
that can be estimated using
nonparametric regression.
We evaluate the strengths and weaknesses of this approach
on simulated and real examples.}

\end{quote}

\section{Introduction}

The Wassertstein distance
has attracted much attention lately
because it has many appealing properties.
(\cite{panaretos2018statistical,sommerfeld2018inference}).
It is especially useful
as a tool for clustering
a set of distributions
$P_1,\ldots, P_N$
because it captures key shape characteristics of the distributions.
But the Wasserstein distance is difficult
to compute and difficult to estimate from samples.
In this paper we introduce 
a modified Wasserstein distance
that can be estimated and computed quickly.

{\bf Wasserstein Distance.}
If $X\in\mathbb{R}^d$ is a random vector with distribution $P$ and 
$Y\in\mathbb{R}^d$ is a random vector with distribution $Q$ then,
for $p\geq 1$, the
$p$-Wasserstein distance is defined by
\begin{equation}\label{eq::wasserstein}
W_p(P,Q) \equiv W_p(X,Y) =
\left(\inf_J \int ||x-y||^p \, dJ(x,y)\right)^{1/p}
\end{equation}
where
the infimum is over all
joint distributions $J$ for
$(X,Y)$ such that
$X$ has marginal $P$ and
$Y$ has marginal $Q$.
The minimizer $J^*$ is called the
{\em optimal transport plan}
or {\em the optimal coupling}.
Figure \ref{fig::coupling} illustrates an example
of a coupling $J$.
In this paper we will focus on the case $p=2$ 
and then we write $W(P,Q)$ or $W(X,Y)$
instead of $W_2(P,Q)$ or $W_2(X,Y)$.

The modified distance that we propose is
\begin{equation}\label{eq::this}
H^2(X,Y) = W^2(Z_X,Z_Y) + W_\dagger^2(\tilde X,\tilde Y)
\end{equation}
where
\begin{align*}
Z_X\sim N(\mu_X,\Sigma_X),\ &\ \ \ \ \
\tilde{X} = \Sigma^{-1/2}_X(X-\mu_X)\\
Z_Y\sim N(\mu_Y,\Sigma_Y),\ &\ \ \ \ \
\tilde{Y} = \Sigma^{-1/2}_Y(Y-\mu_Y),
\end{align*}
$\mu_X = \mathbb{E}[X]$, $\mu_Y = \mathbb{E}[Y]$,
$\Sigma_X = {\rm Var}[X]$ and $\Sigma_Y = {\rm Var}[Y]$
and $W_\dagger$ is a
distance between the centered and scaled variables
$\tilde X$ and $\tilde Y$.
We consider several possible choices for $W_\dagger$.
We mainly focus on the case where
$W_\dagger(\tilde X,\tilde Y)$
is a tangent space approximation to
$W(\tilde{X},\tilde{Y})$
as defined by 
\cite{wang2013linear}.
The details of this tangent approximation
are given in Section \ref{section::modified}.
Our version of the tangent space distance
is a bit different than the original implementation 
as we use a combination of density estimation, permutation smoothing 
and subsampling. We will call $H$ the {\em hybrid} distance.
We will consider other choices for $W_\dagger(\tilde X,\tilde Y)$
in Section \ref{section::other}.

The first term in (\ref{eq::this})
measures location-scale differences
between the two distributions,
is available in closed form 
(see equation \ref{eq::Gaussian})
and can be estimated at a $n^{-1/2}$ rate
where $n$ is the sample size.
The second term captures any remaining non-linear differences.

\begin{figure}
\begin{center}
\begin{tabular}{ll}
\includegraphics[height=3in]{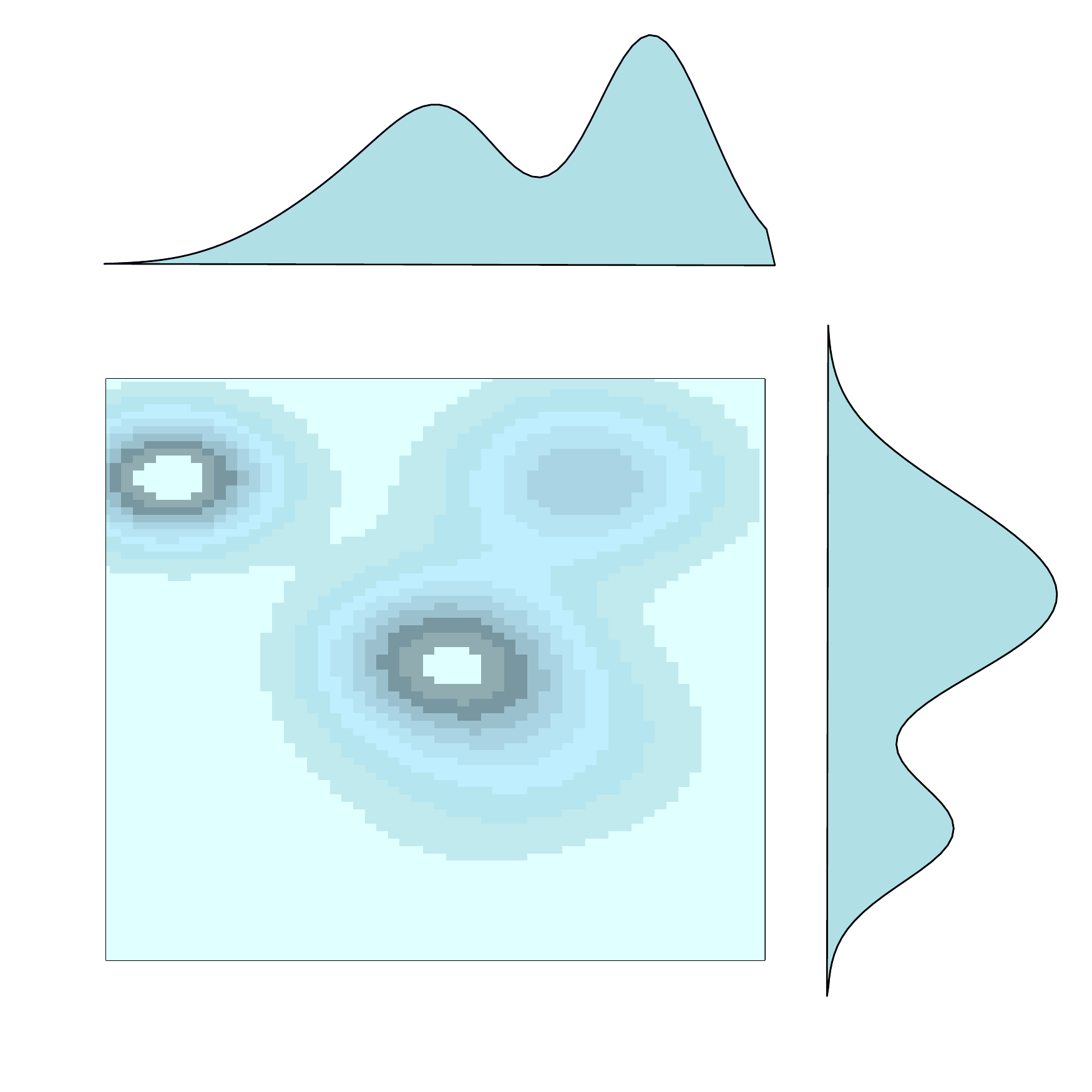}
\end{tabular}
\end{center}
\vspace{-.8cm}
\caption{\em This is an illustration of a coupling
$J$ with given marginals for $X$ and $Y$.
Generally, there are many such joint distributions.}
\label{fig::coupling}
\end{figure}

\parskip 10pt

{\bf Distribution Clustering}.
As mentioned above,
our main motivation
is ``distribution clustering''
which requires repeatedly computing distances.
Suppose, for example, that we want to cluster
a set of distributions $P_1,\ldots, P_N$.
Typically, these are empirical distributions
corresponding to 
datasets
${\cal D}_1,\ldots, {\cal D}_N$.
Given a metric $d$ on the set of probability
distributions, we can adapt existing methods ---
such as hierarchical clustering and $k$-means clustering ---
to the problem of clustering distributions.
But if we use
Wasserstein distance then
the calculations become onerous
since we need to compute many distances.
For example, suppose we want to perform 
agglomerative hierarchical clustering.
First we need to compute the
$N(N-1)/2$ distances
$W_p(P_i,P_j)$.
If we decide to cluster, say $P_1$ and $P_2$,
we need to combine the corresponding datasets
${\cal D}_1$ and ${\cal D}_2$.
Then we need to compute the distance between the new empirical measure
corresponding to
${\cal D}_1 \bigcup {\cal D}_2$ and all the other distributions.
Each stage of the hierarchical clustering involves recalculating the distances.
Similarly, if we use $k$-means clustering
we need to iterate between assigning points to clusters and computing centroids.
Computing the centroid --- also known as a barycenter --- with respect to the Wasserstein distance
is computationally expensive.
We show that replacing $W$ with $H$
significantly reduces the computational burden
without sacrificing accurate clustering.

{\bf Related Work.}
Our works builds on \cite{wang2013linear}
who introduced the idea of using a tangent space approximation to Wasserstein distance.
Their motivation was image processing and their implementation of the idea 
is quite different than our version.
We also make use of subsampling approximations which were suggested in
\cite{sommerfeld2018optimal}.
The Gaussian approximation $W(Z_X,Z_Y)$ is an example of a linear approximation.
Optimal linear approximations to
Wasserstein distance are studied in \cite{kuang2017preconditioning}.
An important reference on clustering distributions
with Wasserstein distance 
is \cite{del2017robust}
which not only introduces the idea
of distribution clustering in this way,
but also proposes a trimming procedure to create
robust clusterings.
We do not consider robustness in this paper
but we note that combining ideas from
\cite{del2017robust} with the ideas in this paper is an interesting
future direction.
Wasserstein clustering is also
studied in
\cite{ho2017multilevel}
in the context of hierarchical models.
That paper not only clusters distributions but,
simultaneously, clusters data within each distribution.

\vspace{1cm}

{\bf Paper Outline.}
In Section \ref{section::wass}
we review the Wasserstein distance.
In Section \ref{section::modified}
we give the details of the proposed modified distance.
In Section \ref{section::kmeans}
we define several versions of k-means distribution clustering.
Section \ref{section::examples}
gives some examples.
In Section \ref{section::other}
we briefly explain how our ideas can be used
for hierarchical clustering and mean-shift clustering.
In Section \ref{section::otherH}
we discuss some different versions of hybridization.
Section \ref{section::discussion} contains a discussion
and concluding remarks.

\section{Wasserstein Distance}
\label{section::wass}

In this section we give a brief review of the Wasserstein distance
and we explain why it is useful for distribution clustering.
An excellent reference on Wasserstein distance is \cite{villani2003topics}.
Recall that the Wasserstein distance is defined in
equation (\ref{eq::wasserstein}).

{\bf Explicit Expressions.}
In general, there is no closed form expression
for $W_p$.
There are three notable exceptions.

{\em (i)} 
When $d=1$,
the distance can be written explicitly as
$$
W_p(P,Q) = \left(\int_0^1 |F^{-1}(z) - G^{-1}(z)|^p\right)^{1/p}
$$
where $F(x) = P(X\leq x)$ and $G(y) = Q(Y\leq y)$.

{\em (ii)}
If $P_n$ is the empirical distribution of a 
dataset $X_1,\ldots, X_n$ and $Q_n$ is the empirical distribution 
of another dataset $Y_1,\ldots, Y_n$ of the same size,
then the distance takes the form
$$
W_p^p(P_n,Q_n) = \min_\pi \frac{1}{n}\sum_{i=1}^n ||X_i - Y_{\pi(i)}||^p
$$
where the minimum is over all permutations.
This minimization can be done using various algorithms
such as the Hungarian algorithm 
(\cite{kuhn1955hungarian})
which takes time $O(n^3)$.
When $d=1$ this further simplifies to
\begin{equation}
W_p(P_n,Q_n) = \left( \frac{1}{n}\sum_{i=1}^n |X_{(i)}-Y_{(i)}|^p\right)^{1/p}
\end{equation}
where
$X_{(1)} \leq \cdots \leq X_{(n)}$ and
$Y_{(1)} \leq \cdots \leq Y_{(n)}$ 
are the order statistics.

{\em (iii)}
The distance has a simple expression is the Gaussian case
(or more generally, for location-scale families).
Suppose that
$X\sim N(\mu_X,\Sigma_X)$ and 
$Y\sim N(\mu_Y,\Sigma_Y)$.
Then
\begin{equation}\label{eq::Gaussian}
W^2(P,Q)\equiv W^2(X,Y) =
||\mu_X - \mu_Y||^2 + {\rm B}^2(\Sigma_X,\Sigma_Y)
\end{equation}
where
\begin{equation}
{\rm B}^2(\Sigma_X,\Sigma_Y) =
{\rm tr}(\Sigma_1) + {\rm tr}(\Sigma_2) -
2 {\rm tr} \Bigl[ \Bigl( \Sigma_X^{1/2}\Sigma_Y \Sigma_X^{1/2}\Bigr)^{1/2}\Bigr]
\end{equation}
is the Bures distance 
(\cite{bhatia2018bures})
between
$\Sigma_X$ and $\Sigma_Y$.
See \cite{givens1984class} and
\cite{rippl2016limit}.
From now on, we refer to (\ref{eq::Gaussian})
as the {\em Gaussian Wasserstein distance}
and we denote this as
$G^2(X,Y)=||\mu_X - \mu_Y||^2 + {\rm B}^2(\Sigma_X,\Sigma_Y)$.

\vspace{.2cm}

{\bf The Monge Distance and Transport Maps.}
A related distance is the Monge distance defined~by
\begin{equation}\label{eq::Monge}
\left(\inf_T \int ||x-T(x)||^p dP(x)\right)^{1/p}
\end{equation}
where the infimum is over all maps $T$
such that $T(X)\sim Q$.
When a minimizer exists,
this corresponds to the Wasserstein distance
and the map $T$ is called the {\em optimal transport map}.
In this case the optimal coupling
$J^*$ is a degenerate distribution on the set
$\{(x,T(x))\}$.
But, the minimizer might not exist.
Consider
$P=\delta_0$ and
$Q = (1/2)\delta_{-1} + (1/2)\delta_{1}$
where $\delta_a$ denotes a point mass at $a$.
In this case, there is no map $T$ such that
$T(X) \sim Q$.
In contrast, an optimal coupling always exists
and can be thought of as defining a transport plan that allows the 
mass to be split and assigned to many locations.
A sufficient condition for the existence
of a unique optimal transport map is that both $P$ and $Q$ be absolutely continuous
with respect to Lebesgue measure.
In the Gaussian case,
the optimal transport map is
the linear map
$$
L(x) = \mu_Y + \Sigma_Y^{1/2}\Sigma_X^{-1/2}(x-\mu_X).
$$

{\bf Barycenters.}
Given a set of distributions
$P_1,\ldots, P_N$,
the {\em barycenter},
with respect to non-negative weights
$\lambda_1,\ldots,\lambda_N$, 
is defined to be the distribution $P$ that
minimizes
$\sum_j \lambda_j W^2(P,P_j)$.
(In this paper, we will always use $\lambda_j = 1/N$.)
There is a substantial literature
on finding methods to compute the barycenter.
In the special case that each $P_j$ is Gaussian,
the barycenter takes a special form.
Let
$P_j=N(\mu_j,\Sigma_j)$
for $j=1,\ldots, N$.
Then the barycenter $P$ is
$N(\mu,\Sigma)$ where
$\mu = N^{-1}\sum_j \lambda_j\mu_j$
and $\Sigma$ is the unique, symmetric, positive definite matrix
satisfying the fixed point equation
\begin{equation}\label{eq::fp}
\Sigma = \sum_j \lambda_j ( \Sigma^{1/2} \Sigma_j \Sigma^{1/2})^{1/2}.
\end{equation}
The same holds for any location-scale family;
see \cite{alvarez2016fixed}.
In one dimension,
the barycenter of $P_1,\ldots, P_N$
is the distribution $P$ with cdf $F$ where
$F^{-1}(u)=\sum_j \lambda_j F_j^{-1}(u)$.

{\bf Key Properties.}
There has been a surge of interest in the Wasserstein distance
in statistics and machine learning lately.
This is because the distance has a number of useful properties.
Here we review three key properties, namely:
(1) sensitivity to underlying geometry,
(2) comparability of discrete and continuous distributions and
(3) shape preservation.

\begin{enumerate}
\item {\em The Wasserstein distance is sensitive to the underlying geometry}.
Consider the distributions
$P_1 = \delta_0, P_2=\delta_\epsilon$ and
$P_3 = \delta_{100}$
where $\delta_a$ denotes a point mass at $a$
and $\epsilon>0$ is a small positive number.
Then
$W(P_1,P_2)\approx 0$,
$W(P_1,P_3)\approx W(P_2,P_3)\approx 100$.
On the other hand, consider the total variation distance
$d_{\rm TV}$.
Then
$d_{\rm TV}(P_1,P_2)=d_{\rm TV}(P_1,P_3)=d_{\rm TV}(P_2,P_3)=1$.
So the total variation distance fails to capture our intuition that
$P_1$ and $P_2$ are close while
$P_3$ is far.
The same is true for Hellinger distance and Kullback-Leibler distance.

\item {\em The Wasserstein distance permits direct comparison between discrete
and continuous distribution}.
If $P_1$ is continous and $P_2$ is discrete,
then, for example, $d_{\rm TV}(P_1,P_2)=1$.
But $W_p(P_1,P_2)$ gives reasonable values.
For example, suppose that
$P_1$ is uniform on $[0,1]$ and
$P_2$ is uniform on $\{1/N,2/N,\ldots, 1\}$.
Then
$d_{\rm TV}(P_1,P_2)=1$ for all $N$ but
$W_p(P_1,P_2)= 1/N$ which again seems quite intuitive.

\item {\em Shape Preservation}.
Suppose we have a set of distributions
$P_1,\ldots, P_N$.
Recall that the barycenter $P$ minimizes
$\sum_j \lambda_j W_2^2(P,P_j)$.
The barycenter $P$ preserves the shape of the distributions.
Specifically, if each $P_j$ can be written as a location-scale shift
of some distribution $P_0$,
the $P$ is also a location-scale shift of $P_0$.
For example,
suppose that $P_1 = N(\mu_1,\Sigma)$ and 
that $P_2 = N(\mu_2,\Sigma)$.
Then the barycenter is
$P=N((\mu_1+\mu_2)/2,\Sigma)$.
In contrast,
the Euclidean average
$(1/2)P_1 + (1/2)P_2$
looks nothing like any of the $P_j$'s.
Figure \ref{fig::BaryPlot}
shows a comparison of the Wasserstein barycenter and the
Euclidean average.
\end{enumerate}

\begin{figure}
\begin{center}
\includegraphics[width=6in,height=2in]{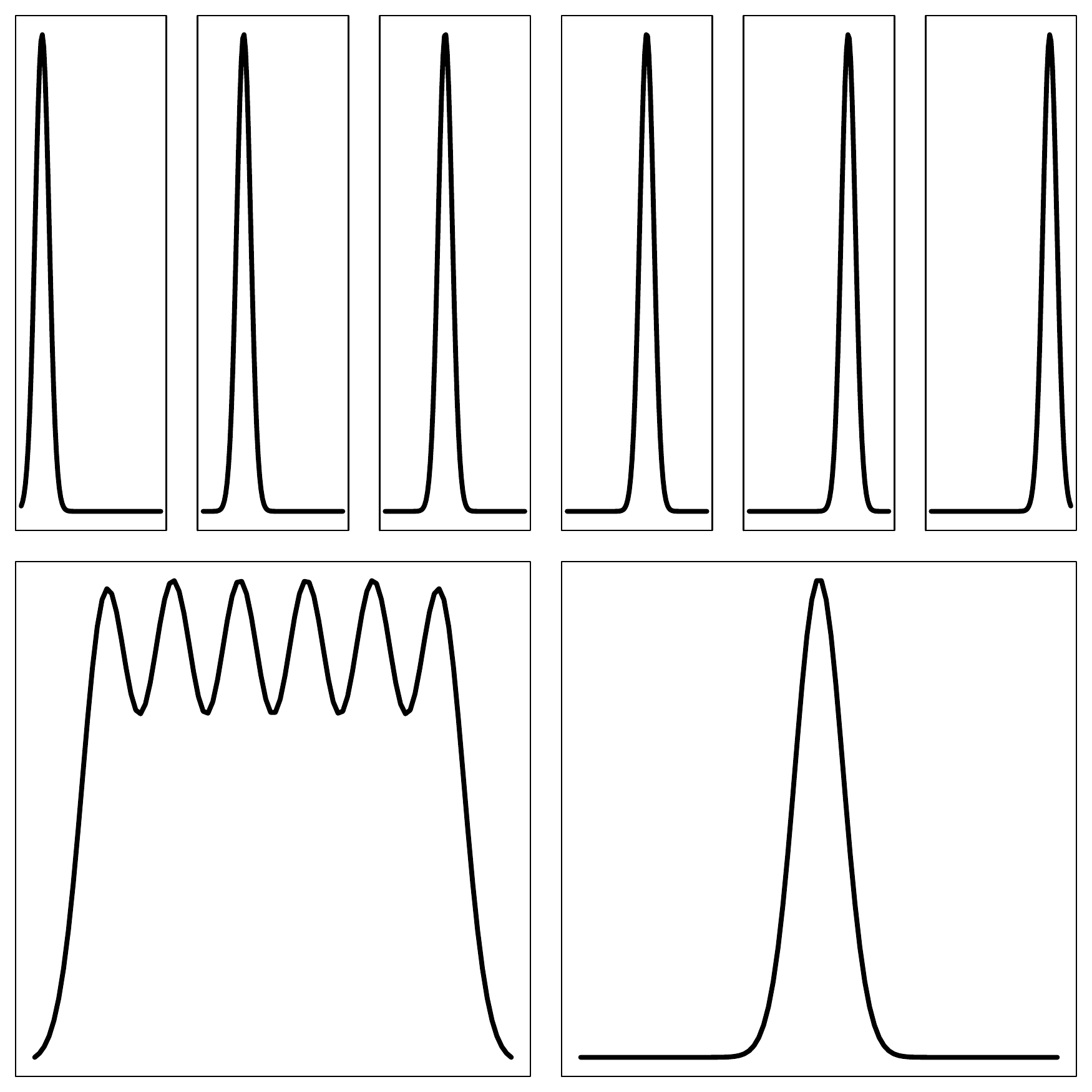}
\end{center}
\caption{\em Top row: Six Gaussian distributions.
Bottom left: The Euclidean average of the six distributions.
Bottom right: The Wasserstein barycenter.}
\label{fig::BaryPlot}
\end{figure}

\section{The Hybrid Distance}
\label{section::modified}

Let $X\sim P$ and $Y\sim Q$. 
Define $Z_X\sim N(\mu_X,\Sigma_X)$,
$Z_Y\sim N(\mu_Y,\Sigma_Y)$,
$\tilde{X} = \Sigma^{-1/2}_X(X-\mu_X)$,
and $\tilde{Y} = \Sigma^{-1/2}_Y(Y-\mu_Y)$.
Our modified distance ---
which we call the {\em hybrid distance} ---
is
\begin{equation}
H^2(X,Y) = W^2(Z_X,Z_Y) + W_\dagger^2(\tilde X,\tilde Y) =  G^2(X,Y)+ W_\dagger^2(\tilde X,\tilde Y)
\end{equation}
where
$W_\dagger$ is described below.
The first term $W^2(Z_X,Z_Y)\equiv G^2(X,Y)$
has a simple closed form, namely,
$||\mu_X - \mu_Y||^2 + {\rm B}^2(\Sigma_X,\Sigma_Y)$
where ${\rm B}(\cdot,\cdot)$ is the Bures distance.
Given samples
$X_1,\ldots, X_n \sim P$
and
$Y_1,\ldots, Y_m \sim Q$,
we can estimate
$W^2(Z_X,Z_Y)$ by plugging in sample moments.
We then have that
$\hat W(Z_X,Z_Y) = W(Z_X,Z_Y) + O_P( (n\wedge m)^{-1/2})$;
see \cite{rippl2016limit}.
We measure the remaining difference by computing the 
distance between the standardized variables 
$\tilde{X} = \Sigma^{-1/2}_X(X-\mu_X)$,
$\tilde{Y} = \Sigma^{-1/2}_Y(Y-\mu_Y)$
by adapting the method
from \cite{wang2013linear}
which we now describe.

Consider a set of distributions
$P_1,\ldots, P_N$.
Let $\mu_j = \mathbb{E}[X_j]$ and
$\Sigma_j = {\rm Var}[X_j]$
where $X_j \sim P_j$.
Let
$Z_j\sim N(\mu_j,\Sigma_j)$
and $\tilde{X}_j = \Sigma_j^{-1/2}(X_j - \mu_j)$.
Let $R$ be a reference measure with density $r$,
($R$ is defined below.)
Define
\begin{equation}
W_\dagger^2(\tilde{P}_j,\tilde{P}_k)=
\int (\psi_j(z)-\psi_k(z))^2 dz
\end{equation}
where
$\psi_j(z) = (T_j(z)-z)\sqrt{r(z)}$,
and $T_j$ is the optimal transport map from $\tilde{P}_j$ to $R$.
\cite{wang2013linear}
justify this expression as follows.
The set of probability measures
endowed with the Wasserstein metric
is a Riemannian manifold.
Then
$\int (\psi_j(z)-\psi_{k}(z))^2 dz$
is the distance between
the projections of $\tilde{P}_j$ and $\tilde{P}_k$ onto the tangent space 
at $R$; 
Now
$$
\int (\psi_{j}(z)-\psi_{k}(z))^2 dz =
\int (T_{j}(z)-T_{k}(z))^2 dR(z).
$$
Hence,
if $U_1,\ldots, U_m \sim R$ then
$$
\int (\psi_{j}(z)-\psi_{k}(z))^2 dz =
\frac{1}{m}\sum_{s=1}^m (T_{j}(U_s)-T_{k}(U_s))^2 + O_P(m^{-1/2}).
$$

\begin{figure}
\vspace{-1in}
\begin{center}
\includegraphics[scale=.4]{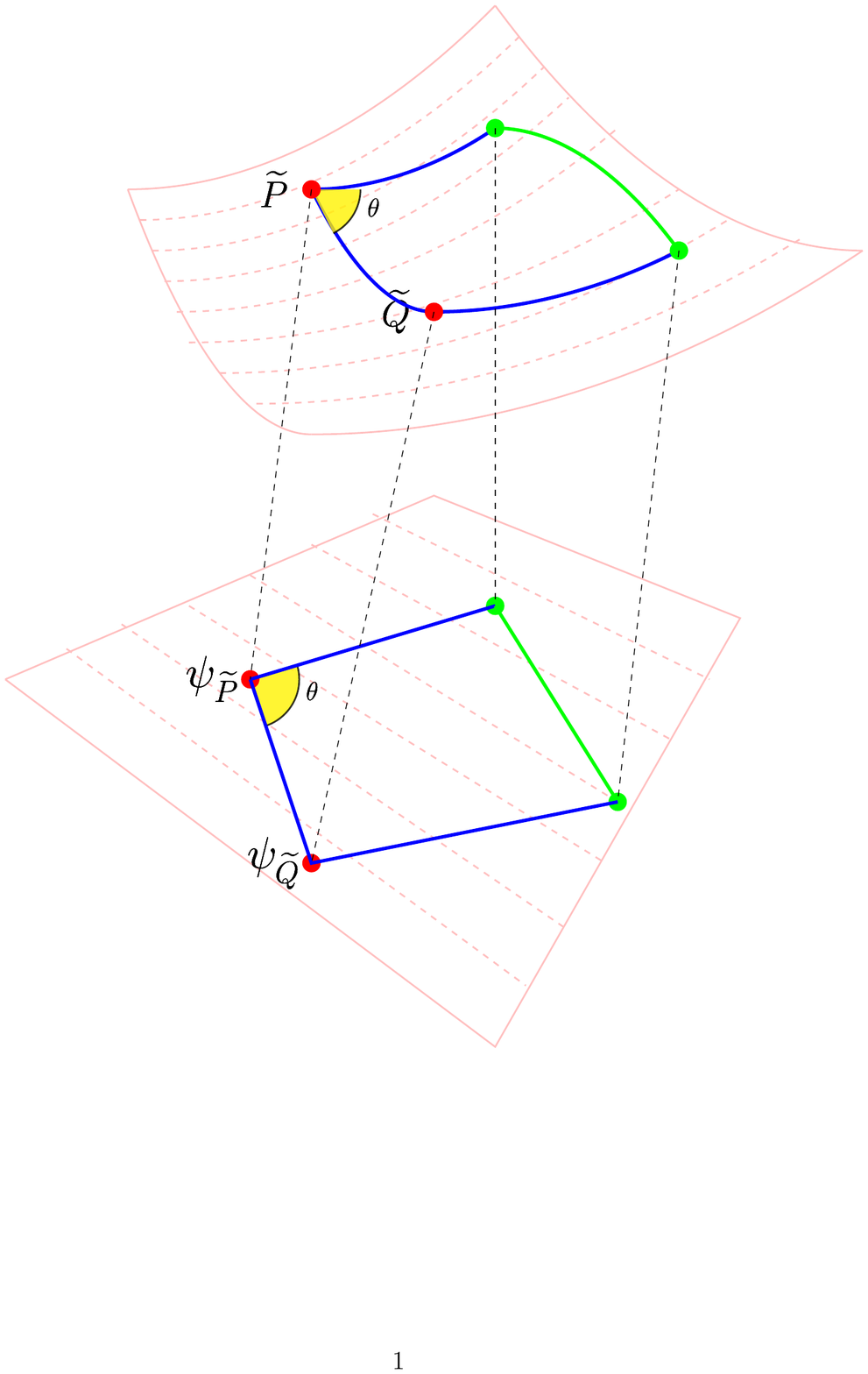} 
\end{center}
\vspace{-1in}
\caption{\em The space of probability measures equipped 
with Wasserstein distance is a manifold. The tangent distance between 
$P$ and $Q$ is the Euclidean distance between the projection
of $P$ and $Q$ into the local tangent space at the reference measure $R$.}
\label{fig::manifold}
\end{figure}

The hybrid distance is
\begin{align}\nonumber
H^2(X_j,X_k) &=
W^2 (Z_j,Z_k) + W_\dagger^2(\tilde{X}_j,\tilde{X}_k)\\
&= \nonumber
||\mu_j - \mu_k||^2 +
{\rm B}^2(\Sigma_j,\Sigma_k) + \int (\psi_j(z)-\psi_k(z))^2 dz\\
&=\nonumber
||\mu_j - \mu_k||^2 +
{\rm B}^2(\Sigma_j,\Sigma_k) + \int (T_j(z)-T_k(z))^2 dR(z)\\
& \approx \label{eq::final}
\underbrace{||\mu_j - \mu_k||^2}_{\rm location} + 
\underbrace{{\rm B}^2(\Sigma_j,\Sigma_k)}_{\rm scale} +
\underbrace{\frac{1}{m}\sum_i (T_j(U_i)-T_k(U_i))^2}_{\rm shape}
\end{align}
where
$U_1,\ldots, U_m \sim R$.

To estimate the distance we need to choose $R$
and estimate $T_j$.
\cite{wang2013linear}, motivated by applications in image processing,
suggest using
$R=N^{-1}\sum_j P_j$.
We take a slightly different approach.
Recall that we have datasets
${\cal D}_1,\ldots, {\cal D}_N$
where
${\cal D}_j= \{X_{j1},\ldots, X_{j n_j}\}$
consists of
$n_j$ observations with empirical distribution $P_j$.
Let 
$\tilde{D}_j = (\tilde{X}_{js}:\ 1\leq s \leq n_j)$
denote the normalized
observations where
$\tilde{X}_{js} = \hat\Sigma_j^{-1/2}(X_{js}-\hat\mu_j)$.
The combined dataset
$\tilde{\cal D}=\bigcup_{j=1}^N \tilde{\cal D}_j$
can be regarded as a sample from
$\sum_j \pi_j \tilde{P_j}$
where $\pi_j=n_j/\sum_j n_j$
and $\tilde P_j$ is the distribution of $\tilde X_j$.
Let $R$ be the distribution with density $r$
where $r$ is a kernel density estimate
obtained from
$\tilde{\cal D}$ using
a simple bandwidth rule such as Silverman's rule \citep{silverman2018density}.
Thus,
$R = R_n\star K_h$ (the convolution)
where
$R_n = \sum_j \pi_j \tilde{P}_j$
and $K_h$ is a kernel with bandwidth $h$.
This choice of reference measure is simple and smooth.

Next we have to estimate
$T_j$.
Here we use 
a variation of nonparametric regression
that we call
{\em permutation smoothing}.
The steps are 
given in Figure \ref{fig::alg1}.
The idea is to sample $m$ observations from each dataset and the 
reference measure $R$.
The optimal permutation for matching the samples can be found 
in $O(m^3)$ time.
This defines a map $T_j$ from $m$ points drawn from $\tilde P_j$ 
to $m$ points drawn from $R$.
We then extend $T_j$ over the whole space
by using $r$-nearest neighbor regression.
We show in the appendix that it suffices to take $r=1$
to get a consistent estimator of the transport function.
This amounts to taking $T_j$ to be the piecewise
constant
over the 
Voronoi tesselation 
$V_1,\ldots, V_m$ 
defined by
the point drawn from $\tilde{P}_j$.
We can summarize the steps as follows:

\bigskip
\fbox{\parbox{6.5in}{
$$
{\rm Gaussian\ approximation}\longrightarrow
{\rm subsample}\longrightarrow
{\rm permutation\  smoothing}\longrightarrow\
{\rm tangent\ approximation}.
$$
}}

\bigskip
\cite{wang2013linear}
point out that the tangent space approximation
is a well-defined distance and
need not be thought of as an approximation to the Wasserstein distance.
They show that, even when it does not approximate the Wasserstein distance, 
it still contains valuable information for comparing distributions.

\begin{figure}
\fbox{\parbox{6in}{
\begin{enum}
\item Fix an integer $m$.
\item Draw a sample $U_1,\ldots, U_m \sim R$ and 
draw a subsample $\tilde{X}_{j1},\ldots,\tilde{X}_{jm}$ from 
$\tilde{\cal D}_j$.
\item Find the permutation $\pi_j$ that
minimizes
$\sum_{i=1}^m ||\tilde{X}_{ji} - U_{\pi(i)}||^2$.
This can be done using the Hungarian algorithm and takes $O(m^3)$ time.
\item Let $V_{j1},\ldots, V_{jm}$ be the Voronoi tesselation defined by
$\tilde{X}_{j1},\ldots,\tilde{X}_{jm}$.
Define $\hat T_j(x) = \sum_s U_{\pi_j(s)}I(x\in V_s)$
for all $x\in \{U_1,\ldots, U_m\}$.
\end{enum}
}}
\caption{\em Permutation smoothing algorithm to estimate transport map.}
\label{fig::alg1}
\end{figure}

In the appendix we show that this method is consistent
if $n_j\to\infty$ and $m\to \infty$.
The idea of using subsamples to approximate Wasserstein distance
(rather than using subsamples to estimate the transport map to a reference measure as we are doing)
was examined carefully in \cite{sommerfeld2018optimal}.
For distribution clustering, the subsample size $m$ need not be large.
(In our examples we use $m=100$ but we get similar results even 
using $m=20$). This keeps the computation very fast. Also, note that 
we only ever evaluate $\hat T_j$ on the points $U_1,\ldots, U_m\sim R$.

{\bf Remark.}
{\em \cite{sommerfeld2018optimal}
suggest estimating Wasserstein distance by averaging over subsamples.
Similarly, we could repeat our procedure
over several subsamples and average the $\hat T_j$'s.
However, we have not found this to be necessary for distribution clustering.}

{\bf Remark.}
{\em We use $r$-nearest neighbor regression with $r=1$.
Of course, $r$ can be replaced with
a larger value if we want $\hat T_j$ to be smooth.
However, $r=1$ suffices for consistency.}

Finally, we estimate $H^2$ by
\begin{equation}
\hat H^2(P_j,P_k)=
||\hat\mu_j - \hat\mu_k||^2 +
{\rm B}^2(\hat\Sigma_j,\hat\Sigma_k) +
\frac{1}{m}\sum_{s=1}^m (\hat T_j(U_s) - \hat T_k(U_s))^2.
\end{equation}

The modified distance retains many properties of Wasserstein distance.
The following proposition summarizes these facts.
The proof is straightforward and is omitted.

\begin{proposition}
The distance $H$ has the following properties.
\begin{enum}
\item $H$ is a metric on the space of distributions with densities and
finite second moments.
In particular,
$H(P,Q)=0$ if and only if $P=Q$.
\item $H$ is exact for Gaussians: if $P$ and $Q$ are Gaussian then
$H^2(P,Q) = W^2(P,Q)$.
\item If $P_1,\ldots, P_N$ are in a location-scale family then
the barycenter is in the same family.
\end{enum}
\end{proposition}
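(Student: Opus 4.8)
The plan is to exploit the additive decomposition $H^2(P,Q)=G^2(P,Q)+W_\dagger^2(\tilde P,\tilde Q)$ and handle the two summands separately: $G^2$ depends on $P,Q$ only through their first two moments, while $W_\dagger^2$ depends only on the standardized laws $\tilde P,\tilde Q$. Throughout I would hold the reference measure $R$ fixed with strictly positive density $r$, so that each optimal transport map associated with $\tilde P_j$ is well defined (this is where absolute continuity is used) and $W_\dagger(\tilde P,\tilde Q)=\|T_P-T_Q\|_{L^2(R)}$.

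For part (1), symmetry and nonnegativity of $H$ are inherited directly from $G$ and $W_\dagger$. For the triangle inequality I would first observe that $G(P,Q)=W(Z_P,Z_Q)$ inherits it from the genuine Wasserstein distance on Gaussians, and that $W_\dagger(\tilde P,\tilde Q)$, being an $L^2(R)$ distance between transport maps, also satisfies it. Writing $H=\sqrt{G^2+W_\dagger^2}$ as the $\ell_2$-combination of these two metrics, the triangle inequality for $H$ then follows from the Minkowski inequality in $\mathbb{R}^2$ applied to the vectors $(G(P,Q),W_\dagger(\tilde P,\tilde Q))$ and $(G(Q,S),W_\dagger(\tilde Q,\tilde S))$ together with the component-wise triangle inequalities. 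For the identity of indiscernibles, $H(P,Q)=0$ forces $G(P,Q)=0$ and $W_\dagger(\tilde P,\tilde Q)=0$ at once; the first gives $\mu_P=\mu_Q$ and $\Sigma_P=\Sigma_Q$ since the Bures distance is a metric on positive-definite matrices, and the second gives $T_P=T_Q$ $R$-a.e., hence $\tilde P=(T_P)_{\#}R=(T_Q)_{\#}R=\tilde Q$. Undoing the invertible standardization then yields $P=Q$, and the converse is immediate.

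Parts (2) and (3) both hinge on the shape term vanishing. For (2), if $P$ and $Q$ are Gaussian then $Z_P,Z_Q$ have the same laws as $P,Q$, so the first term equals $W^2(P,Q)$ by (\ref{eq::Gaussian}); meanwhile $\tilde X$ and $\tilde Y$ are both $N(0,I)$, so $T_P=T_Q$ and $W_\dagger^2=0$, giving $H^2=W^2$. For (3), membership in a location-scale family with base law $P_0$ means $\tilde P_j=P_0$ for every $j$, so $\sum_j\lambda_j H^2(P,P_j)=\sum_j\lambda_j G^2(P,P_j)+W_\dagger^2(\tilde P,P_0)$, where the shape term collapses to a single summand because it no longer depends on $j$ and $\sum_j\lambda_j=1$. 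The two pieces decouple: the first depends on $P$ only through $(\mu_P,\Sigma_P)$ and is minimized at the location-scale barycenter moments $\mu^\star=\sum_j\lambda_j\mu_j$ and $\Sigma^\star$ solving (\ref{eq::fp}), while the nonnegative shape term is minimized (and vanishes) exactly when $\tilde P=P_0$. Taking $P$ to be the family member with mean $\mu^\star$ and covariance $\Sigma^\star$ achieves both minima simultaneously, so the barycenter lies in the family.

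I expect the main obstacle to be the triangle inequality in part (1): the key realizations are that $H$ is the $\ell_2$-combination of two separate metrics, that such a combination is itself a metric via Minkowski, and that the tangent-space term $W_\dagger$ must be treated as an honest $L^2(R)$ distance between transport maps (rather than as an approximation to $W$) in order to guarantee the triangle inequality. Everything else reduces to the moment/shape decoupling, which is routine once that structure is recognized.
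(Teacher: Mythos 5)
The paper gives no proof of this proposition at all (it is declared ``straightforward'' and omitted), and your argument correctly supplies the intended straightforward one: the key observations --- that $H$ is the $\ell_2$-combination of the two pseudometrics $G$ and $W_\dagger$, so the triangle inequality follows from componentwise triangle inequalities plus Minkowski in $\mathbb{R}^2$; that $G(P,Q)=0$ and $W_\dagger(\tilde P,\tilde Q)=0$ jointly force $P=Q$ through the invertible standardization; and that the shape term vanishes in parts (2) and (3) --- are exactly the structure the authors rely on elsewhere (e.g.\ in their barycenter lemma). The only point worth making explicit is in part (3): the identity $\tilde P_j=P_0$, which you use to collapse the shape term to a single summand, is valid when the location-scale family is parametrized by symmetric positive-definite scale matrices (the convention of the reference the paper cites for equation (\ref{eq::fp})); for a general invertible scale matrix $A_j$ the standardized law is rotated by an orthogonal matrix $O_j$ that varies with $j$, so the statement implicitly assumes that convention.
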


When clustering we need to repeatedly compute averages.
In terms of the Wasserstein distance this corresponds to
computing barycenters.
The barycenter with respect to
$H$ is easy to compute.
Figure \ref{fig::barybary} summarizes the steps for computing the barycenter.

\begin{lemma}
Let $\overline{P}$ minimimize
$V(P)=\sum_j \lambda_j H^2(P,P_j)$.
Then $\overline{P}$ can be characterized as follows:
$\overline{P}$ is the distribution of the random variable
$$
Y = \overline{\mu} + \overline{\Sigma}^{1/2} \overline{T}^{-1}(U)
$$
where
$U\sim R$,
$\overline{T}(z) = z + \sum_j \lambda_j (T_j(z)-z)$,
$\overline{\mu} = \sum_j \lambda_j \mu_j$
and
$\overline{\Sigma}$ is the unique, positive definite matrix satisfying
the fixed point equation
$\Sigma = \sum_j \lambda_j ( \Sigma^{1/2} \Sigma_j \Sigma^{1/2})^{1/2}$.
\end{lemma}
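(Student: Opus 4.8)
The plan is to use the fact that, by construction, $H^2$ splits additively into a location part, a scale part, and a shape part that depend on disjoint coordinates of a distribution. Parametrize any candidate $P$ (with a density and finite second moment) by the triple $(\mu_P,\Sigma_P,\tilde P)$, where $\mu_P$ is its mean, $\Sigma_P$ its positive definite covariance, and $\tilde P$ the law of the standardized variable $\Sigma_P^{-1/2}(X-\mu_P)$; this correspondence is a bijection. Letting $T_P$ be the transport map associated with $\tilde P$ and $R$ exactly as the $T_j$ are, the definition of $H$ gives
\begin{equation*}
V(P)=\sum_j\lambda_j\|\mu_P-\mu_j\|^2+\sum_j\lambda_j{\rm B}^2(\Sigma_P,\Sigma_j)+\sum_j\lambda_j\int(T_P(z)-T_j(z))^2\,dR(z).
\end{equation*}
First I would record that the three sums depend only on $\mu_P$, only on $\Sigma_P$, and only on $\tilde P$ (through $T_P$) respectively --- the first two because $G^2$ is a function of $(\mu,\Sigma)$ alone, the third because $W_\dagger$ is by definition a function of the standardized laws alone --- so the three functionals may be minimized independently.

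Next I would minimize each piece in closed form. The location functional is a strictly convex quadratic, minimized at the weighted mean $\overline\mu=\sum_j\lambda_j\mu_j$ (using $\sum_j\lambda_j=1$). The scale functional $\sum_j\lambda_j{\rm B}^2(\Sigma_P,\Sigma_j)$ is exactly the Bures barycenter objective, whose unique positive definite minimizer $\overline\Sigma$ solves the fixed point equation (\ref{eq::fp}); here I would simply invoke \cite{alvarez2016fixed}. For the shape functional, since $\int(T_P-T_j)^2\,dR$ is a squared $L^2(R)$ distance between tangent representations, $\sum_j\lambda_j\int(T_P-T_j)^2\,dR$ is minimized pointwise (equivalently, by Hilbert space projection) at the average map $\overline T(z)=\sum_j\lambda_j T_j(z)=z+\sum_j\lambda_j(T_j(z)-z)$, which is the stated form. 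I would then reconstruct the barycenter from its coordinates: the law with mean $\overline\mu$, covariance $\overline\Sigma$, and standardized shape carried by $\overline T$ is precisely the law of $\overline\mu+\overline\Sigma^{1/2}\overline T^{-1}(U)$ with $U\sim R$, since $\overline T^{-1}$ pushes $R$ forward to $\tilde{\overline{P}}$.

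The step I expect to be the main obstacle is the shape term, specifically reconciling this unconstrained $L^2(R)$ averaging with the requirement that the minimizer arise from a genuine distribution. Two points need care. First, $\overline T$ must be a legitimate, invertible transport map so that $\overline T^{-1}$ is well defined; this follows from Brenier's theorem, since each $T_j$ is the gradient of a convex potential $\phi_j$, a convex combination $\sum_j\lambda_j\phi_j$ is again convex, and hence $\overline T=\nabla(\sum_j\lambda_j\phi_j)$ is a monotone optimal map. Second, for the coordinate decomposition to be self consistent one would like $\overline T^{-1}(U)$ to have mean zero and identity covariance, so that the reconstructed law genuinely has mean $\overline\mu$ and covariance $\overline\Sigma$. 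In the tangent space linearization of \cite{wang2013linear} the standardized shape is treated as a free element of $L^2(R)$, so the averaging is unconstrained and the displayed formula is taken as the definition of $\overline P$; I would therefore carry out the argument at the level of the tangent space representation, where the separation is exact and each of the three averaging problems --- Euclidean, Bures, and $L^2(R)$ --- has the clean closed form above, and note that this is exactly the sense in which $\overline P$ is the $H$ barycenter.
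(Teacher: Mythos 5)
Your proof follows essentially the same route as the paper's: parametrize $P$ by the triple $(\mu_P,\Sigma_P,\tilde P)$, observe that $V(P)$ splits into location, scale, and shape sums that can be minimized separately (weighted mean, Bures fixed point, and $L^2(R)$ average of the maps), and reconstruct $\overline P$ as $\overline\mu+\overline\Sigma^{1/2}\overline T^{-1}(U)$. The paper's own proof is a terser version of exactly this argument; your added care about invertibility of $\overline T$ via Brenier's theorem and about the moment constraints on the standardized law (i.e., whether the averaged map is realizable as the transport map of a genuine standardized distribution) addresses real gaps that the paper's one-line ``minimize each sum separately'' passes over silently.
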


\begin{proof}
Let $P$ be a distribution and
let $\mu$ be its mean, let $\Sigma$ be its covariance
and let $T$ be the optimal transport map from $P$ to $R$.
From the definition of $H$ we have that
$$
\sum_j H^2(P,P_j) = \sum_j ||\mu-\mu_j||^2 + \sum_j B(\Sigma,\Sigma_j) +
\sum_j \int (\psi_j(z) - \psi(z))^2 dz
$$
where $\psi(z) = (T(z)-z)\sqrt{r(z)}$ and
$\psi_j(z) = (T_j(z)-z)\sqrt{r(z)}$.
By minimizing each sum separately,
the optimal $P$ has mean and variance as stated and
its transport map satisfies
$\psi = \sum_j \lambda_j \psi_j$ which implies that
$\overline{T}(z) = z + \sum_j \lambda_j (T_j(z)-z)$.
\end{proof}

We can regard the triple
$(\mu_P,\Sigma_P,\psi_P)$
as a transform $\phi$ 
of $P$
where 
\begin{equation}\label{eq::transform}
\phi: P \mapsto (\mu_P,\Sigma_P,\psi_P).
\end{equation}
We call $\phi$ the
{\em hybrid transform}.
Note that the transform depends on the reference measure $R$.
Barycenters are computed by averaging each component
of the triple separately
(with the appropriate fixed point equation used for $\Sigma$.)
All clustering calculations can be carried out
in terms of the representation rather than in terms of the original distribution.
The representation is invertible:
given a 
triple $(\mu_P,\Sigma_P,\psi_P)$,
the corresponding $P$ is the distribution of the random variable
$\mu_P + \Sigma^{1/2}_P T^{-1}(U)$ with $U\sim R$
and $T(z) = \psi_P(z)/\sqrt{r(z)}+z$.
We write
$P = \phi^{-1}(\mu_P,\Sigma_P,\psi_P)$.

\begin{figure}
\fbox{\parbox{6in}{
\begin{center}
{\sf Barycenters}
\end{center}
\begin{enum}
\item Given $P_1,\ldots, P_N$ and weights $\lambda_1,\ldots, \lambda_N$.
\item Compute the means $\mu_1,\ldots, \mu_N$ and 
variances $\Sigma_1,\ldots, \Sigma_N$.
\item Use the algorithm in Figure \ref{fig::alg1}
to compute $T_1,\ldots, T_N$.
\item Let $\psi_j(z) = (T_j-z)\sqrt{r(z)}$.
\item Set
$\overline{\mu}= \sum_j \lambda_j \mu_j$ and
$\overline{\psi}= \sum_j \lambda_j \psi_j$.
Compute $\overline{\Sigma}$ from (\ref{eq::fp}).
\item Output $(\overline\mu,\overline\Sigma,\overline\psi)$.
\end{enum}
}}
\caption{Barycenter algorithm.}
\label{fig::barybary}
\end{figure}

\parskip 10pt

{\bf A Further Speed-Up Using Hypothesis Testing.}
The most expensive step in computing $\hat H^2$ is
estimating the transport map $T_j$.
Sometimes
$\tilde{P}_j$ is very close to $R$
and there is no need to compute
$T_j$.
We can check this formally by testing
$H_0: \tilde{P}_j=R$
using any convenient two-sample hypothesis test.
If the test rejects, we compute $\psi_j$.
If the test fails to reject we just set $\psi_j =0$.
A convenient nonparametric test 
with a distribution-free null distribhtion
is the cross-match test 
(\cite{rosenbaum2005exact}).
This idea is illustrated in Section
\ref{section::speedup}.

\section{$k$-means Distribution Clustering}
\label{section::kmeans}

Now we are ready to discuss the distribution clustering problem.
For concreteness, we focus here on $k$-means clustering.
In Section \ref{section::other}
we briefly consider other clustering methods.

The general outline is as follows.
Fix an integer $k$.
Then:
\begin{enum}
\item Choose $k$ distributions $c_1,\ldots, c_k$ as starting points
using ${\rm k-means}^{++}$
\citep{arthur2007k}.
\item Assign each distribution to the nearest centroid:
$$
{\cal C}_j = \Bigl\{ P_s:\ d(P_s,c_j) < d(P_s,c_\ell) \ 
{\rm for\ all\ }\ell \neq j\Biggr\}.
$$
\item Compute the barycenter $c_j$ of ${\cal C}_j$ putting equal 
weight on each distribution in the cluster.
\item Repeat steps 2 and 3 until convergence.
\end{enum}

We consider four versions:
\begin{enum}
\item {\bf Exact}: Use the Wasserstein distance as the distance at each step.
Given the computational burden, we only use the exact method
in one-dimensional examples
as a point of comparison.

\item {\bf Euclidean}. Compute $W(P_i,P_j)$ for each pair.
Use multidimensional scaling to embed the distributions into
$\mathbb{R}^a$. Thus we map each $P_j$ to a vector
$V_j \in \mathbb{R}^a$.
For ease of visualization, we use $a=2$.
We note that the set of distributions equipped with
Wasserstein distance is not isometrically embeddable into $\mathbb{R}^a$
and so there will necessarily be some distortion.
Once we have the vectors
$V_1,\ldots, V_N$ we proceed with $k$-means clustering as usual.

\item {\bf Gaussian}.
We use $W(Z_i,Z_j)$ as an approximation
of $W(P_i,P_j)$.
Barycenters are computed by averaging the means and by
iterating the fixed point equation
(\ref{eq::fp}) for the variances.

\item {\bf Hybrid.}
We use our Hybrid distance $H(P_i,P_j)$.
The steps of this approach are in Figure \ref{fig::main}.
\end{enum}

Of course, the fourth method is the focus
of this paper.
We include the others for comparison.

\begin{figure}
\fbox{\parbox{6in}{
\begin{center}
{\sf Hybrid Wasserstein $k$-Means}
\end{center}
\begin{enum}
\item Choose starting centroid distributions:
\begin{enum}
\item Choose an integer $j$ at random from $\{1,\ldots, N\}$. Set
${\cal C} = \{P_j\}$.
\item Compute $H_i^2 = \min_{P\in {\cal C}} H^2(P_i,P)$ for $i=1,\ldots, N$.
\item Choose a new $j\in \{1,\ldots, N\}$ with probability proportional 
to $H_j^2$.
Set ${\cal C} = {\cal C}\cup\{j\}$.
\item Repeat the previous steps until 
${\cal C}=\{c_1,\ldots, c_k\}$ contains $k$ distributions.
\end{enum}
\item Assign each $P_j$ to the closest centroid in ${\cal C}$.
This defines clusters
$C_1,\ldots, C_k$ where
$C_j = \{P:\ H(P,c_j) < H(P,c_s),\ s\neq j\}$.
\item Compute the barycenter $c_j$ of $C_j$.
\item Repeat steps 2 and 3 until convergence.
\item Return $C_1,\ldots, C_k$.
\end{enum}}}
\caption{\em The hybrid $k$-means distribution clustering method.}
\label{fig::main}
\end{figure}

\section{Examples}
\label{section::examples}

In this section,
we consider some examples.
In all the examples that follow,
we visualize the set of distributions by
applying multidimensional scaling to the pairwise distances.
Each point in the plots represents one distribution.

\subsection{One Dimensional Examples}

This section presents three one-dimensional examples.
These examples are chosen to illustrate
the behaviour 
of the four versions of $k$-means clustering 
methods, presented in Section \ref{section::kmeans}, and,
more specifically, for comparing the performance of the
hybrid distance of Section \ref{section::modified}
with the procedure based on Wasserstein distance. 
When $d=1$, the exact Wasserstein clustering can be 
easily computed. (The algorithm is in the appendix.)

The first example consists of 15 Normally distributed data 
sets of size $n=100$, all with variance $\sigma^2=1$. There
are three groups of five data sets each with means close 
together. This specific simple example is chosen to 
show that all four versions of $k$-means clustering methods 
identify the clusters correctly, in straightforward cases.
Figure \ref{fig::Ex1-1dClu} shows the multidimensional scaling 
projection of the pairwise distances, with different colors
used to indicate the clusters.

\begin{figure}
\vspace{-1in}
\begin{center}
\includegraphics[height=3.8in]{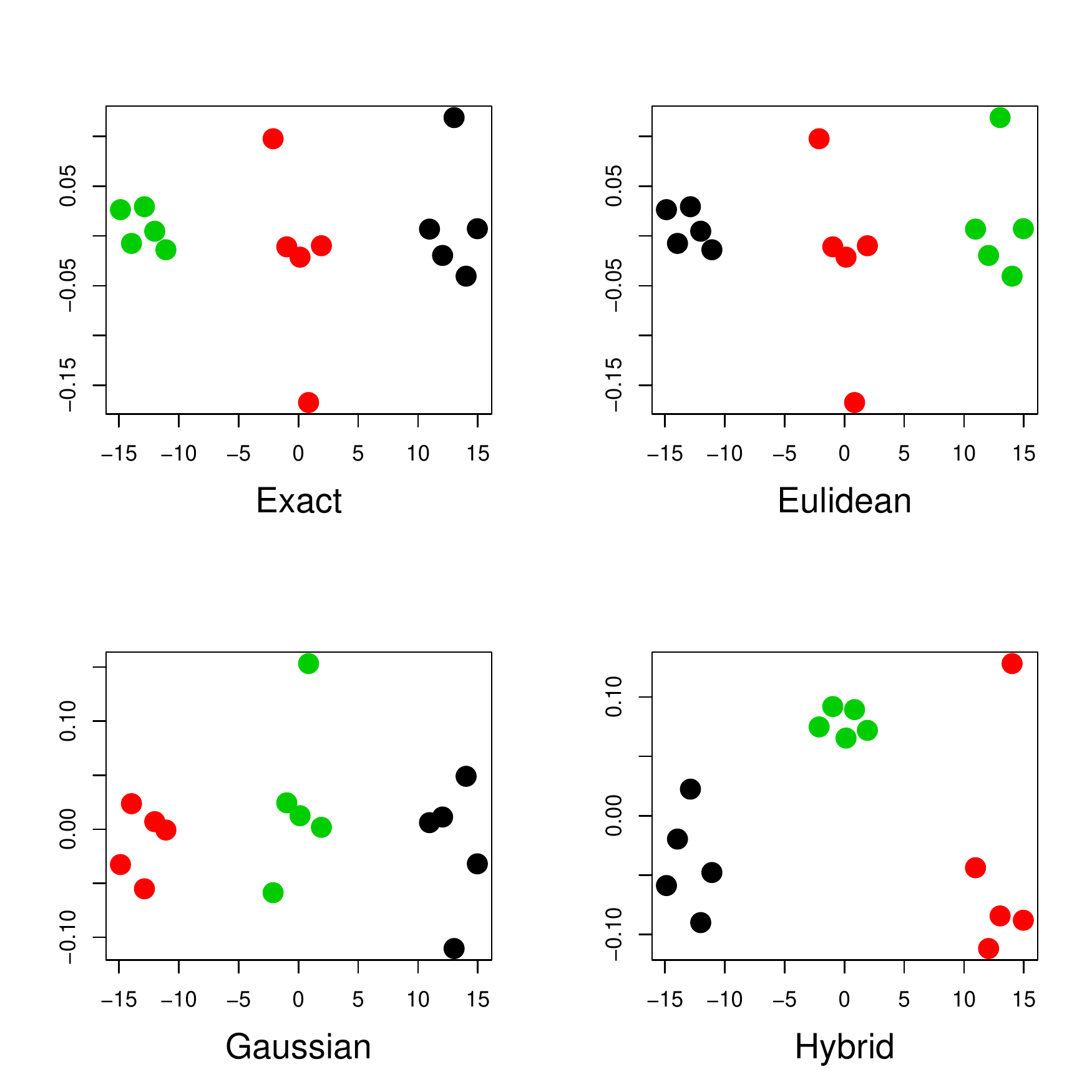}
\end{center}
\caption{\em Example 1. The plots show pairwise distances
as represented by multi-dimensional scaling. The plots show three clusters of 
five normal distribution each. All four methods are accurate.}
\label{fig::Ex1-1dClu}
\end{figure}

Our second and third examples are chosen to illustrate 
that less clean clusters of data-sets cannot be well
identified by the Gaussian approximation. Example 2 consists 
of two clusters.
The first cluster 
has twenty normal distributions
with mean $0$ and variance $1$. 
The second cluster has twenty distributions each of which
has the form $(1/2)\delta_{-1} + (1/2)\delta_{1}$.
This means that the first and second moments are also 0 and 1 so
the Gaussian approximation should be unable to find the 
two clusters.

Figure \ref{fig::Ex2Dis-1D} shows the distances computed with the 
Gaussian 
approximation and the Hybrid method, versus the Wasserstein
procedure. The $45^o$ degree line is included for illustrating that
the distances among distributions
between the Gaussian approximation and the Wasserstein method
are not close to each other, while the Hybrid and Wasserstein procedures
displays similar distances.
This plot complements the following Figure 
\ref{fig::Ex2Clu-1D}, showing the clustering
obtained in this example, and illustrates that distances from
the Gaussian approximation are a poor approxmation.
Figure \ref{fig::Ex2Clu-1D} shows indeed that, as expected,
the Gaussian approximation does 
not work well.

\begin{figure}
\begin{center}
\includegraphics[scale=.6]{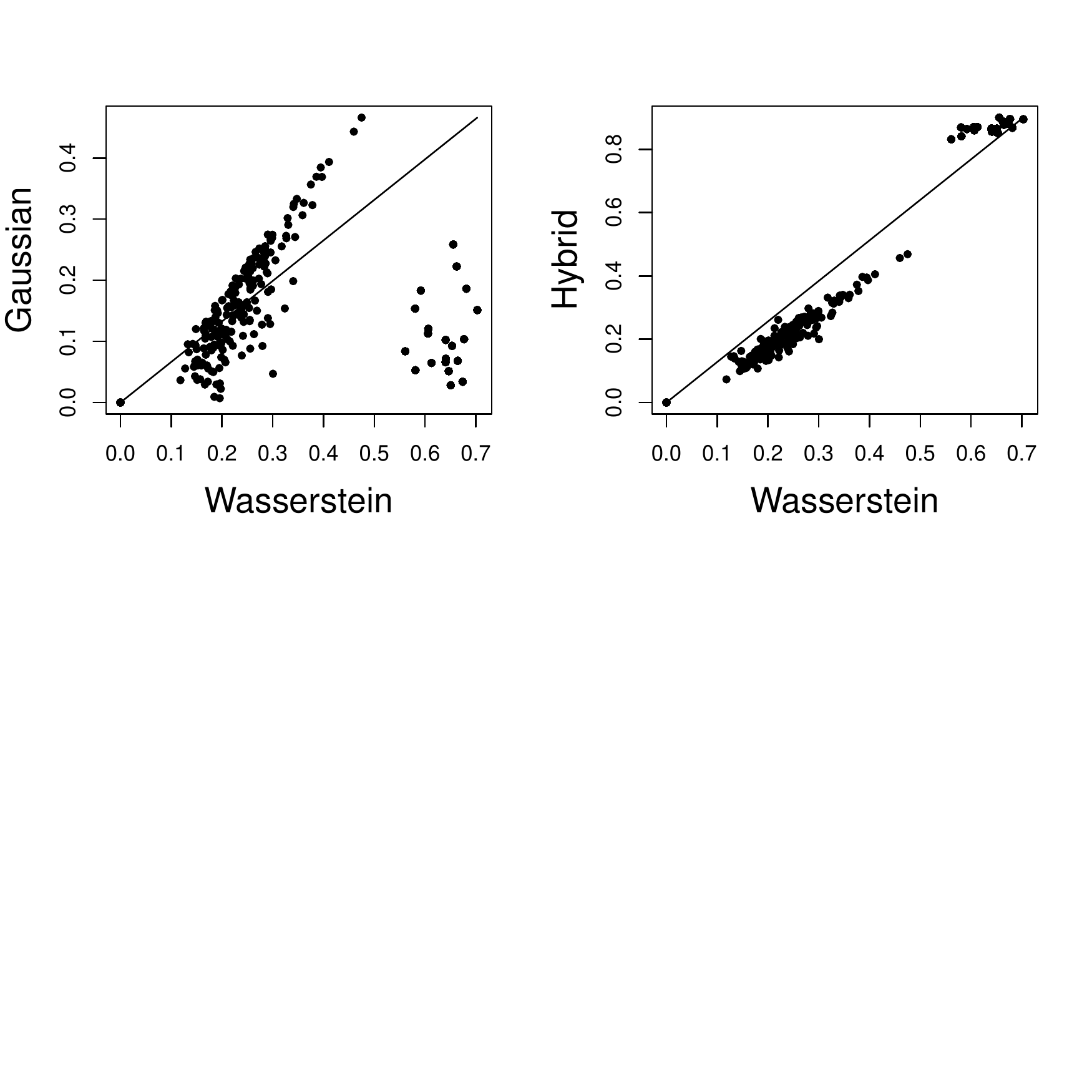}
\end{center}
\vspace{-2in}
\caption{\em Example 2. Plots of Gaussian and Hybrid 
distances of two data sets versus Wasserstein distance.
The line across the plots is the $45^o$ line. The left
hand side plot shows the distances among distriutions computed 
with the normal approximations are far from the Wasserstein distance.}
\label{fig::Ex2Dis-1D}
\end{figure}

\begin{figure}
\begin{center}
\includegraphics[scale=.6]{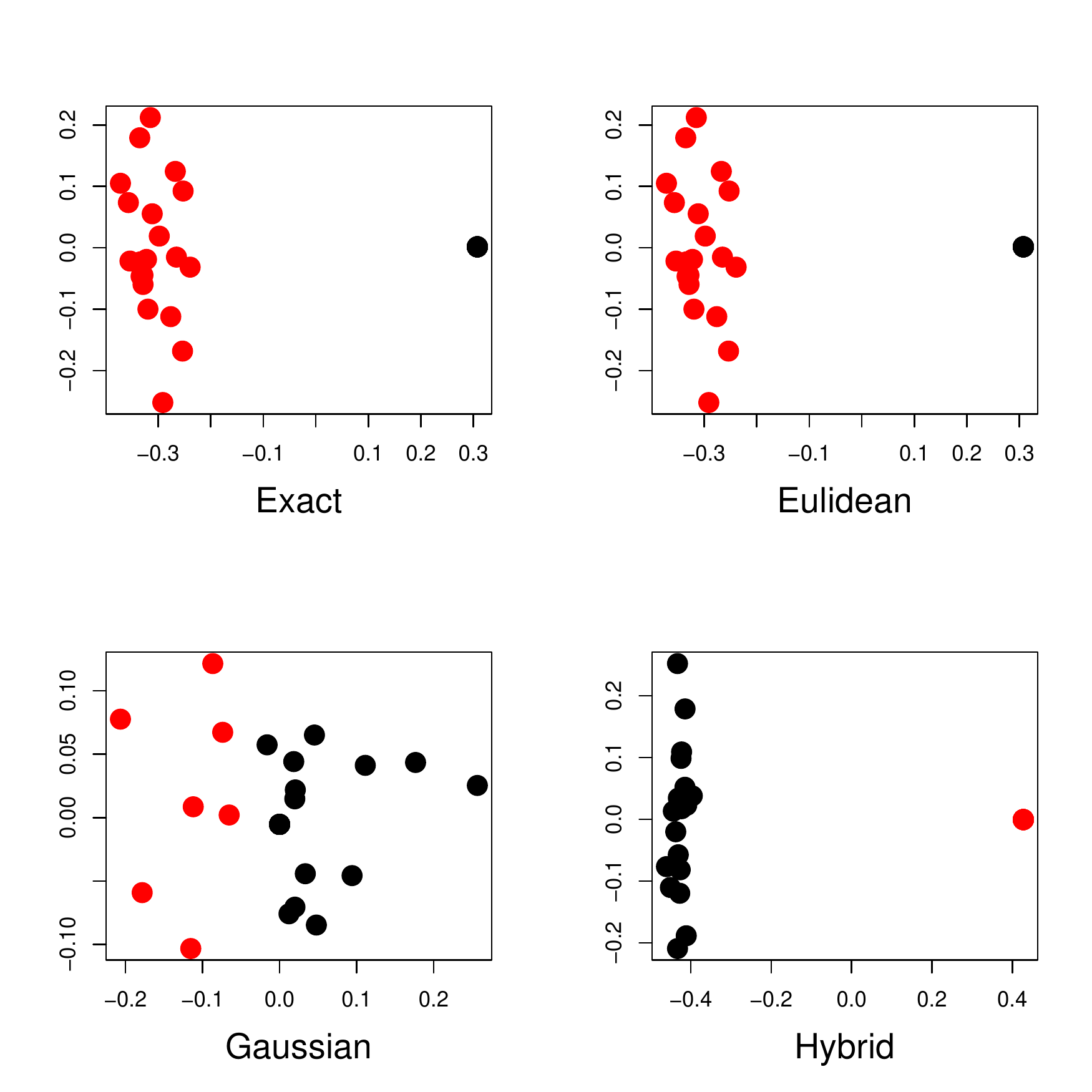}
\caption{\em Example 2. While the Gaussian
approximation does detect the two clusters, they are not well 
identified. The projection plots show random points in the 
plane. The other three methods, instead, identify the clusters 
correctly.}
\label{fig::Ex2Clu-1D}
\end{center}
\end{figure}

Our third example consists of three clusters of distributions,
each being a mixture of normal distributions. This example 
also shows that the Gaussian approximation does not identify
the clusters.
As in Example 2, we first display the plots of the 
distances, in Figure \ref{fig::Ex3Dist-1D}, and then
Figure \ref{fig::Ex3Clu-1D} with plots of  
the multidimensional 
scaling projections. This confirms the issue noted earlier
for the Gaussian approximation. Note that the clusters 
displayed in some plots are very close together and not 
easily detectable in the picture. But it is clear that
the Gaussian approximation does not fare well.

\begin{figure}
\begin{center}
\includegraphics[scale=.6]{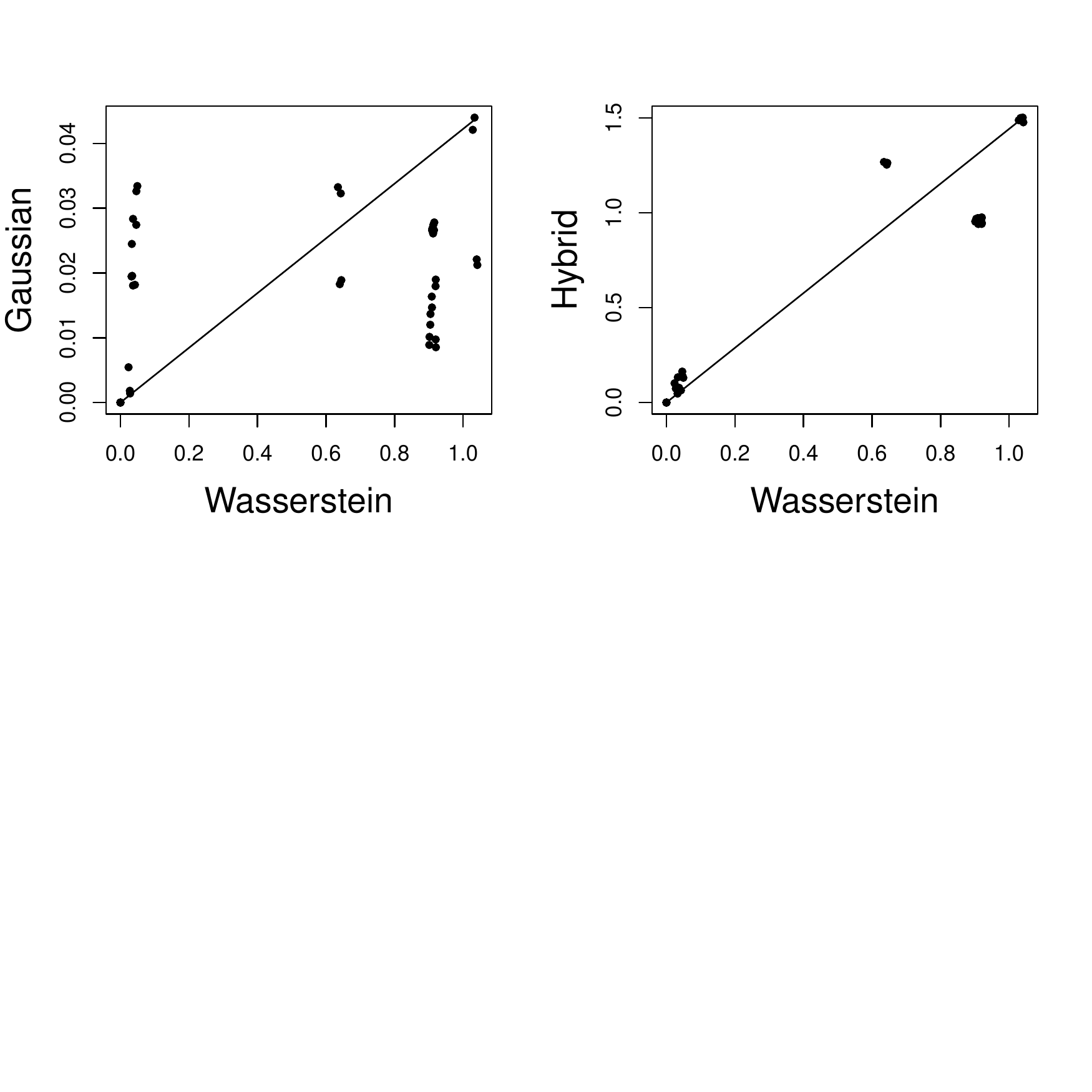}
\end{center}
\vspace{-2.2in}
\caption{\em Example 3. Plots of Gaussian and Hybrid 
distances of three data sets versus Wasserstain distance.
The left hand side plot also shows distances among distriutions 
computed by the Gaussian approximations are far from the 
Wasserstein distance.}
\label{fig::Ex3Dist-1D}
\end{figure}

\begin{figure}
\vspace{-.2in}
\begin{center}
\includegraphics[scale=.6]{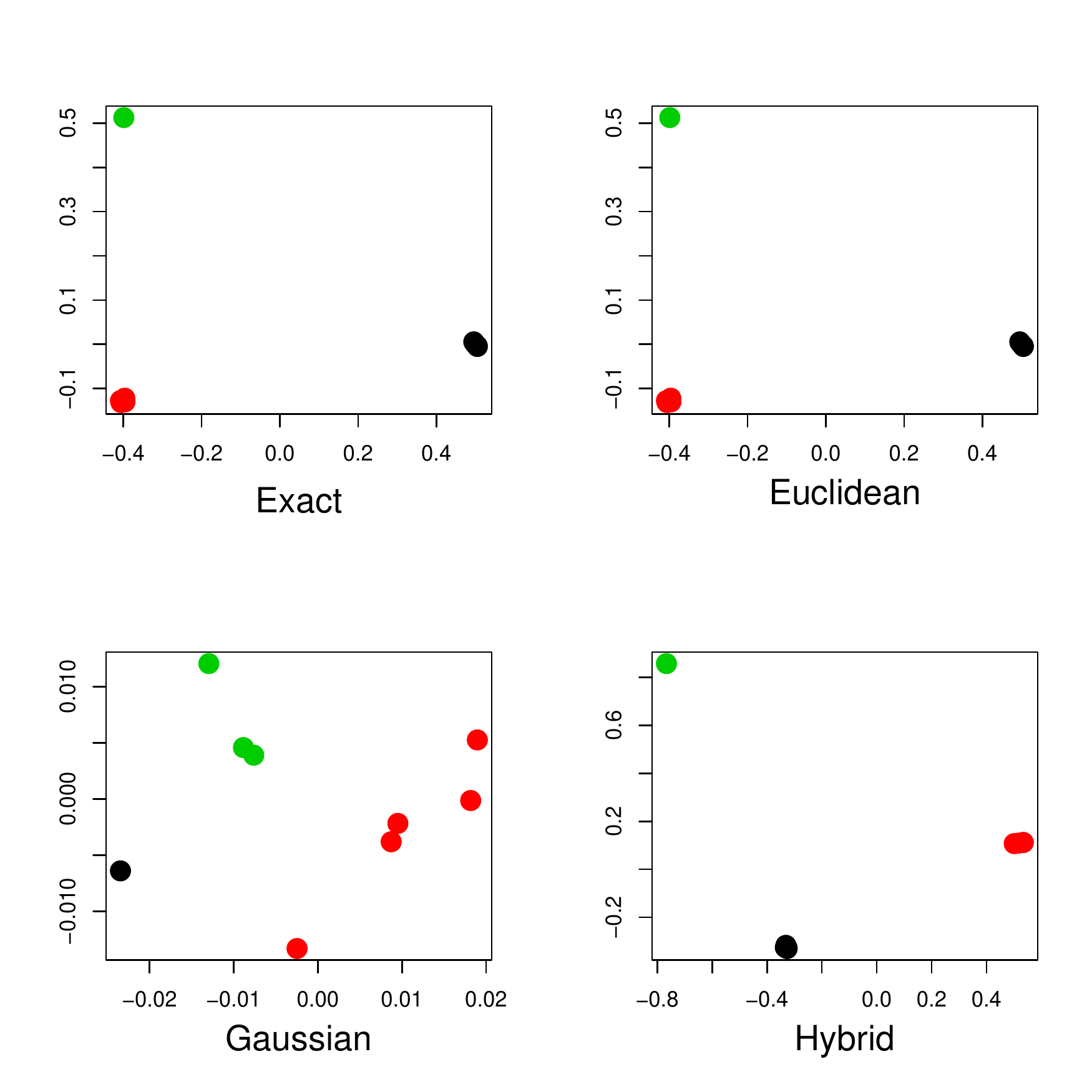}
\end{center}
\vspace{-.2cm}
\caption{\em Example 3. Clusters 
for mixtures of Normal distributions. The Gaussian approximation 
does not performs well.}
\label{fig::Ex3Clu-1D}
\end{figure}

\subsection{Multivariate Examples}
\label{section::MvarExamples}

In this section we consider three artificial data-sets in 
two dimensions. We no longer consider the exact Wasserstein method which
is computationally prohibitive.
We obtain, instead, clustering 
from the other three methods of Section \ref{section::kmeans}. 
For each example, we will plot the parwise distances
between datasets using multi-dimensional scaling.

The first example consists of $40$ bivariate Normal distributions,
with $n=100$ observations each. Twenty of them have mean $0$ and 
variance $1$, the other twenty have meann $5$ and variance $1$. 
We also include twenty bivariate uniform distributions each with 
$100$ data points, for a total of three distributional clusters.

Figure \ref{fig::Ex1-Boxp-2D} displays the boxplots of the first 
coordinates of the three data sets. The clusters appear to be well 
separated as expected. All 
three methods performs well, as it can be seen from the clusters
obtained from the Euclidean, Normal, and Hybrid methods in Figure 
\ref{fig::Ex1-Clust-2D}.

\begin{figure}
\begin{center}
\includegraphics[scale=.4]{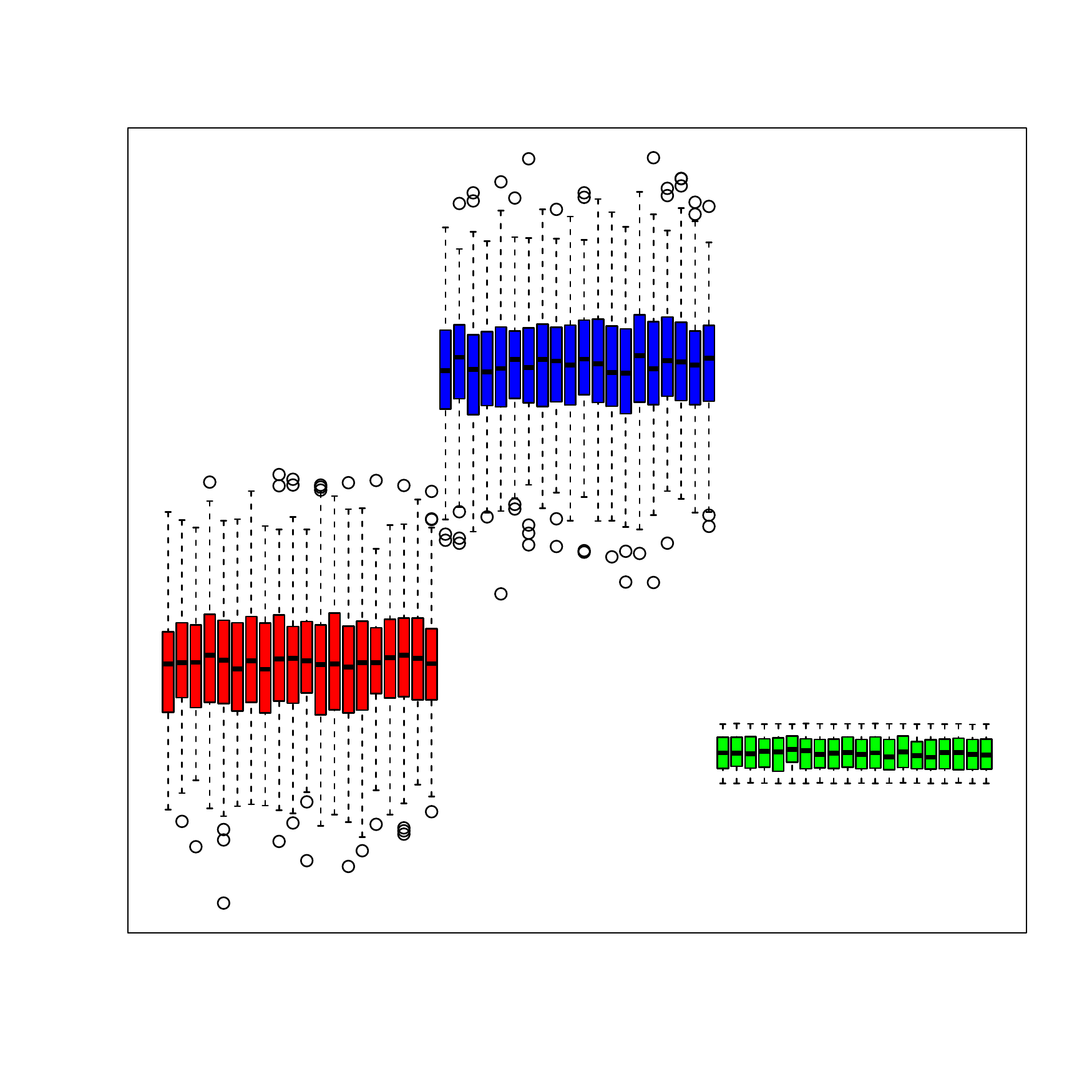}
\end{center}
\vspace{-.2in}
\caption{\em Boxplots of first coordinates in the bivariate 
example 1.}
\label{fig::Ex1-Boxp-2D}
\end{figure}

\begin{figure}
\begin{center}
\begin{tabular}{ccc}
\includegraphics[scale=.32]{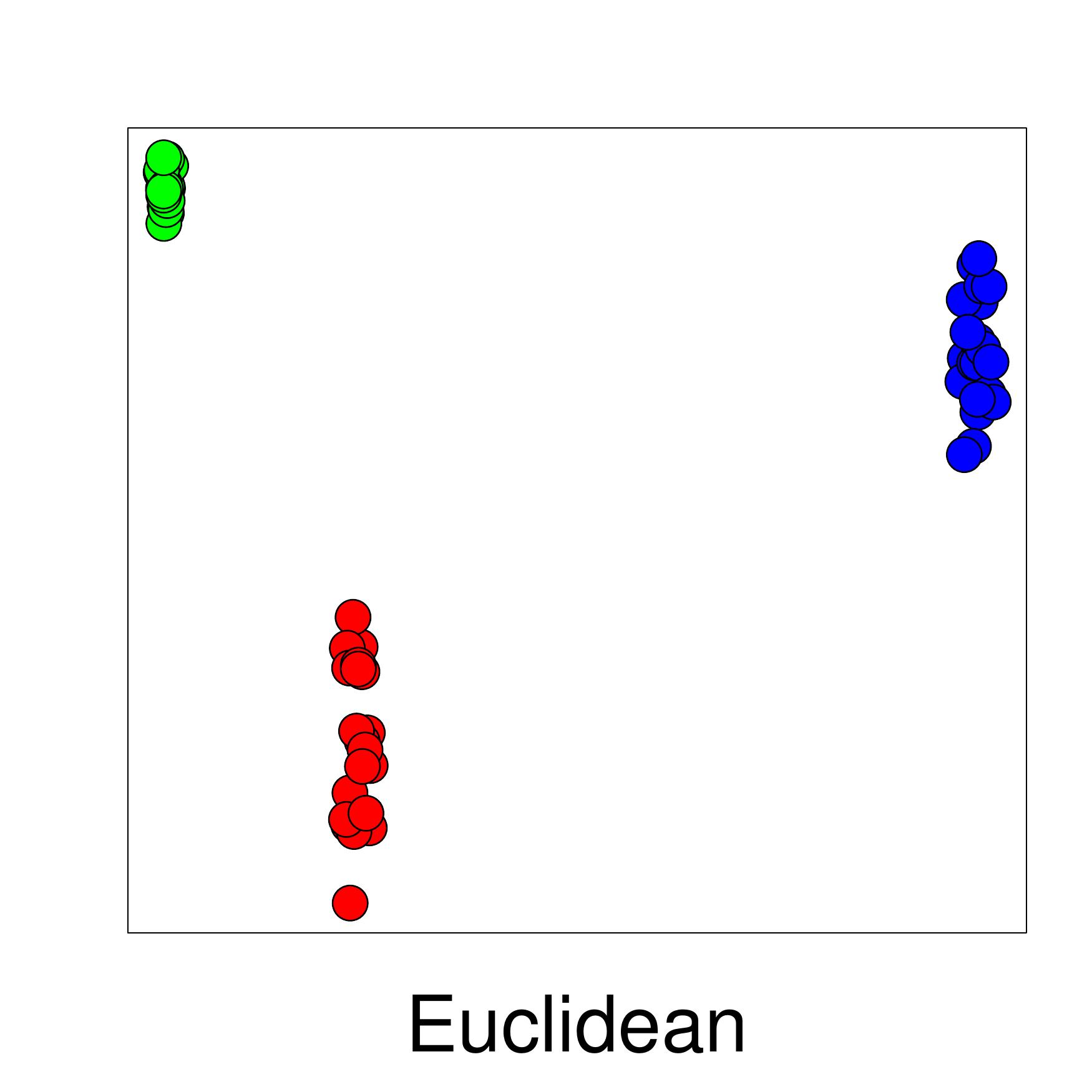}
\includegraphics[scale=.32]{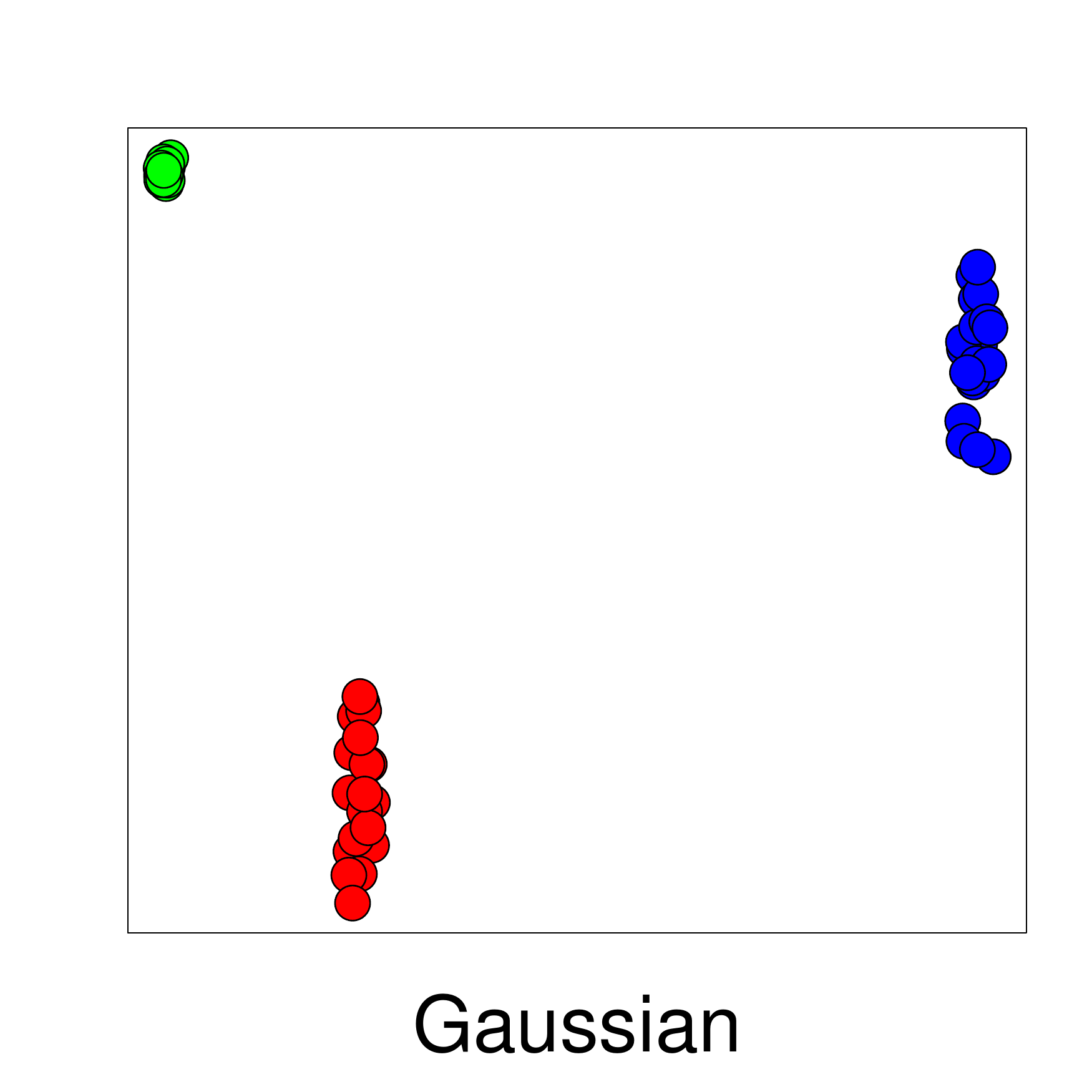}
\includegraphics[scale=.32]{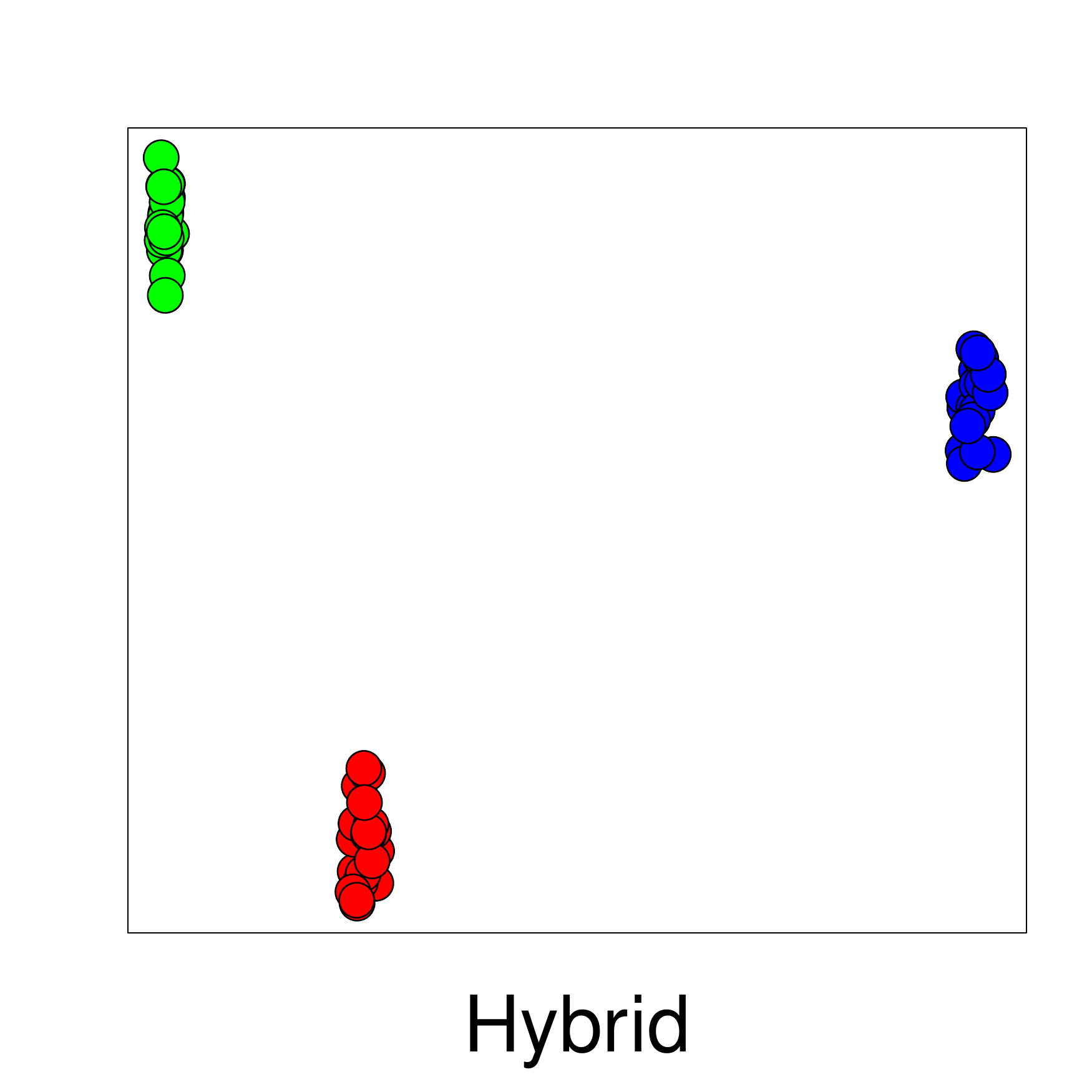}
\end{tabular}
\end{center}
\caption{\em Bivariate Example 1. Clusters obtained 
using the  procedures Gaussian and  Hybrid.}
\label{fig::Ex1-Clust-2D}
\end{figure}

The second example consists of four groups of distributions 
which are uniform on circles, each 
with $n=100$ data points. The circles' centers and
radii are randomly selected. More specifically the centers 
are generated from uniform distributions, with various ranges,
as are their radii. Figure \ref{fig::Ex2-Circles-2D} displays
the data that form four well separated clusters. 
Figure \ref{fig::Ex2-Clust-2D} shows the results for the 
three methods of clustering. All of them identify the four clusters 
correctly.

\begin{figure}
\begin{center}
\includegraphics[scale=.5]{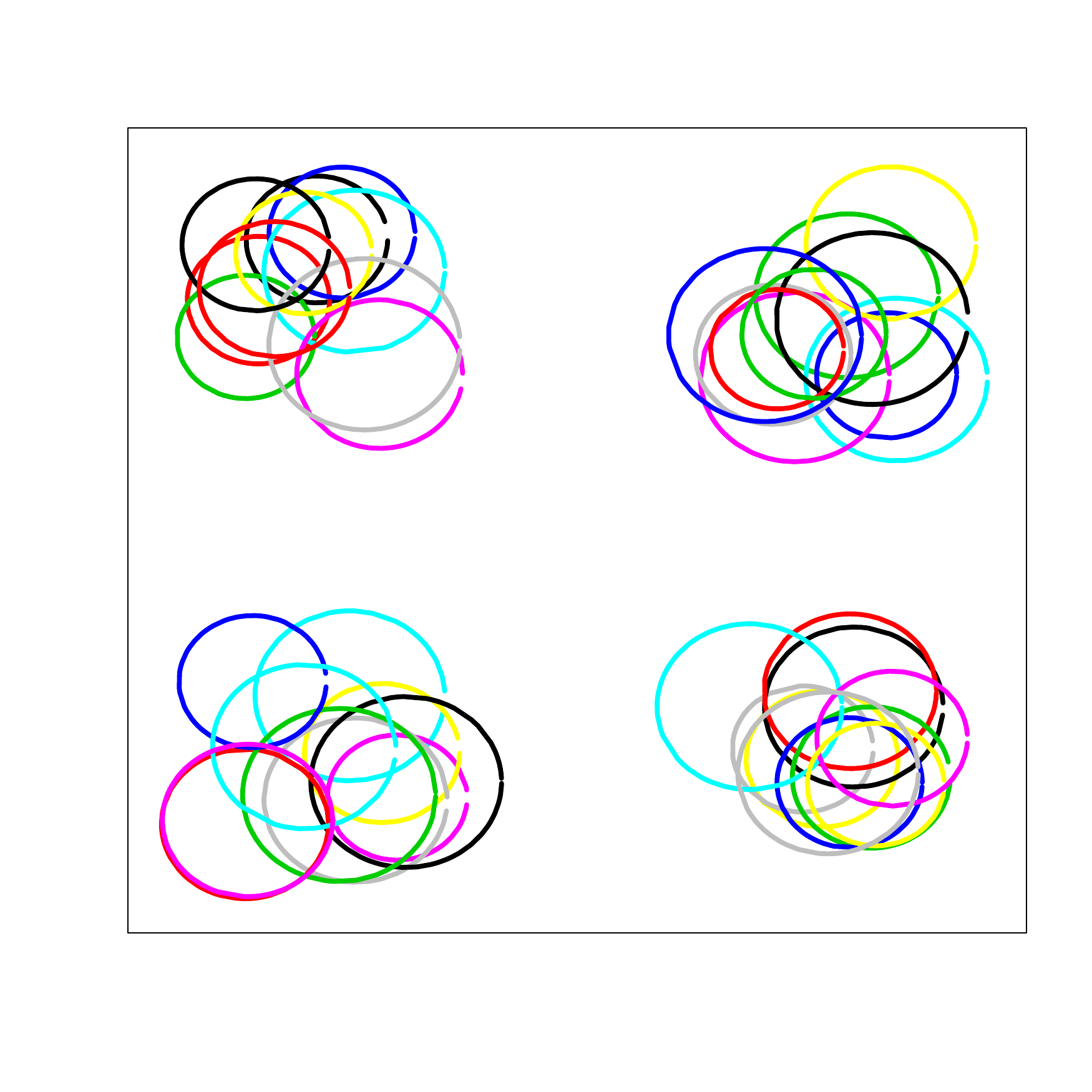}
\end{center}
\vspace{-.3in}
\caption{\em Bivariate Example 2. The datasets.}
\label{fig::Ex2-Circles-2D}
\end{figure}

\begin{figure}
\begin{center}
\begin{tabular}{ccc}
\includegraphics[scale=.32]{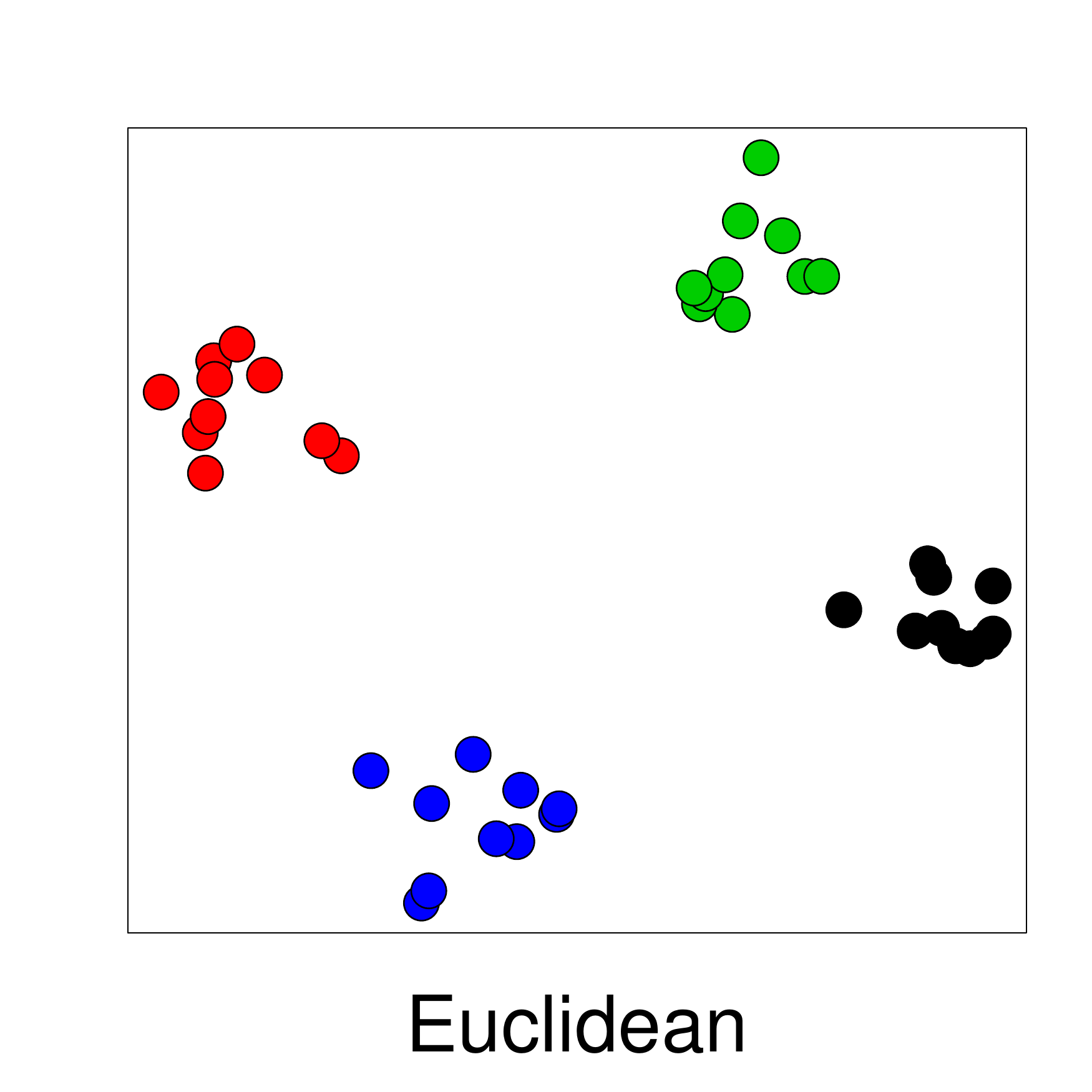}
\includegraphics[scale=.32]{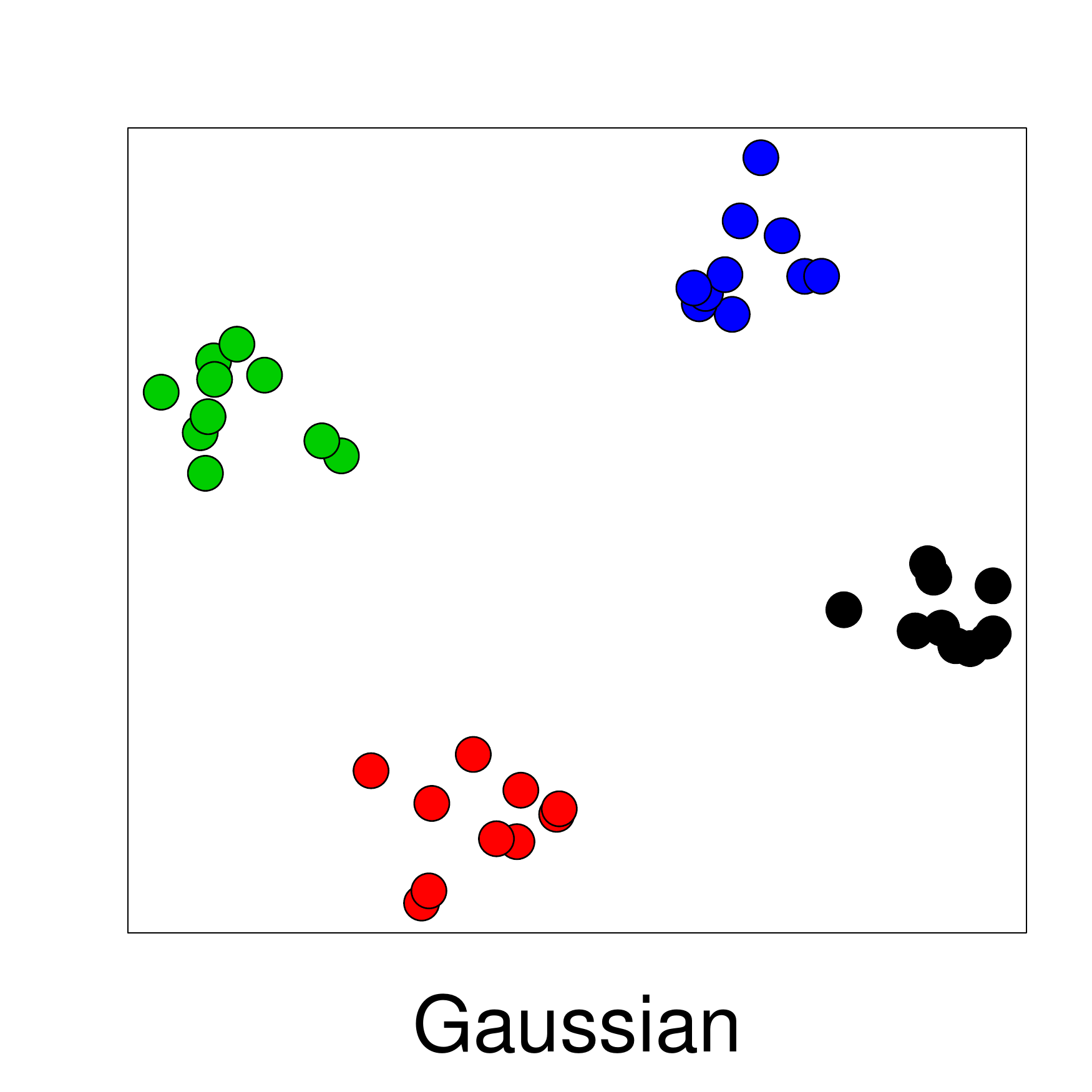}
\includegraphics[scale=.32]{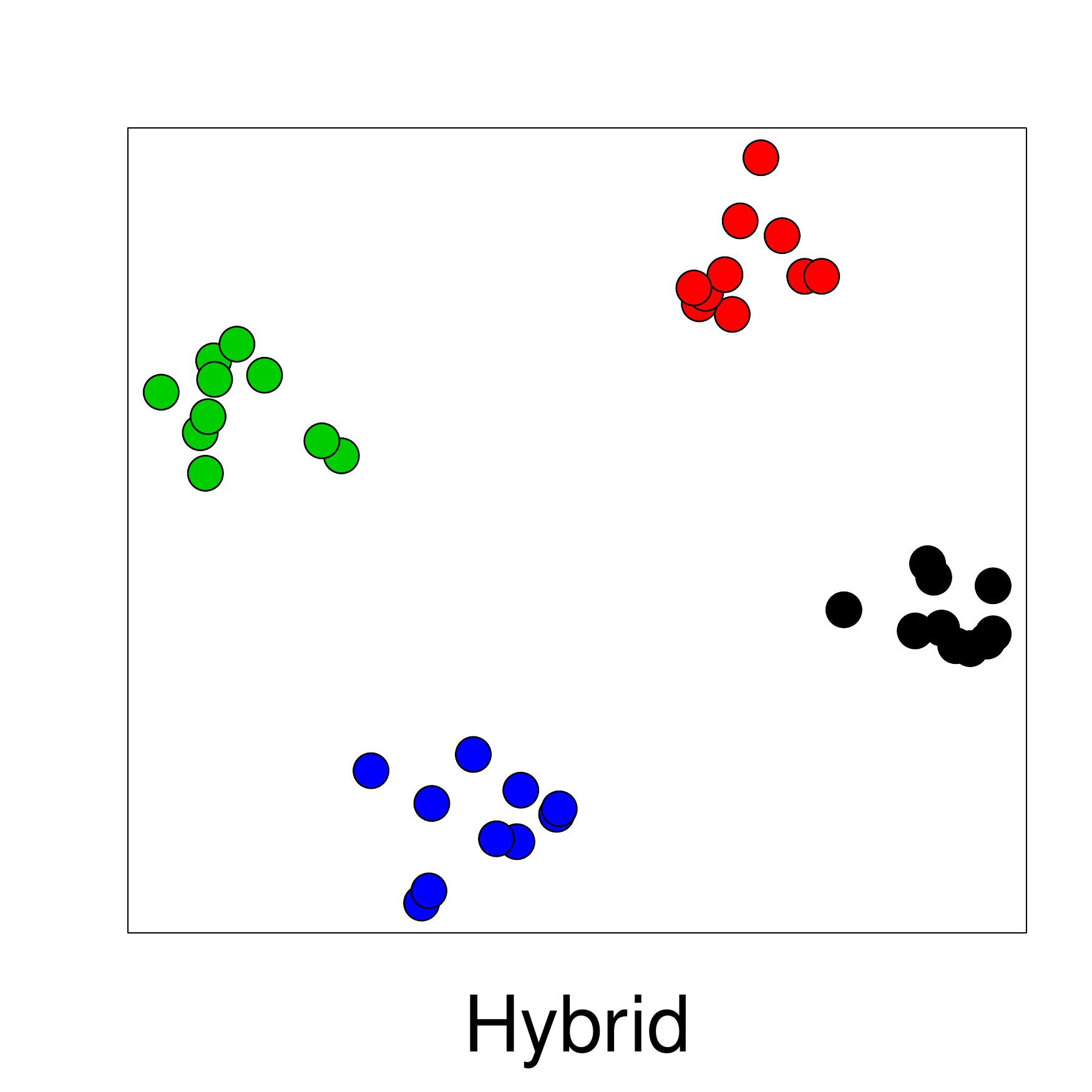}
\end{tabular}
\end{center}
\vspace{-.2in}
\caption{\em Bivariate Example 2. Clusters.}
\label{fig::Ex2-Clust-2D}
\end{figure}

Our third example consists of two groups of distributions. One group of 50
distributions, withg 100 observations each, are uniformely distributed on a 
cirle and the 50 distributions in the other group are normal with mean 
$0$ and variance $1$. Figure \ref{fig::Ex3-Boxplot-2D} shows the boxplot 
of their first coordinates. 
Figure \ref{fig::Ex3-Clust-2D} presents the results for the 
three clustering methods. As expected the Gaussian procedure 
performs quite poorly.

\begin{figure}
\begin{center}
\includegraphics[width=5in,height=3in]{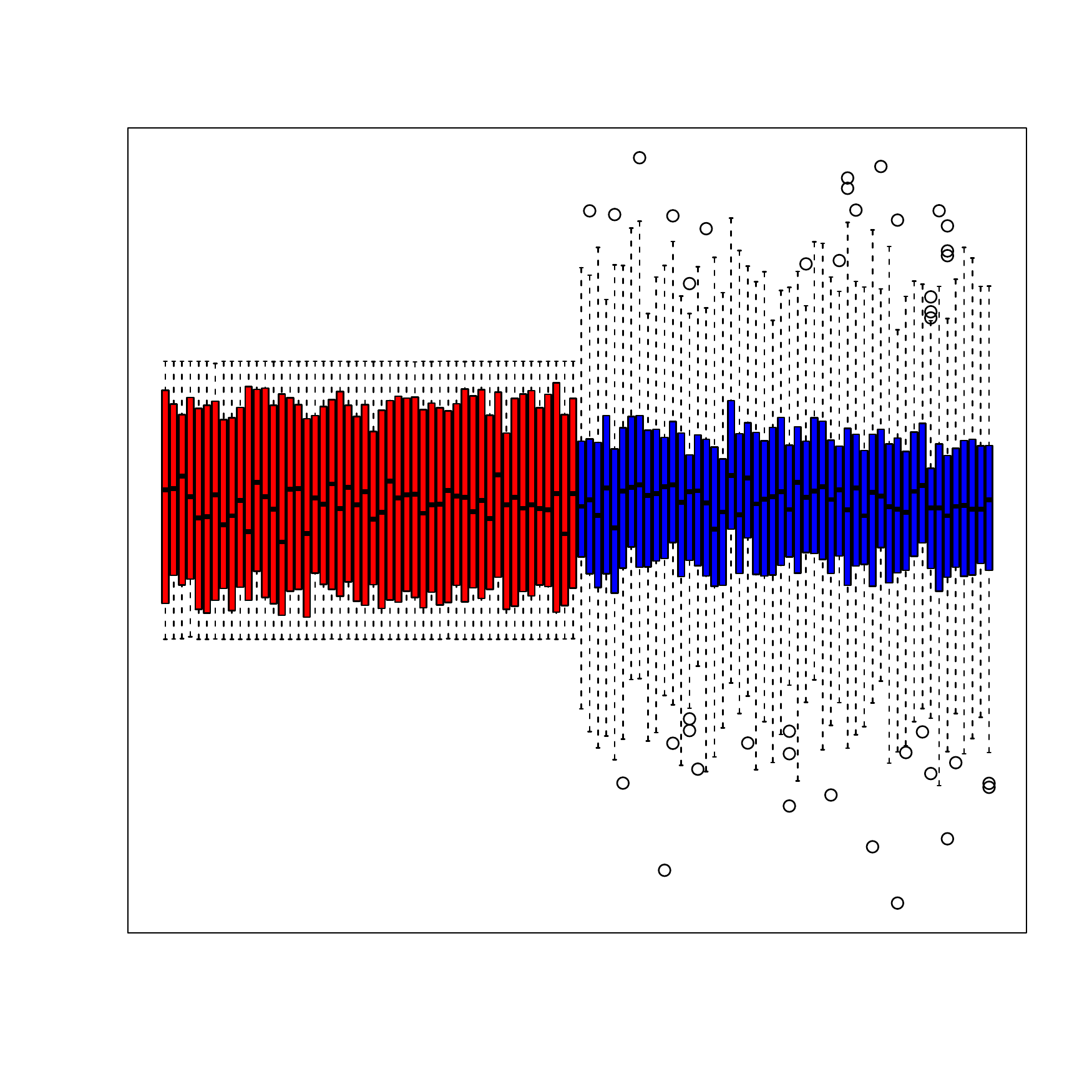}
\end{center}
\vspace{-.5cm}
\caption{\em Bivariate Example 3. Boxplots of first coordinates.}
\label{fig::Ex3-Boxplot-2D}
\end{figure}

\begin{figure}
\begin{center}
\begin{tabular}{ccc}
\includegraphics[scale=.3]{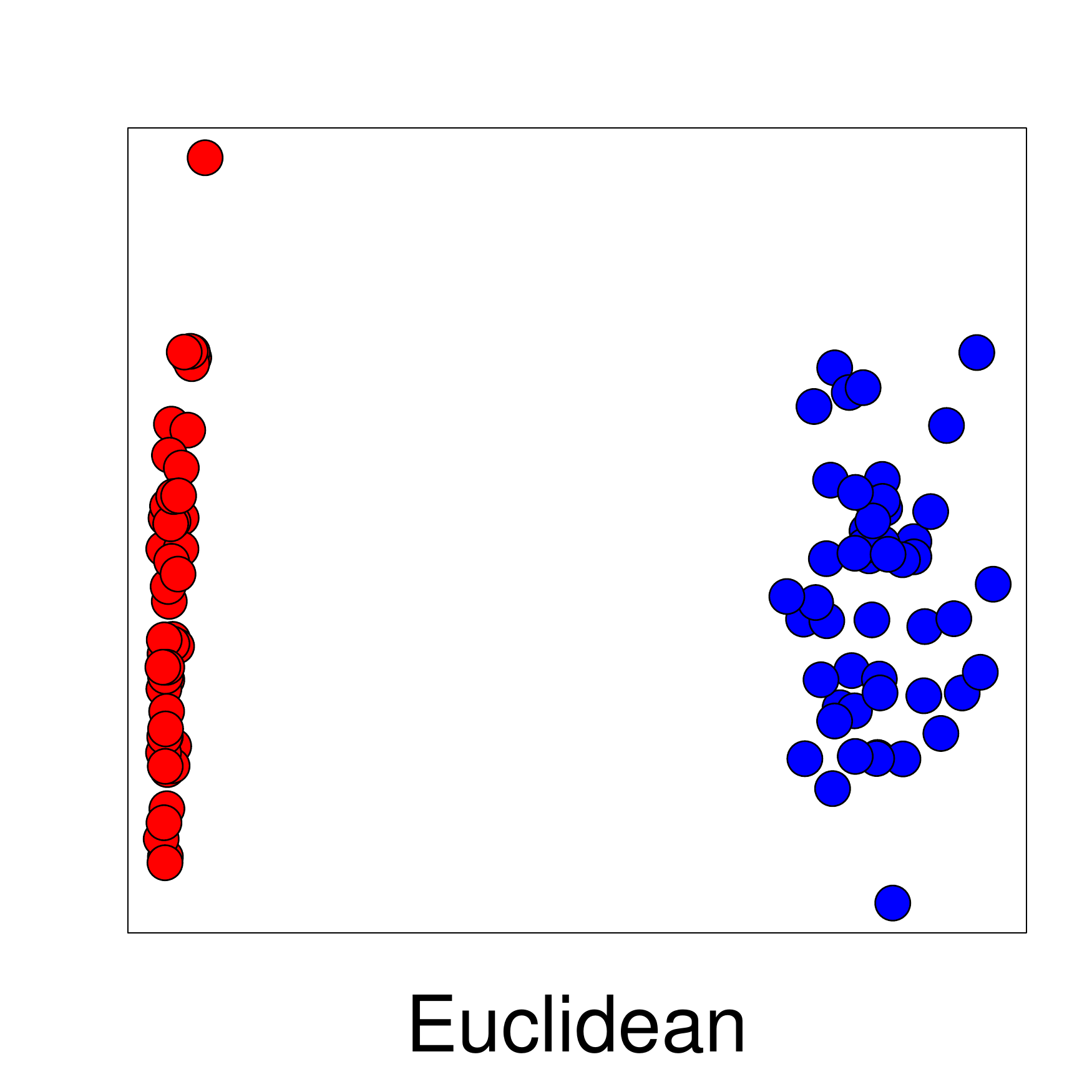}
\includegraphics[scale=.3]{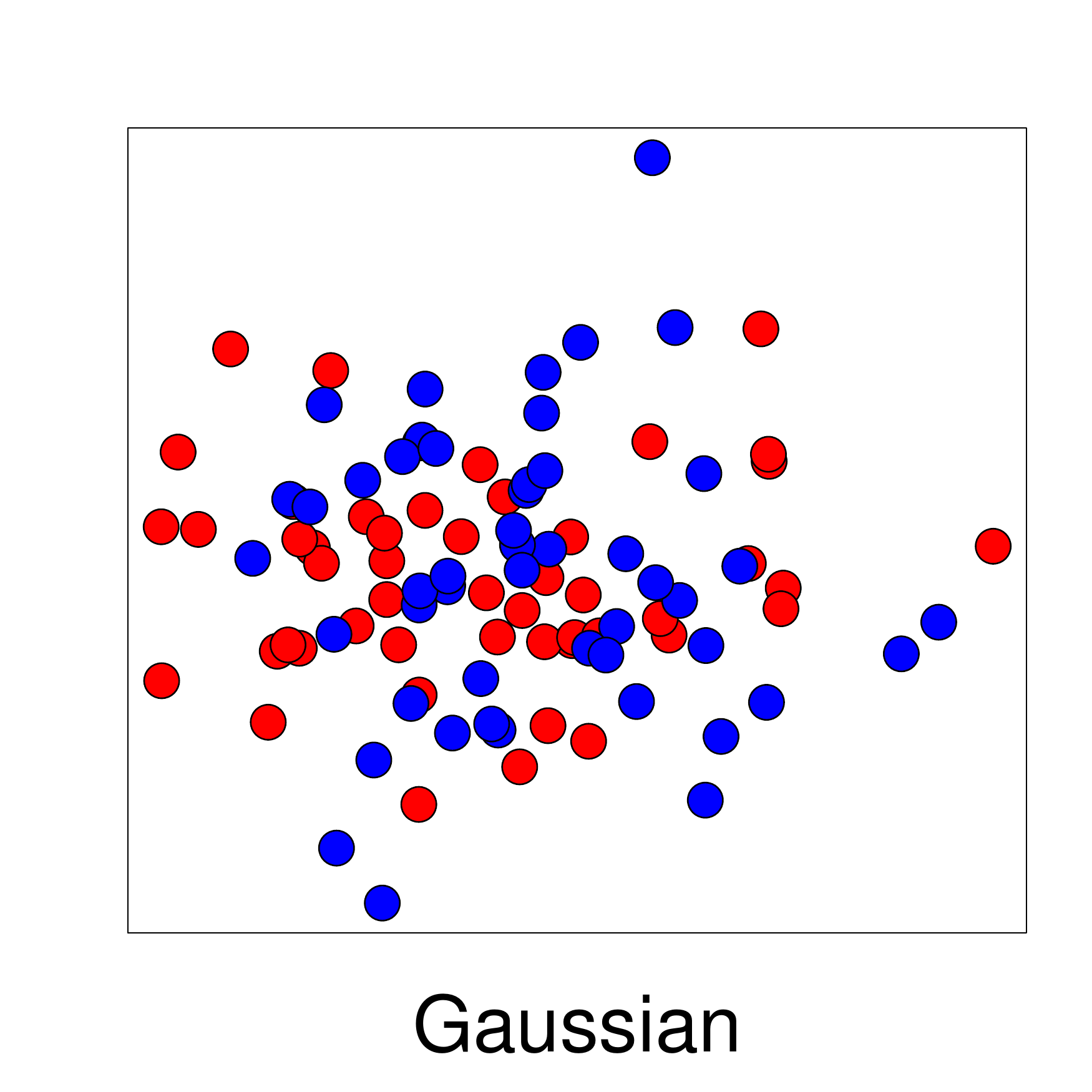}
\includegraphics[scale=.3]{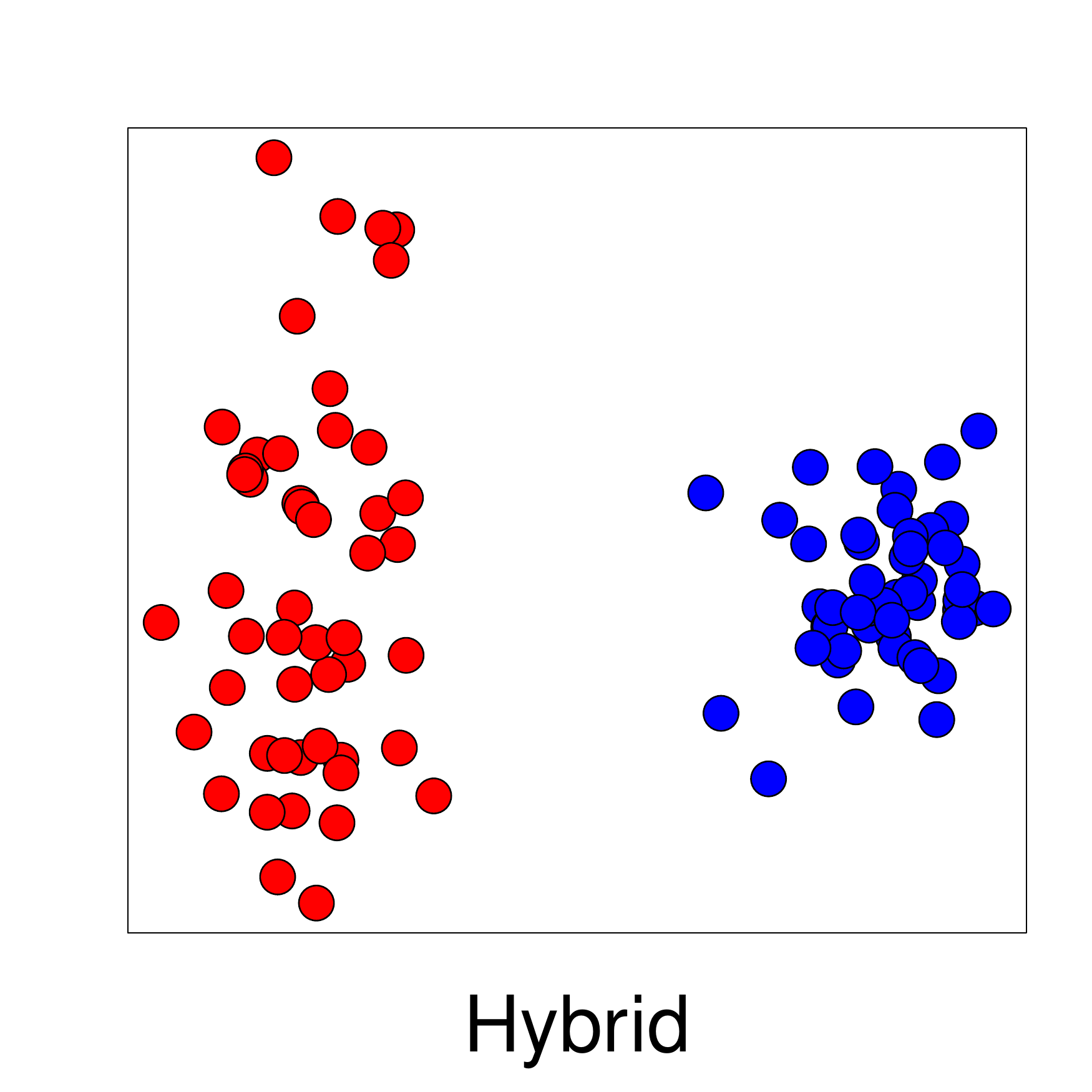}
\end{tabular}
\end{center}
\caption{\em Bivariate Example 3. Clusters.}
\label{fig::Ex3-Clust-2D}
\end{figure}

\subsection{Multivariate Example: Cytometric Data}

We now apply the Gaussian and Hybrid methods for clustering distributions, 
presented in Sections \ref{section::modified} and \ref{section::kmeans}
to a collection of data sets from cytometric genetic research by 
\cite{maier2007allelic}.
The authors obtained fluorescence intensity measures of
fluorophore-conjugated reagents on whole blood, stained with an antibody
marker. The data record the luminosity of four proteins linked to 
the T-cells, that are part of the adaptive immune system. Luminosity was 
measured on the proteins SLP76, ZAP70, CD4, and CD45RA before and after 
stimulation with the antibody anti-CD3. Two sets of blood samples were 
collected. One sample consists in 13 four-dimension data-sets, stained 
prior to anti-CD3 stimulation. The second sample, of 30 more data-sets, 
stained five minutes after stimulation, for a total of 43 data-sets in 
four dimension. Figure \ref{fig::CytoData} shows pairwise scatterplots 
for the first dataset.

\begin{figure}
\begin{center}
\includegraphics[width=4.in,height=4.in]{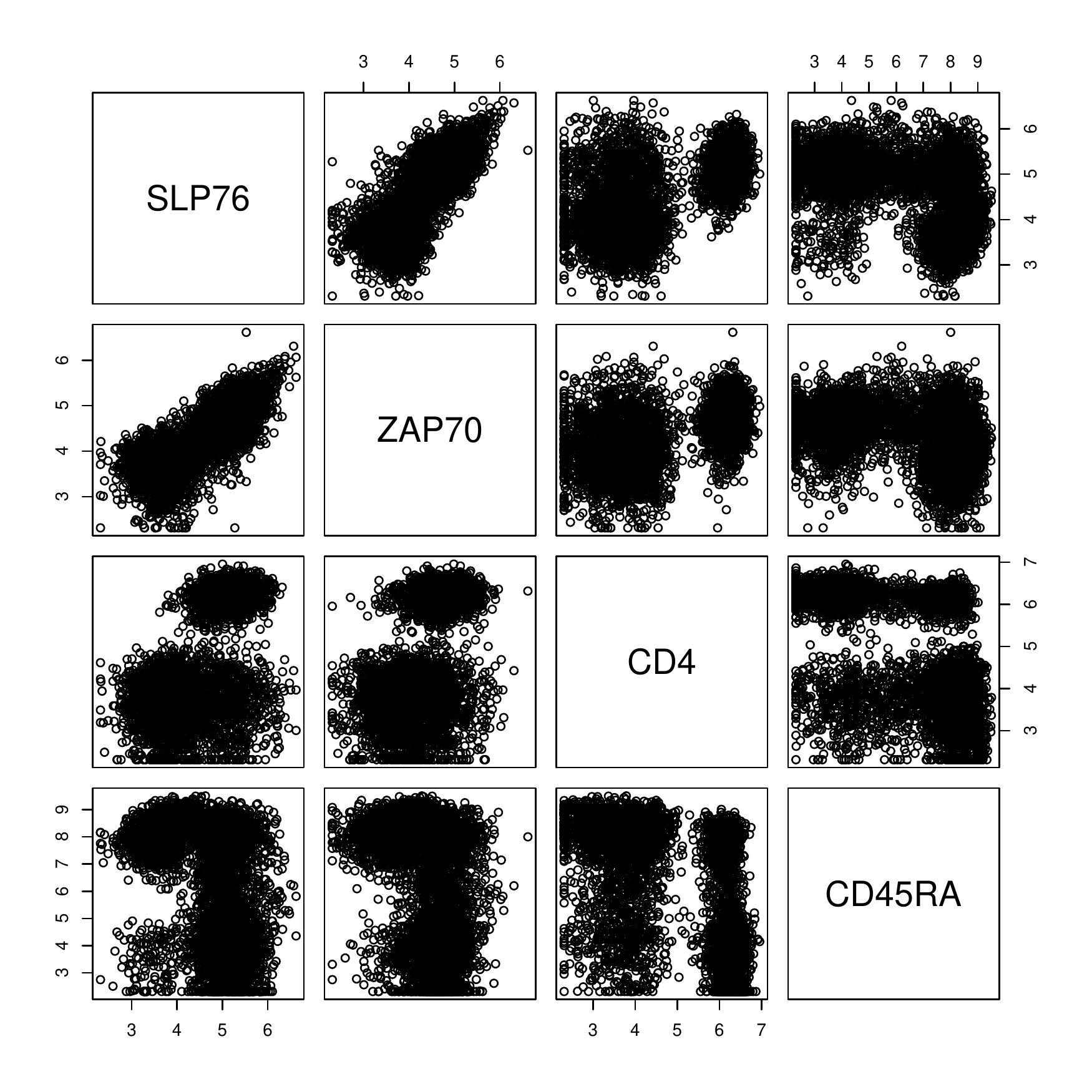}
\end{center}
\vspace{-.5cm}
\begin{center}
\caption{\em Cytometric Data. Scatterplots of joint luminosity 
recorded for the first distribution of 13 blood samples stained prior to 
the anti-CD3 stimulation.}
\end{center}
\label{fig::CytoData}
\end{figure}

In this example, for reasons explained in Section 
\ref{section::elbows}, we searched for six clusters.
Figure \ref{fig::CytoBoxplot1} and Figure \ref{fig::CytoBoxplot2} 
show the boxplots of the luminosity values recorded of the proteins,
SLP76, and ZAP70 in the 43 data-sets.

\begin{figure}
\begin{center}
\includegraphics[width=6.in,height=2.4in]{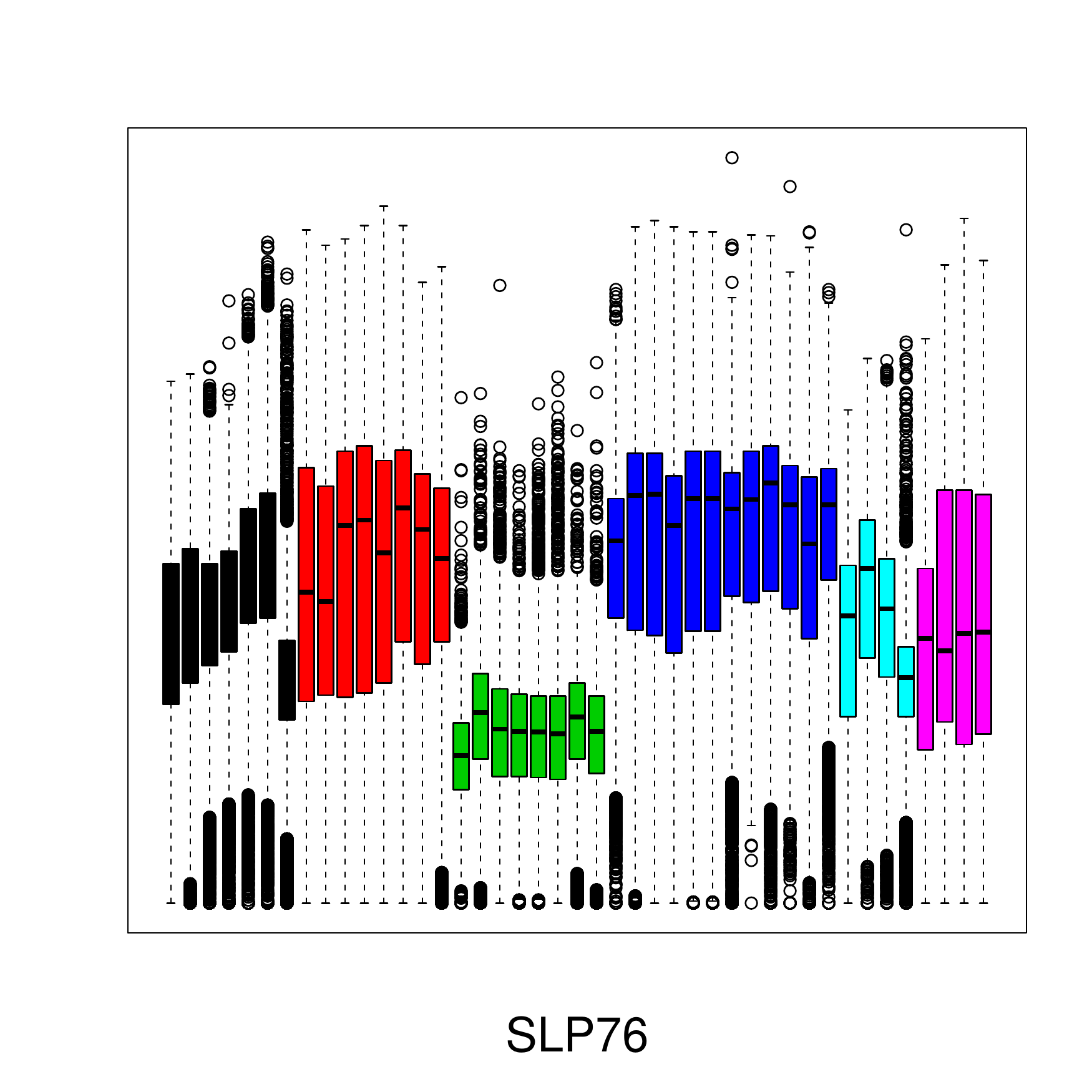}
\end{center}
\vspace{-.2in}
\caption{\em Cytometric Data. Boxplots of SLP76.}
\label{fig::CytoBoxplot1}
\end{figure}

\begin{figure}
\begin{center}
\includegraphics[width=6in,height=2.4in]{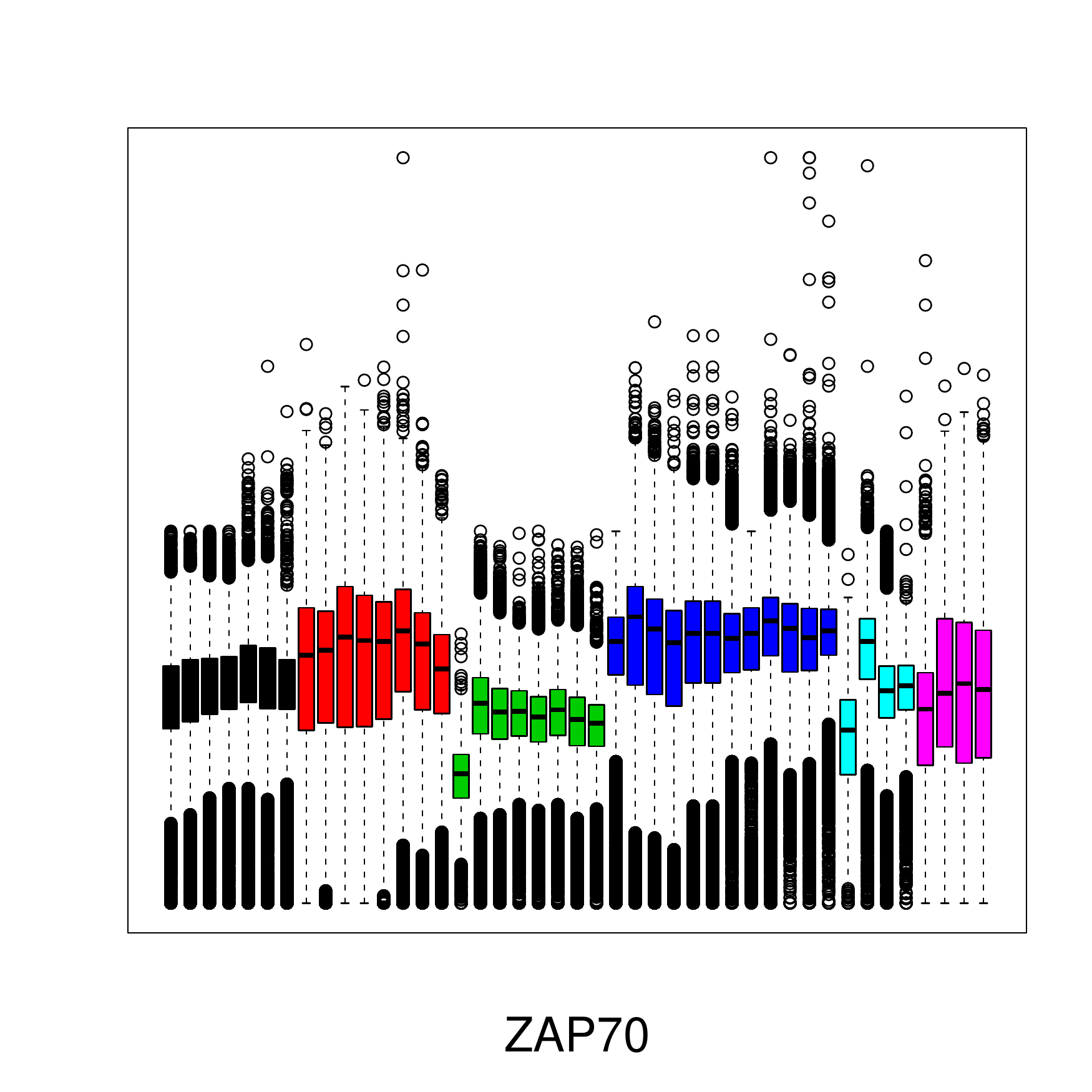}
\end{center}
\vspace{-.1in}
\caption{\em Cytometric Data. Boxplots of ZAP70.}
\label{fig::CytoBoxplot2}
\end{figure}

The boxplots are color coded by our clustering. Note that
the boxplots of the SLP6 and ZAP70 proteins correspond nicely 
to the six clusters. 
The third and fourth proteins , CD4, and CD45RA in Figure 
\ref{fig::CytoBoxplot2}, instead do not seem to be well clustered. 
We conclude that the clustering information is deriving mainly 
from  SLP6 and ZAP70.

\begin{figure}
\vspace{-.2in}
\begin{center}
\includegraphics[width=6in,height=2.4in]{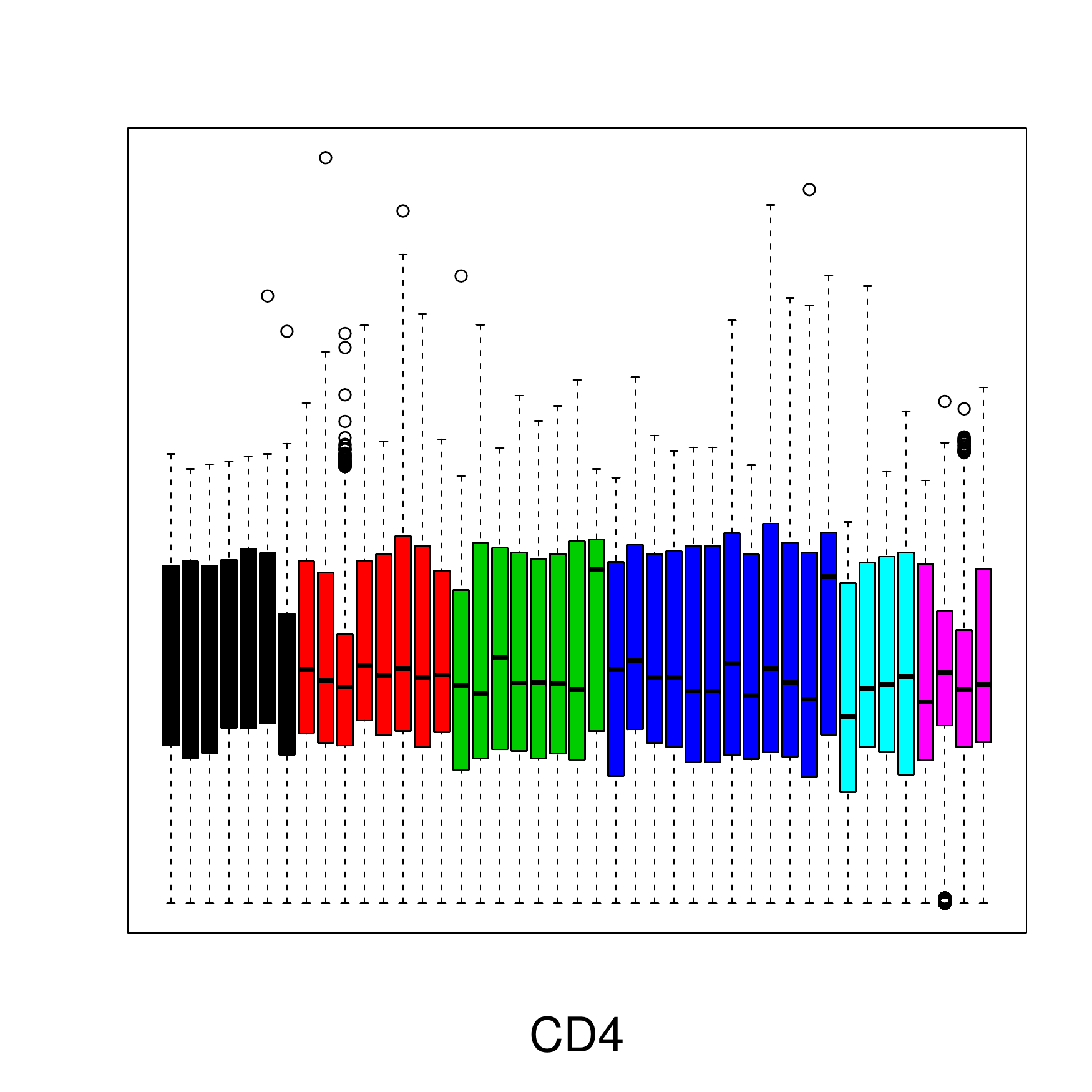}
\includegraphics[width=6in,height=2.4in]{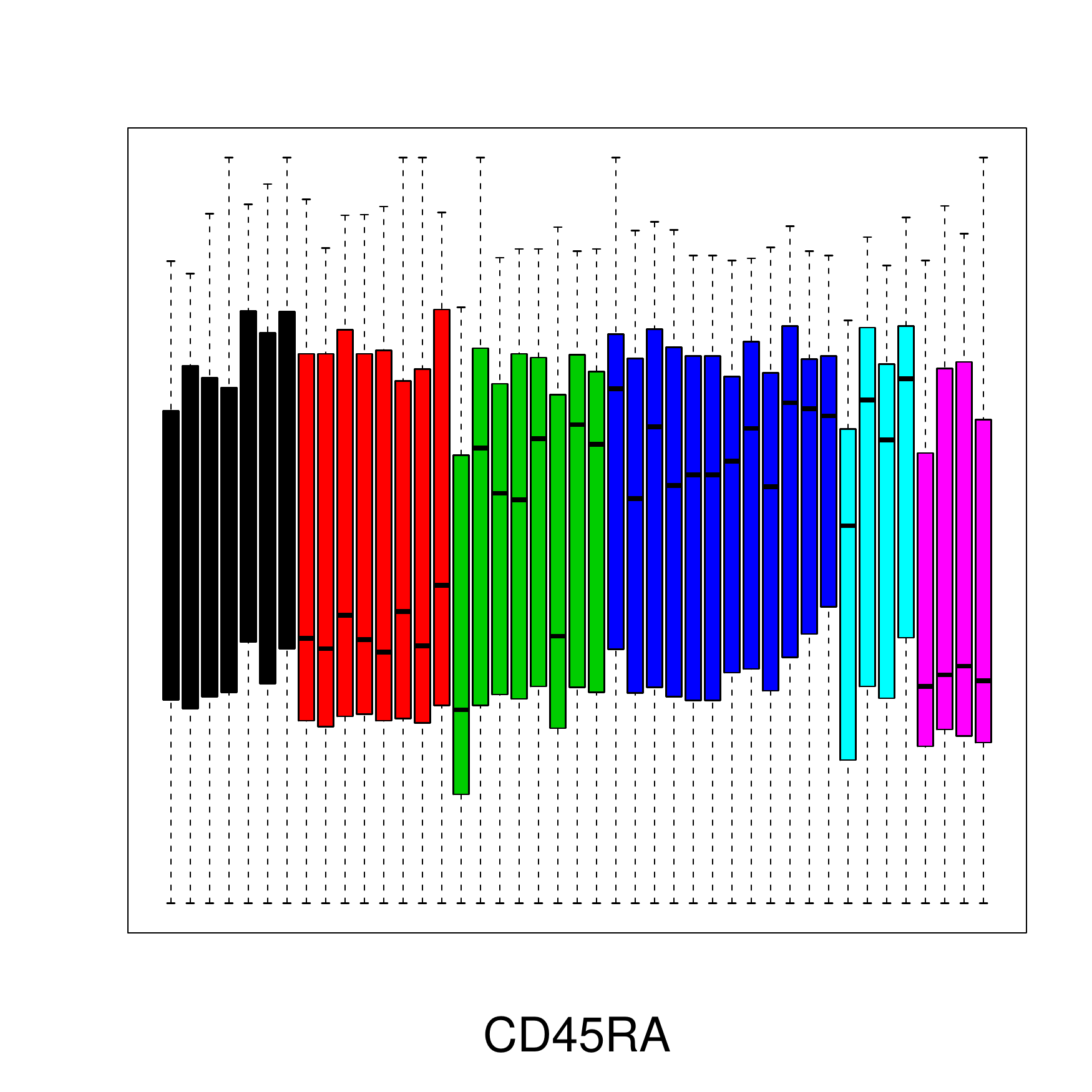}
\vspace{-.1in}
\caption{\em  Cytometric Data. Boxplots of CD4 and CD45A.}
\label{fig::CytoBoxplot3}
\end{center}
\end{figure}

The following Figure \ref{fig::CytoCluster} shows the six clusters
obtained with the Gaussian and the Hybrid procedures. Although the clusters 
appear to be not so well separated, we recall that multidimensional 
scaling projections from several dimensions to two might not be very
accurate. Our choice, in searching for six clusters, has been suggested 
by the plots in Figure \ref{fig::cytometric-elbow} of Section 
\ref{section::elbows}, where the issue of choosing the number of 
clusters is examined, and reinforced by the boxplots in Figure
\ref{fig::CytoBoxplot1} and Figure \ref{fig::CytoBoxplot2}

\begin{figure}
\begin{center}
\begin{tabular}{cc}
\includegraphics[scale=.3]{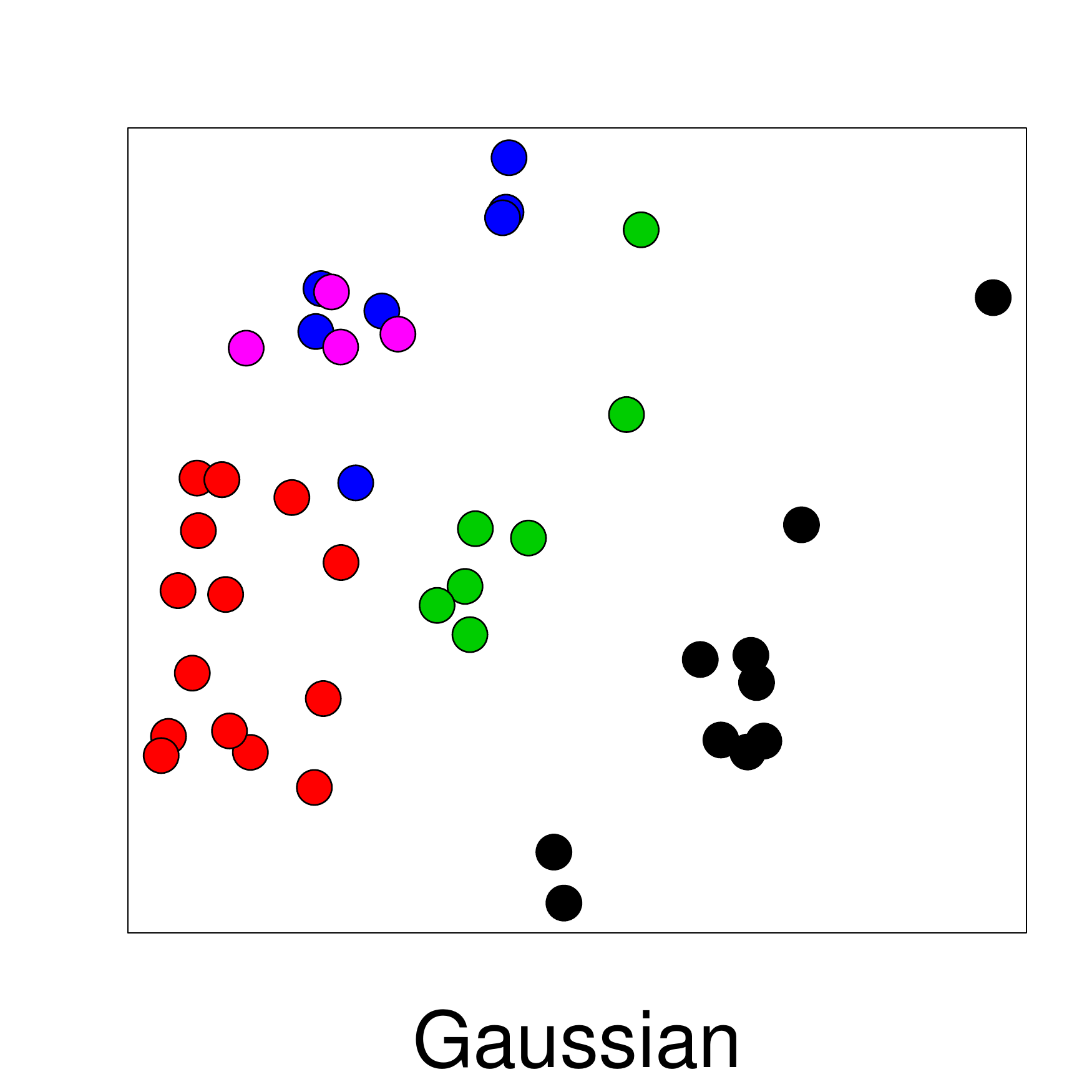}
\includegraphics[scale=.3]{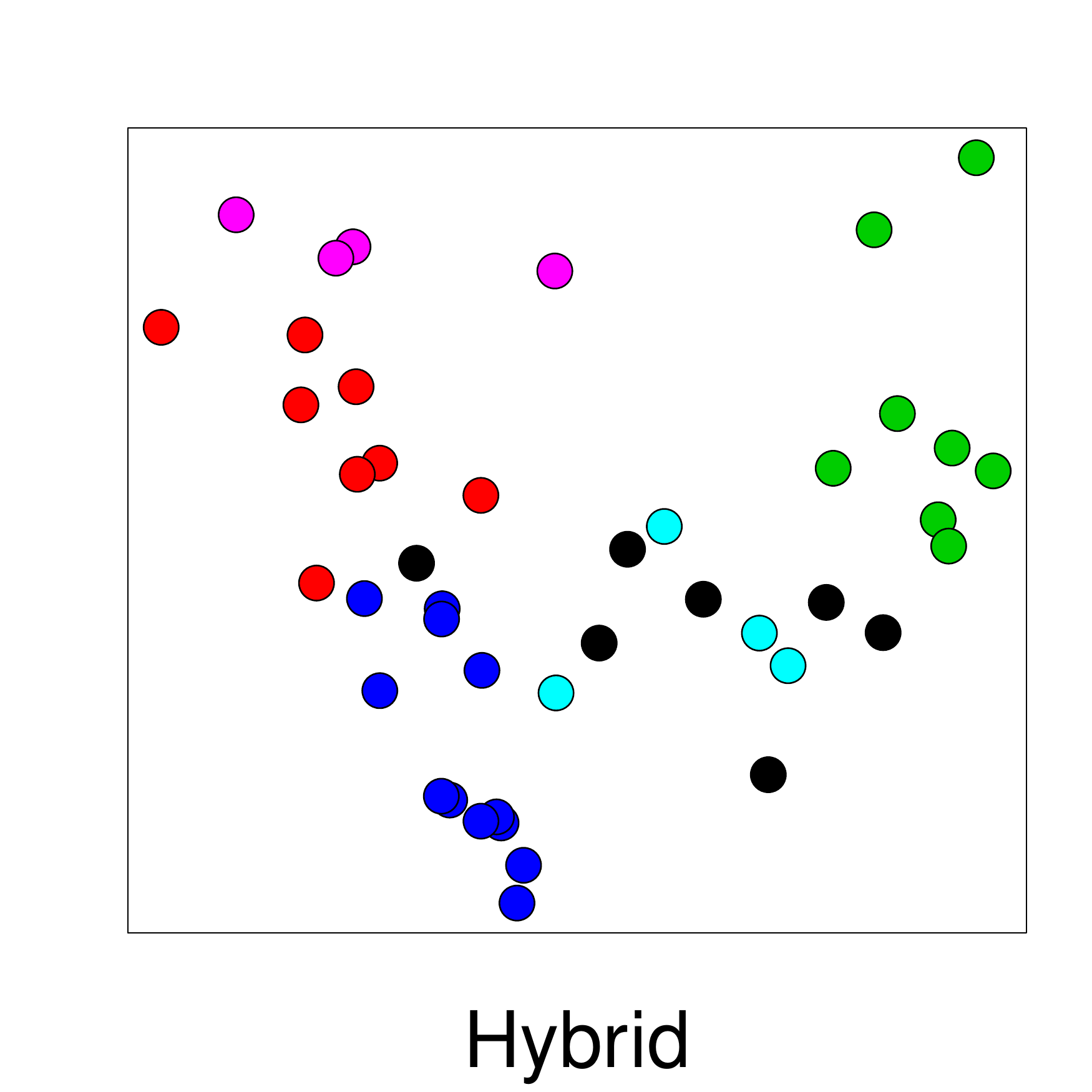}
\end{tabular}
\end{center}
\caption{\em Cytometric Data. Clusters
Clusters obtained using the  procedures Gaussian and  Hybrid.}
\label{fig::CytoCluster}
\end{figure}

\subsection{Choosing $k$: Elbows}
\label{section::elbows}

The issue of choosing $k$ in $k$-means clustering
is a well-studied problem. Perhaps the oldest and simplest method 
is to plot the sums of squares versus $k$ and look for an elbow.
For distribution clustering, we plot
$S_k=\sum_j \sum_{s\in {\cal C}_j}W^2(P_s,c_j)$ versus $k$.
In some cases, the elbow is clearer if we plot
$1/S_k$ versus $k$. Figure \ref{fig::elbow} shows the
plot of $S_k$ versus $k$ for the second example in Section
\ref{section::MvarExamples}  where data were generated as four 
groups of circles. We see a clear elbow at $k=4$.
Figure \ref{fig::cytometric-elbow} shows another plot of cluster quality 
versus $k$. In this case, for the cytometric data sets, we plot 
$1/\sum_j \sum_{s\in {\cal C}_j} W^2(P_s,c_j)$ as the inverse plot shows 
a clearer signal. The elbow is at $k=6$.

\begin{figure}
\begin{center}
\includegraphics[scale=.3]{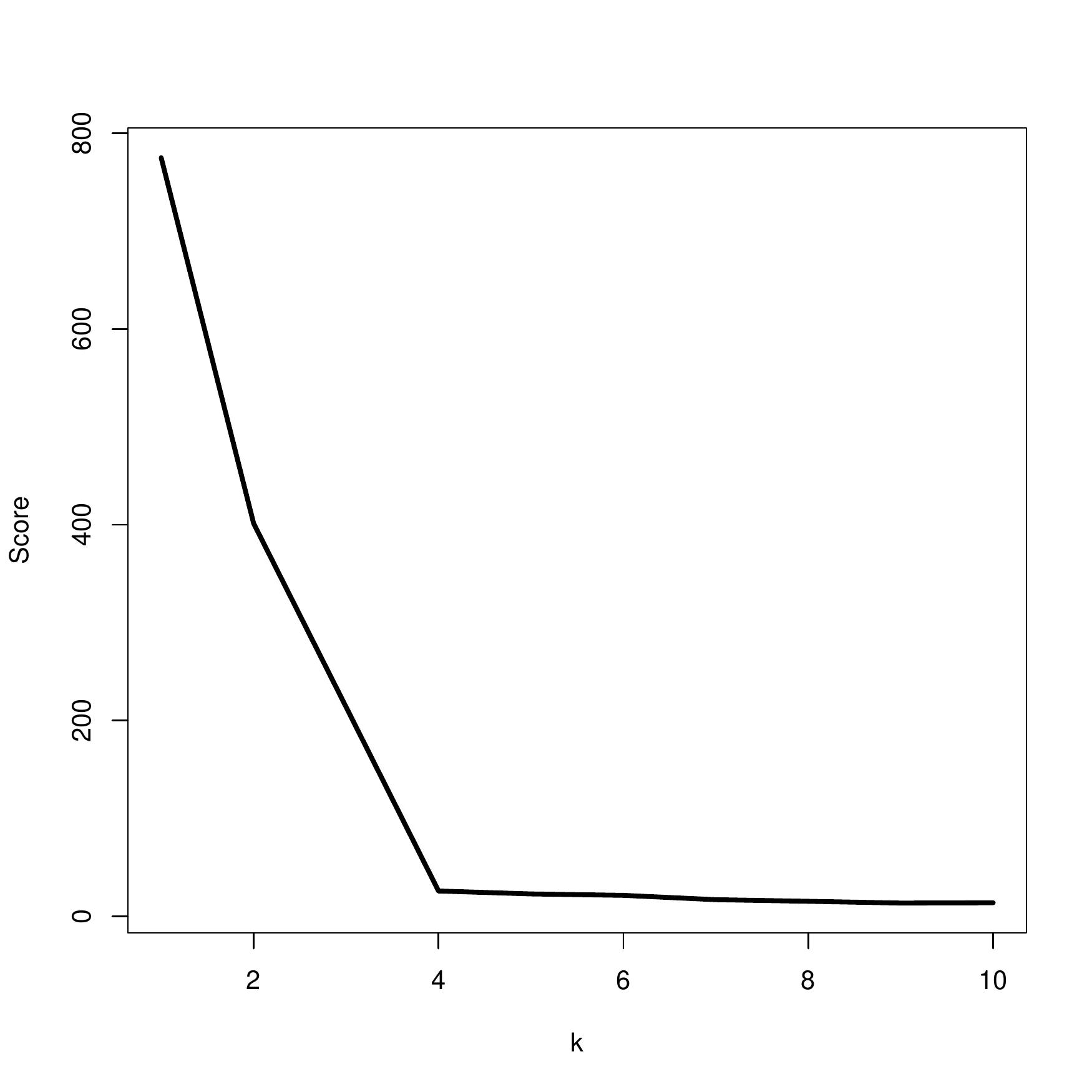}
\end{center}
\vspace{-.1in}
\caption{\em Plot of $\sum_{j=1}^k \sum_{s\in {\cal C}_j}W^2(P_s,c_j)$
for $k=1,\ldots, 10$ for the example with four groups of circles.
Note the pronounced elbow at $k=4$.}
\label{fig::elbow}
\end{figure}

\begin{figure}
\begin{center}
\includegraphics[scale=.4]{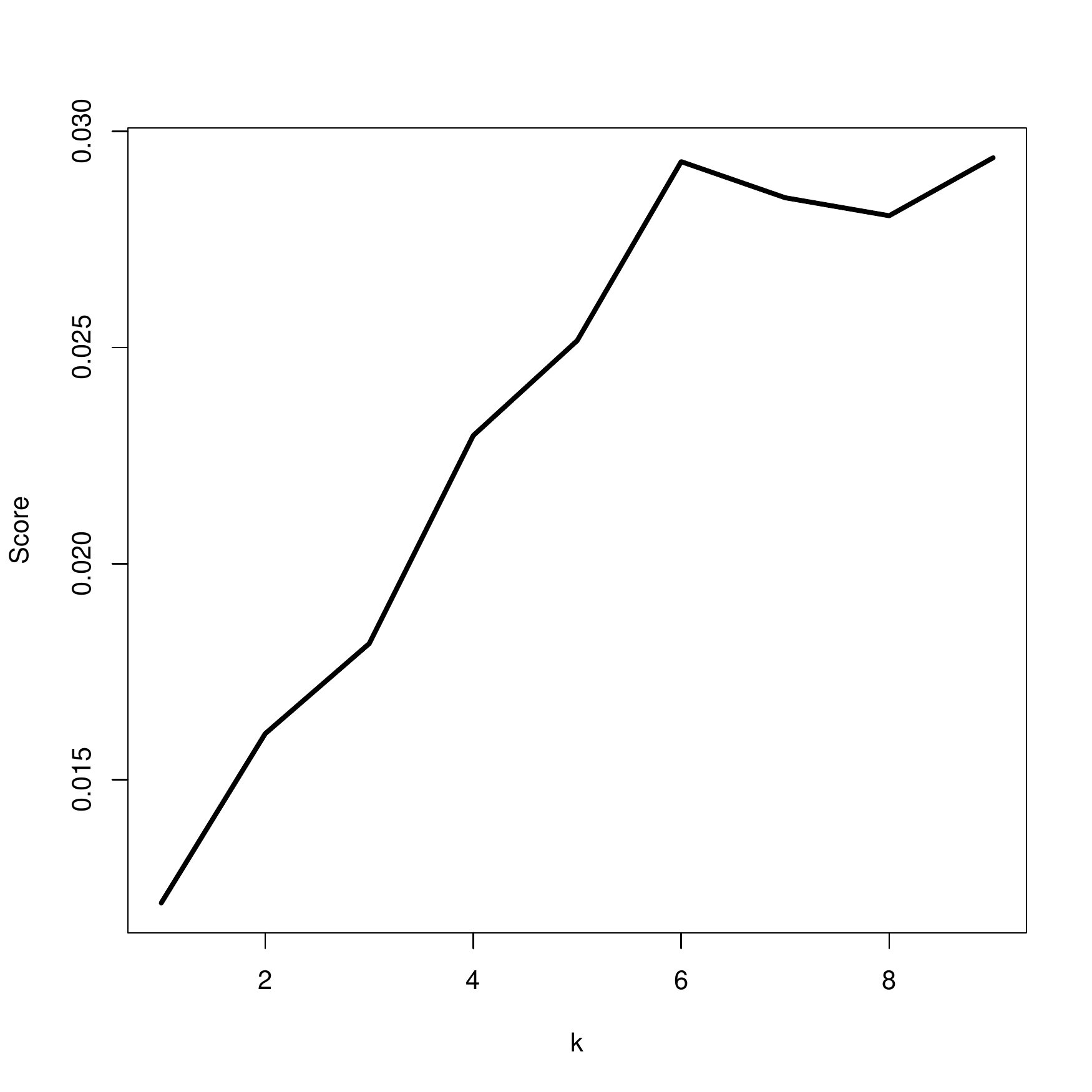}
\end{center}
\vspace{-1cm}
\caption{\em Plot of $1/\sum_{j=1}^k \sum_{s\in {\cal C}_j}W^2(P_s,c_j)$
for $k=1,\ldots, 9$ for the cytometric data.
Note the elbow at $k=6$.}
\label{fig::cytometric-elbow}
\end{figure}

\subsection{The Pre-testing Speedup}
\label{section::speedup}

We generated 100 observations from each of 30 multivariate Normal 
distributions
that lie in three distinct groups.
After computing the rescaled data
$\tilde{X}$, we apply the cross-match test from
\cite{rosenbaum2005exact}
to see if the rescaled data differ significantly from the sample
from the reference distribution.
We used level $\alpha=0.10$.
In the majority of cases, the test does not reject
and we set $\psi_j=0$ which saves considerable computing time.
Typically, the clustering is perfect
(adjusted Rand index of 1).
But when the same procedure is applied to
the Normal and circles data,
the null is almost always rejected (as expected)
and we need to do the full computations.

\section{Other Clustering Methods}
\label{section::other}

There are, of course, many other clustering methods besides
$k$-means clustering. In this section we give a sense of
how the hybrid distance can be used for two other clustering methods.

{\bf Hierarchical Clustering.}
The hybrid distance can easily be used for hierarchical clustering.
These are the steps for single linkage hierarchical clustering:

\begin{enum}
\item Compute the transform
$(\mu_j,\Sigma_j,\psi_j)=\phi(P_j)$ for each distribution.
\item Compute the pairwise distances
$H_{jk}=H(P_j,P_k)$.
\item Find the pair $(j,k)$ that minimizes $H_{j,k}$.
Merge $P_j$ and $P_k$ into a single distribution
$P_{jk} = \pi P_j + (1-\pi)P_k$
where $\pi = n_j/(n_j+n_k)$
and $n_j$ and $n_k$ are the sample sizes of the two datasets.
\item Compute the representation
$(\mu_{jk},\Sigma_{jk},\psi_{jk})$ of $P_{jk}$
by computing the barycenter
of $P_j$ and $P_k$ with weights
$\pi_{jk}$ and $1-\pi_{jk}$.
\item Repeat steps 2-4 until all distributions are merged.
\end{enum}

By working with the hybrid representations
we do not have to keep recomputing Wasserstein distances.

\bigskip

{\bf Mean-Shift and Medioid-Shift Clustering.}
Next we discuss mean shift clustering 
and medioid shift clustering \citep{cheng1995mean,chacon2018multivariate,
jiang2018quickshift++,chacon2015population}.
First we review these methods when applied to a single dataset.
Give a sample $Y_1,\ldots, Y_n$ from a distribution $P$ with density $p$,
mean shift clustering works by
first computing an estimate $\hat p$ of $p$.
Let $m_1,\ldots, m_k$ be the modes of $\hat p$.
Given any point $y$, if we follow the gradient of $\hat p$
starting at $y$ we will end up at one of the modes.
In this way the modes define a partition of the sample space.
This partition defined the mode-based clustering.
There is a simple iterative algorithm called the 
mean-shift algorithm to implement this idea.
Recently, 
\cite{duong2016nearest}
showed that if we use $r$-nearest neighnor density estimation,
then the iterative algorithm takes the following form.
Pick any starting value $x$.
Define $y^{(0)}, y^{(1)},\ldots$ by
$y^{(0)}=y$ and
\medskip
$$
y^{(s)} = \frac{1}{r}\sum_{N_r(y^{(s-1)})}Y_i
$$
\medskip
where
$N_r(y)$ denotes the $r$-nearest neighbors of the point $y$.
This iteration leads to a mode of the estimated density.
This assigns any point $y$ to a mode.
The iteration can be applied to any set of starting points
although these starting points are usually taken to be the data points.

\medskip
Returning to distribution clustering,
we have a set of distributions $P_1,\ldots, P_N$
which we now regard as a sample from a measure $\Pi$ on the space 
of distributions.
Unfortunately, $\Pi$ does not have a density in any meaningful sense.
Nonetheless, we can formally apply the mean shift clustering algorithm.
Given any $P$, let
$N_r(P)$ denote the $r$ closest distributions in
$\{P_1,\ldots, P_N\}$ to $P$
under the hybrid distance.
Let
${\rm Bary}(N_r(P))$ denote the hybrid barycenter
of the distributions in
${\rm Bary}(N_r(P))$.
We then define
$P^{(0)}=P$ and
\medskip
$$
P^{(s)} = {\rm Bary}(N_r(P^{(s-1)})).
$$
\medskip
A faster approach, which avoids computing the barycenter,
is mediod-shift clustering.
In this case,
we begin by estimating 
$\rho(P_j) = 1/d(P,P_r(P))$ where
$P_r(P)$ is the $r^{\rm th}$ nearest neighbor.
This can be thought of as a pseudo-density.
We move $P$ to the distribution
in $N_r(P)$ with highest pseudo-density.
This is repeated until there is no change.
This can be regarded as an approximation to mean-shift clustering.

As an example,
we consider
a collection of 80, two-dimensional datasets.
Each dataset has $n=100$ observations from
a bivariate Normal.
We construct the data so that there are four, well-defined clusters.
Figure
\ref{fig::mediod}
shows mediod-shift clustering using the hybrid distance.
The plots correspond to $r=2, 10, 15$ and 20
nearest neighbors.
The plots show the MDS of the data sets and the paths of the data
as the clustering proceeds.
The red dots show the final destinations, that is, the pseudo-modes.
When $r=2$ we get extra clusters.
When $r$ reaches 20 we start to oversmooth and end up with 2 clusters.
There is a large range of values of $r$ that lead to the correct 
answer of 4 clusters.

Currently, we do not have a theoretical basis for applying
the mean shift algorithm to
distributions.
We believe that it may be possible to define
a pseudo-density on the space of distributions simialr to the approach used by
In the context of clustering functional data,
\cite{ferraty2006nonparametric}
justify density clustering by defining a pseudo-density.
We conjecture that density clustering in Wasserstein space can be similarly justified
by regarding the density estimator as estimating a pseudo-density.
For example,
if $P$ is a random distribution,
one can define the pseudo-density at $P_0$ by
$\mathbb{P}(P\in B(P_0,h))$
where
$B(P_0,h) = \{ P:\ H(P_0,P)\leq h\}$.
We leave the details for future work.

\begin{figure}
\begin{center}
\begin{tabular}{cc}
\includegraphics[scale=.4]{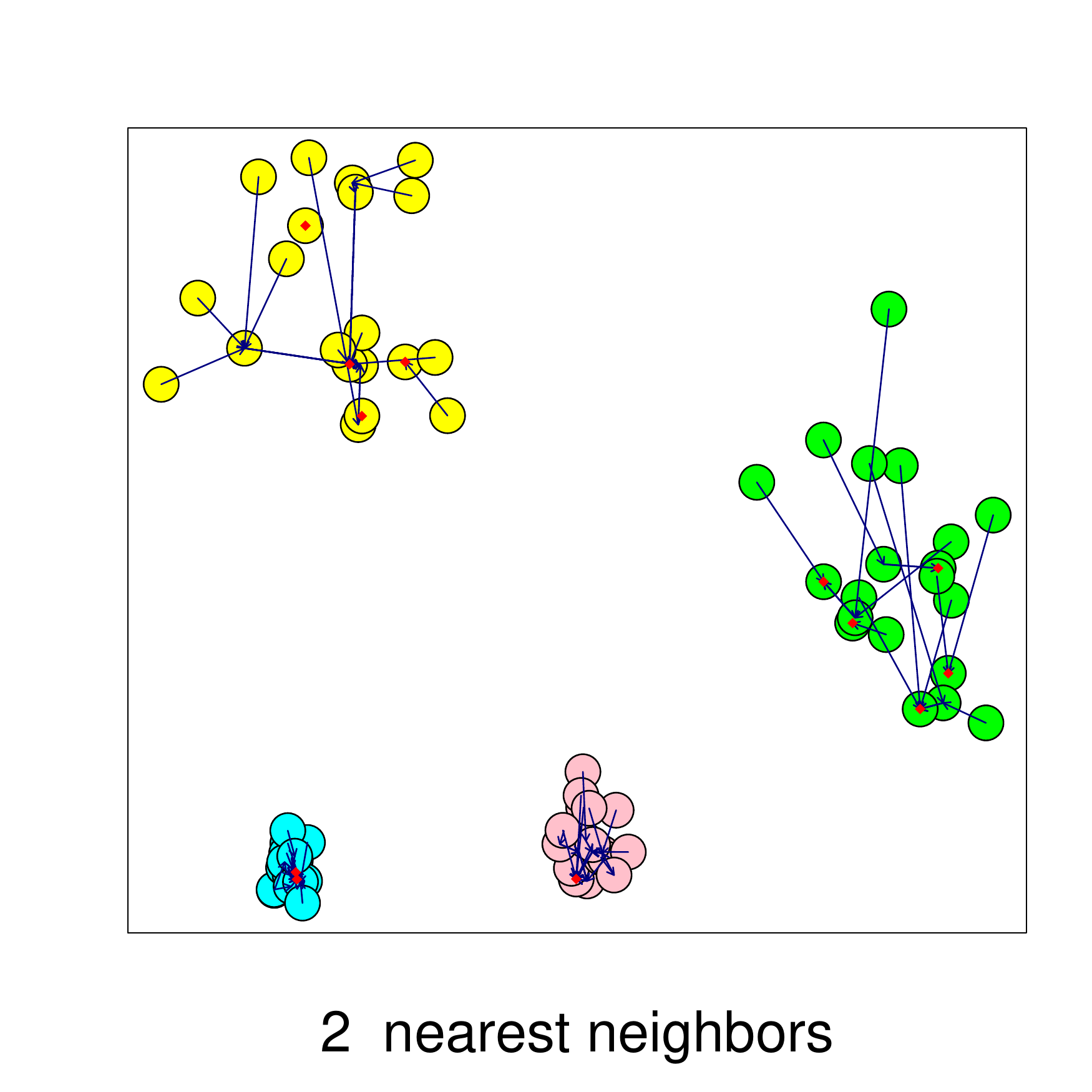}
\includegraphics[scale=.4]{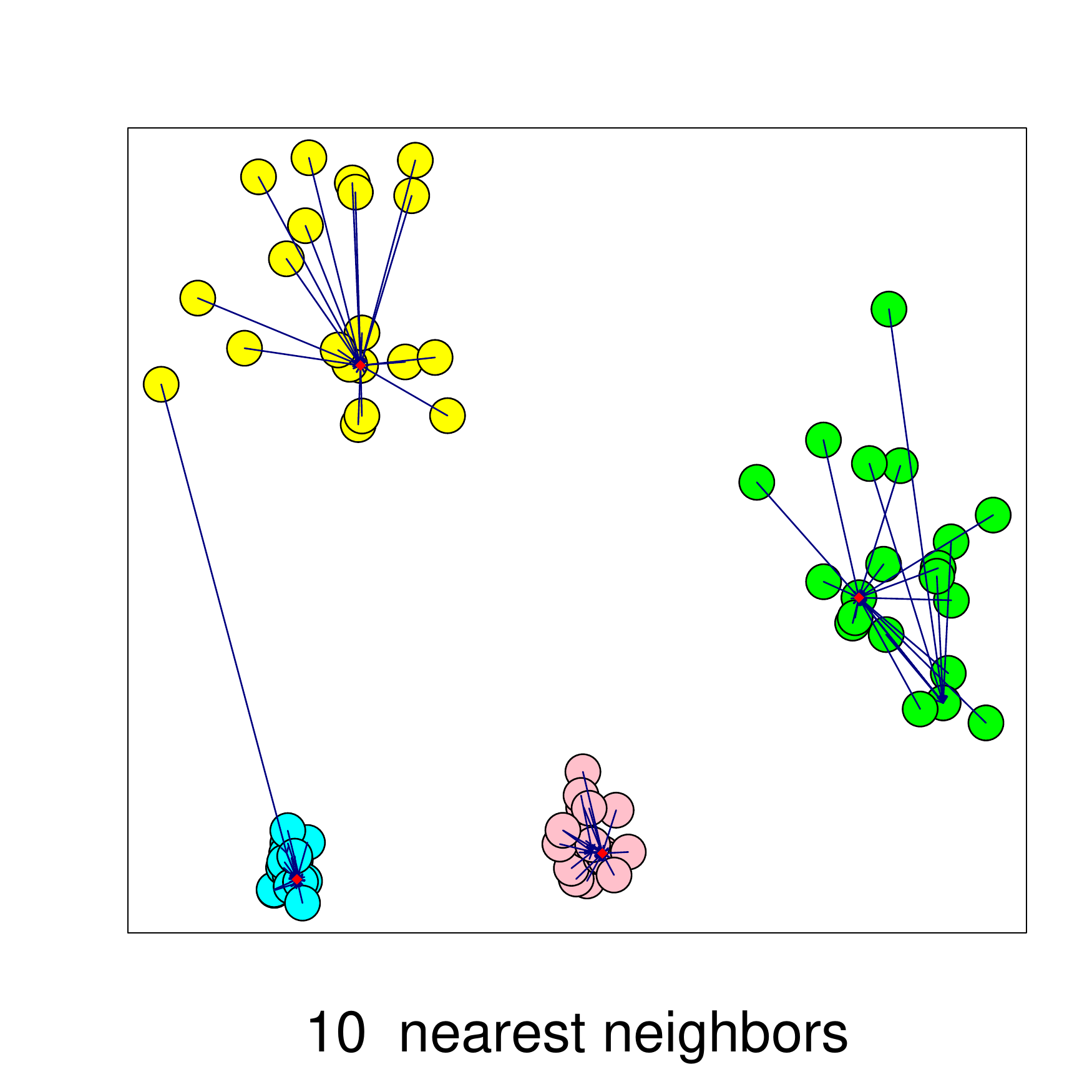}
\end{tabular}

\begin{tabular}{cc}
\includegraphics[scale=.4]{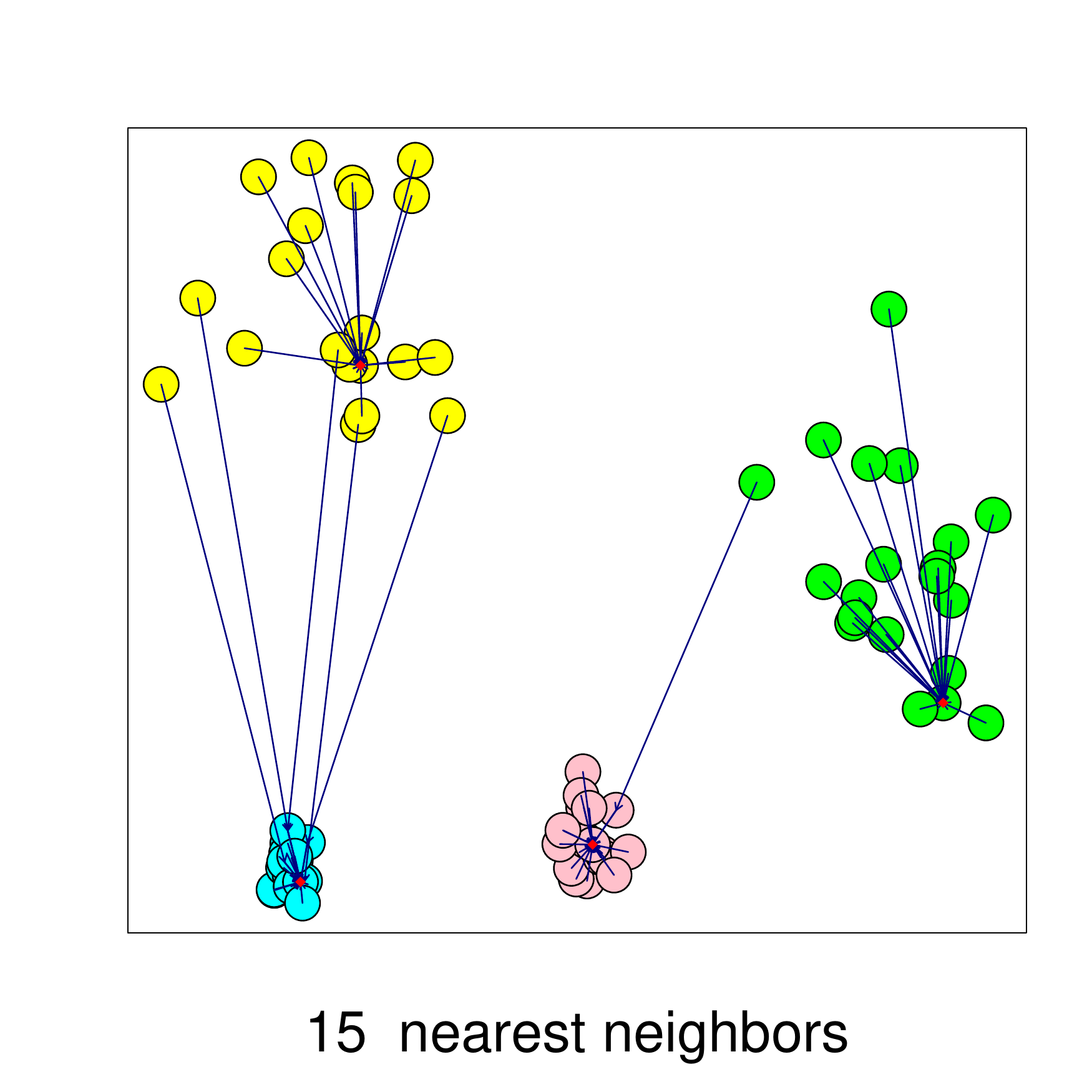}
\includegraphics[scale=.4]{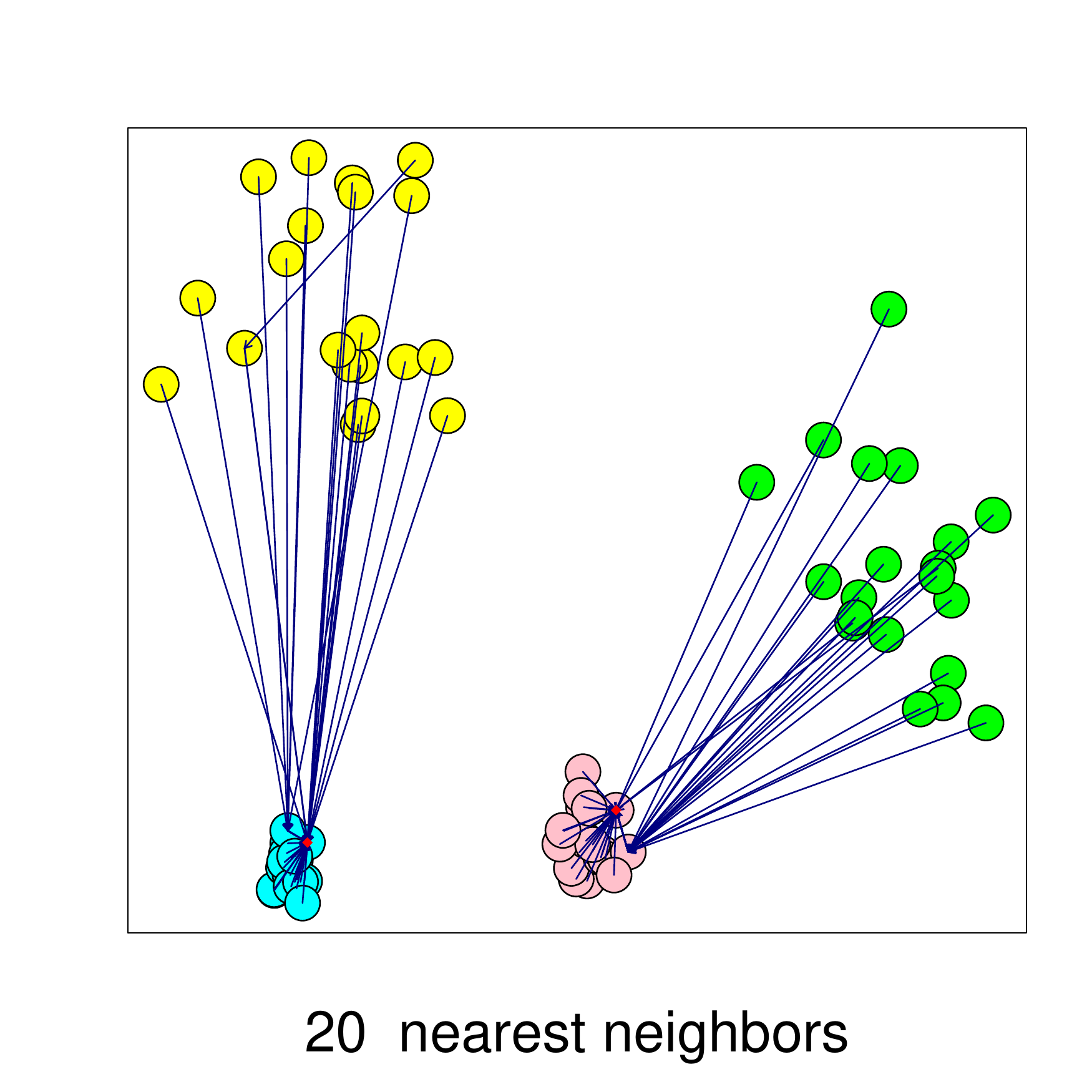}
\end{tabular}
\end{center}
\caption{\em Mediod-Shift Clustering using the hybrid distance.
The plots show the MDS of the data sets and the paths of the data
as the clustering proceeds
for$r=2, 10, 15$ and 20
nearest neighbors.}
\label{fig::mediod}
\end{figure}

\section{Other Hybridizations}
\label{section::otherH}

Now we consider other hybrid distances of the form,
$$
H^2(X,Y) = W^2(Z_X,Z_Y) + W_\dagger^2 (\tilde X, \tilde Y)
$$
where
we explore other choices besides the tangent distance for $W_\dagger$.
In each case,
we want an approximate distance that
can be computed quickly.

\bigskip
{\bf Marginal Distance.}
Here we take
$$
W_\dagger^2(X,Y) = \sum_{j=1}^d W^2(X(j),Y(j))
$$
where $X(j)$ and $Y(j)$ are the
$j^{\rm th}$ components of $X$ and $Y$.
It then follows that
$$
W_\dagger^2(X,Y) = \sum_{j=1}^d \int_0^1 |F^{-1}_j(u)-G^{-1}_j(u)|^2 du
$$
where
$F_j(t) = P(X(j) \leq t)$ and
$G_j(t) = Q(Y(j) \leq t)$.
Hence,
$$
H^2(X,Y) =
||\mu_X-\mu_Y||^2 + {\rm B}^2(\Sigma_X,\Sigma_Y) + 
\sum_{j=1}^d \int_0^1 |F^{-1}_j(u)-G^{-1}_j(u)|^2 du.
$$
The estimate is
$$
\hat H^2(X,Y) =
||\hat \mu_X-\hat \mu_Y||^2 + {\rm B}^2(\hat\Sigma_X,\hat\Sigma_Y) + 
 \sum_{j=1}^d \int_0^1 |\hat F^{-1}_j(u)-\hat G^{-1}_j(u)|^2 du.
$$
Assuming we have samples $m$ observations from each dataset,
we have the further simplification that
$$
\int_0^1 |\hat F^{-1}_j(u)-\hat G^{-1}_j(u)|^2 du=
\frac{1}{m}\sum_i (X_{(i)}(j)-Y_{(i)}(j))^2
$$
where
$X_{(1)}(j),\ldots, X_{(n)}(j)$ and
$Y_{(1)}(j),\ldots, Y_{(n)}(j)$ are
the order statistics for the $j^{\rm th}$ coordinate of $X$ and $Y$,
respectively.

Next we consider an example.
We generate 100 datasets. The first 50 are standard bivariate Normal.
The second 50 are uniform on a circle, scaled to have the same mean and 
covariance as the Normal.The left plot of Figure \ref{fig::marginal}
shows the Gaussian-Wasserstein distances using multidimensional scaling.
The colors indicate Normal (black) and circular (red).
The Gaussian-Wasserstein distance cannot distinguish the two types of datasets.
The right plot shows the marginal hybrid distance.
Here, the two types of datasets are clearly distinguished.

\begin{figure}
\begin{center}
\begin{tabular}{cc}
\includegraphics[scale=.4]{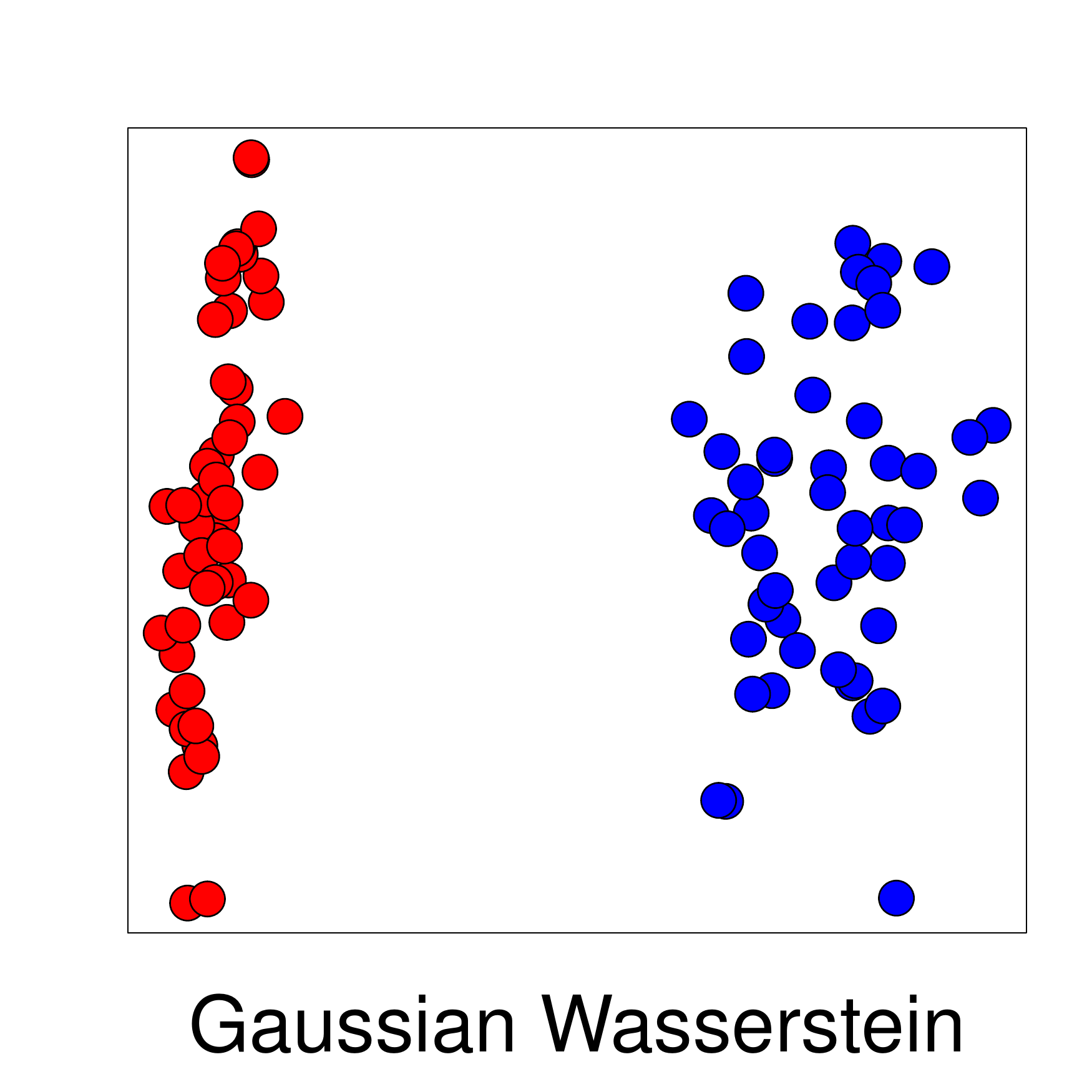}
\includegraphics[scale=.4]{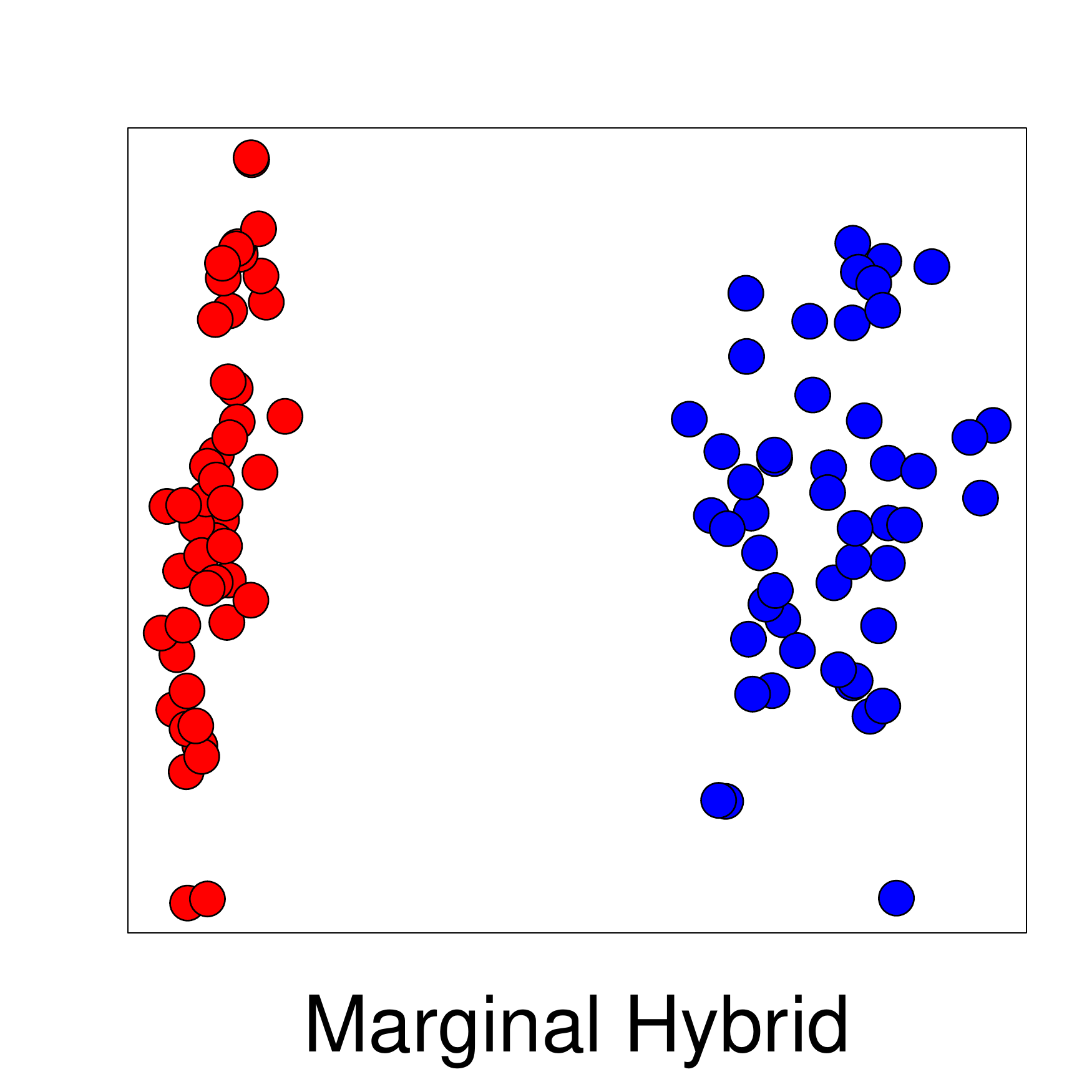}
\end{tabular}
\end{center}
\caption{\em There are 100 distributions. The first 50 
are standard bivariate Normal. The second 50 are uniform on a circle, 
scaled to have the same mean and covariance as the Normal.
The left plot shows the Gaussian-Wasserstein distance.
The colors indicate the Normal data (red) and the circular (blue).
As expected, the Gaussian-Wasserstein distance cannot distinguish 
the two types of datasets. The right plot shows the marginal 
hybrid distance. Here, the two types of datasets are clearly 
distinguished.}
\label{fig::marginal}
\end{figure}

{\bf Transformed Gaussian Approximation.}
Recall that the Gaussian-Wasserstein distance is
$$
G^2(X,Y) = ||\mu_X-\mu_Y||^2 + B^2(\Sigma_X,\Sigma_Y).
$$
A hybrid distance that leverages the simplicity of $G$
can be obtained by repeatedly applying a nonlinear transformation to
the variables, using the $G$, standardizing the variables and repeating.
Let $L$
be the map that takes $X$ to
$\tilde{X} = \Sigma_X^{-1/2}(X-\mu_X)$ and let
$\Phi$ be some fixed nonlinear map.
Define
$H^2(X,Y) = G^2(X,Y) + W_\dagger(X,Y)$ where
$$
W_\dagger(X,Y)=\sum_{j=1}^k G^2(X_j,Y_j)
$$
with
$$
X_j = \underbrace{[(\Phi\circ L)\circ \cdots (\Phi\circ L)]}_{k\ {\rm times}} X,\ \ \ 
Y_j = \underbrace{[(\Phi\circ L)\circ \cdots (\Phi\circ L)]}_{k\ {\rm times}} Y.
$$
In the case $k=1$ this simplifies to
\begin{align*}
H^2(X,Y) &= G^2(X,Y) + G^2(\Phi(\tilde X),\Phi(\tilde Y))\\
&=
||\mu_X-\mu_Y||^2 + B^2(\Sigma_X,\Sigma_Y) +
||\mu_{\Phi(\tilde X)}-\mu_{\Phi(\tilde Y)}||^2 + 
B^2(\Sigma_{\Phi(\tilde X)},\Sigma_{\Phi(\tilde Y)})
\end{align*}

There are many possible choices for $\Phi$. The only requirement 
is that $\Phi$ be non-linear otherwise the transformation adds no 
information beyond that already captured by $G$. A convenient 
choice is the polynomial transform
$$
\Phi(x) = ([x]_2,\ldots, [x]_k)
$$
where
$[x]_2 = (x_{i_1}x_{i_2}:\ 1\leq i_1 \leq i_2 \leq d)$,
$[x]_3 = (x_{i_1}x_{i_2}x_{i_3}:\ 1\leq i_1 \leq i_2 \leq i_3 \leq d)$, etc.

As an example, 
consider a one-dimensional case.
We suppose that
$P_1,\ldots,P_{50}=N(0,1)$
and
$P_j=(1/2)\delta_{-1}+(1/2)\delta_1$
for
$j=51,\ldots, 100$.
We take $\Phi(X) = (X^2,X^3, X^4)$.
Since the first two moments 
of all the distributions match,
the Gaussian distance is unable to distinguish the 100 distributions.
Figure \ref{fig::transformation}
shows the pairwise distances using
multidimensional scaling (MDS).
The top left plot is based on parwise Wasserstein distances.
The two groups are clearly visible.
The top rightplot used Gaussian Wasserstein distance.
As expected, the distributions are mixed up.
The bottom left uses hybrid distance with polynomial transformation.
Here we see that again, the distributions are clearly separated.

\begin{figure}
\begin{center}
\begin{tabular}{ccc}
\includegraphics[scale=.3]{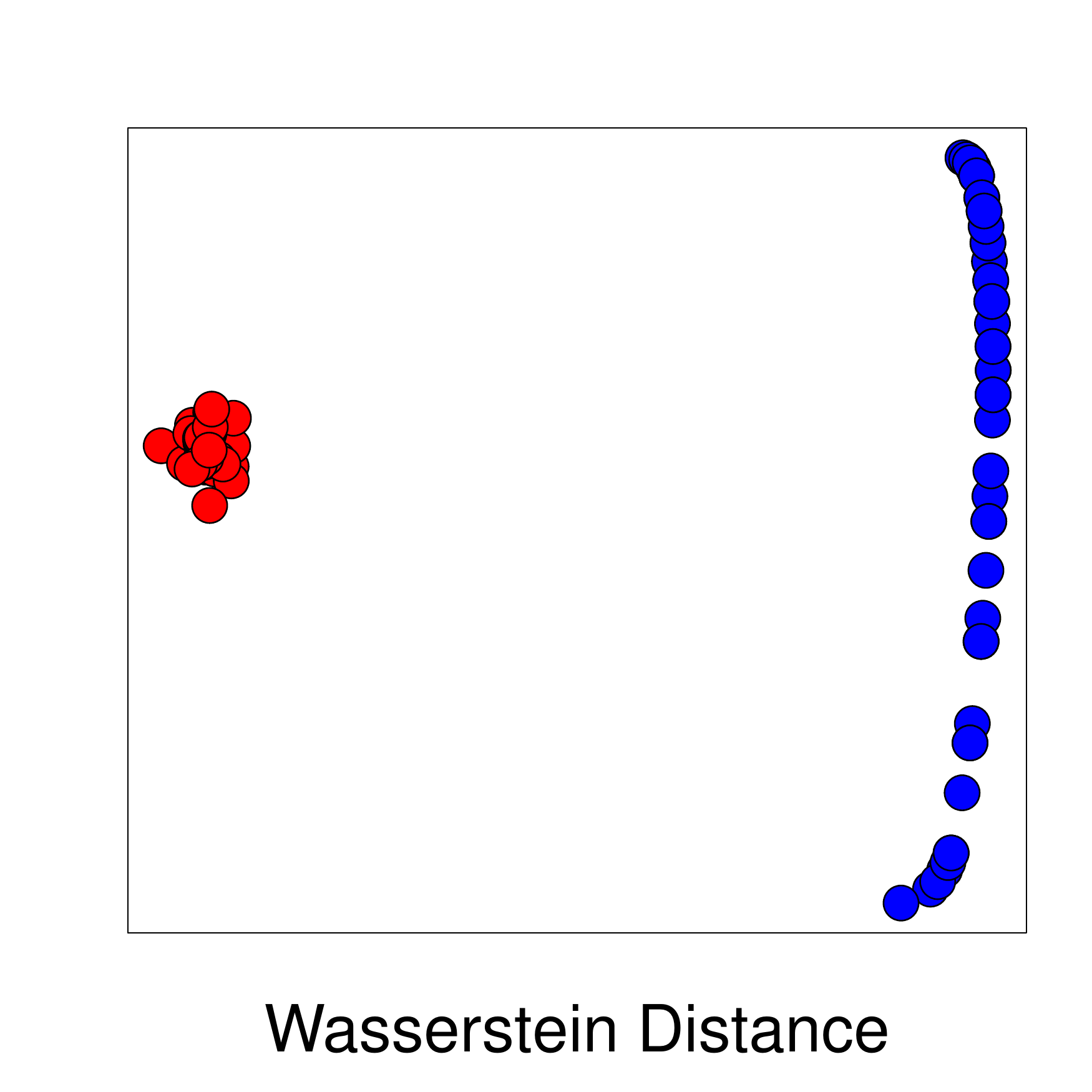}
\includegraphics[scale=.3]{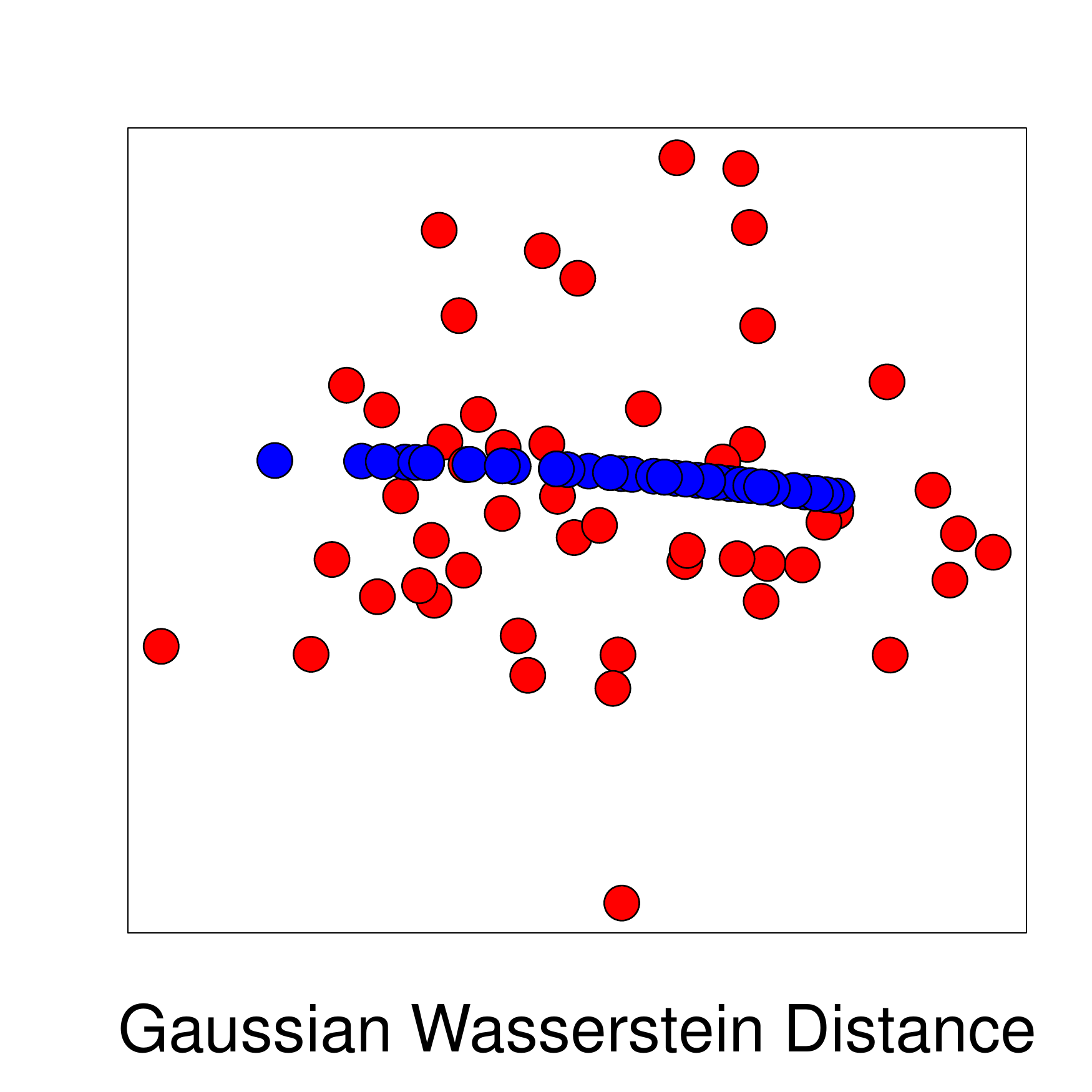}
\includegraphics[scale=.3]{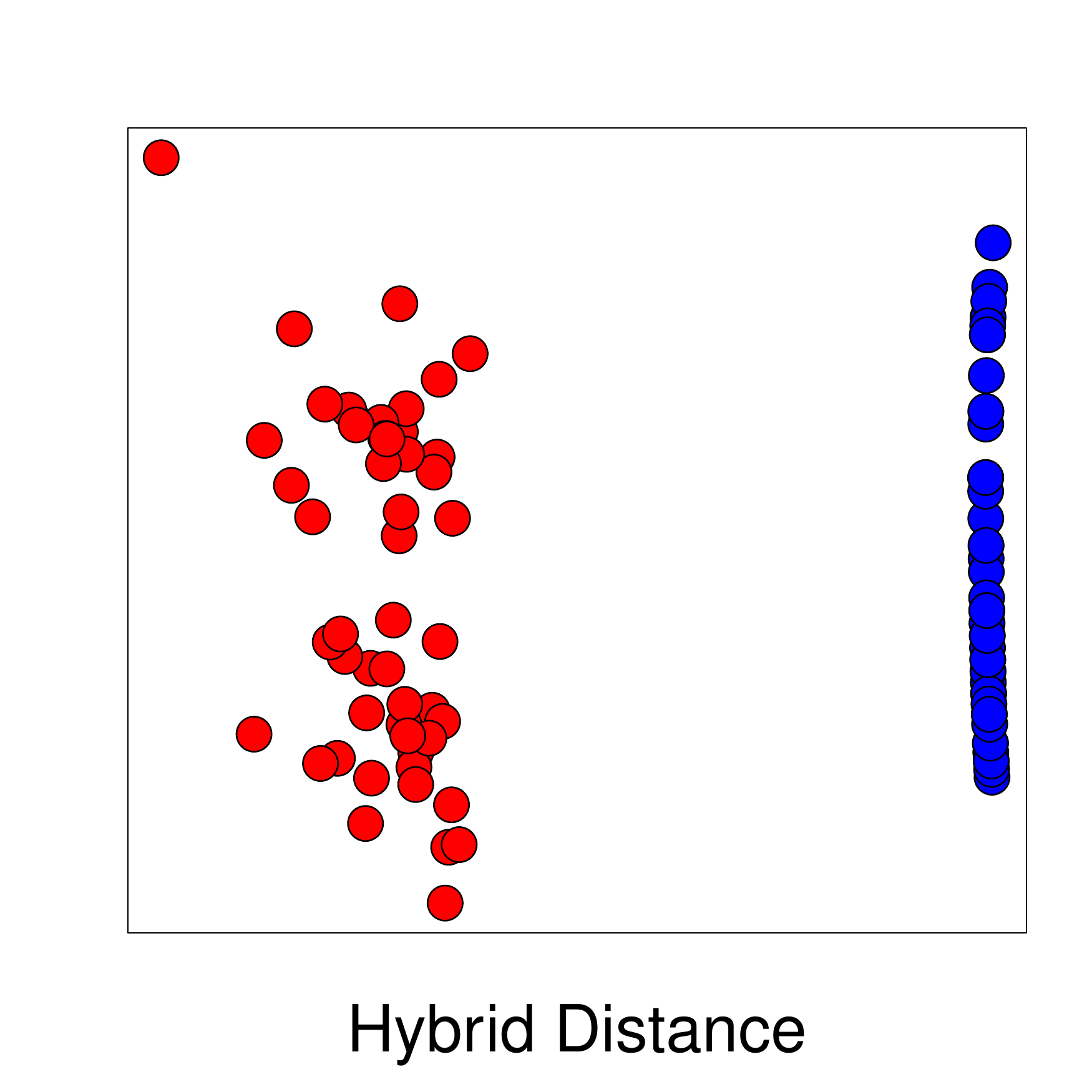}
\end{tabular}
\end{center}
\caption{\em 50 distributions are N(0,1) (black)
and 50 are mixtures of the form
$\delta_{-1}+(1/2)\delta_1$ (red).
The sample size for each distribution is 1,000.
Left: MDS based on parwise Wasserstein distance.
The two groups are clearly visible.
Middle: MDS based on the Gaussian Wasserstein distance.
As expected, the mixtures distributions are mixed in with the Normals.
Right: Hybrid distance with polynomial transformation.
Here we see that again, the distributions are clearly separated.}
\label{fig::transformation}
\end{figure}

\section{Discussion}
\label{section::discussion}

We have proposed a modified version
of the Wasserstein distance called the hybrid distance
that can be used for clustering sets of distributions.
The distances and the barycenter can be computed quickly.
The slowest part of the computation is finding the optimal
matching permutation which takes $O(m^3)$ operations.
There is large literature on approximate matching
in the computer science literature.
It would be interesting to incorporate some of those methods
to further speed up the computations.

The hybrid distance
can be used for other tasks as well
such as shrinkage estimation,
modeling random effects, domain adaptation,
and image processing.
We will report on these applications in future work.
We conclude by discussing a few issues.

\subsection{Energy Distance}
\label{section::energy}

The energy distance
(\cite{szekely2013energy})
is another metric that takes the underlying geometry
of the sample space into account.
The metric defined by
\begin{equation}\label{eq::energy}
{\cal E}(X,Y) \equiv {\cal E}(P,Q) =
2\mathbb{E}||X-Y|| - \mathbb{E}||X-X'|| - \mathbb{E}||Y-Y'||
\end{equation}
where
$X,X'\sim P$ and
$Y,Y'\sim Q$.
The energy distance has many of the desirable properties that the Wasserstein distance has
including the ability to compare discrete and continuous distributions.

A sample estimator of the distance based on
$X_1,\ldots, X_n \sim P$ and
$Y_1,\ldots, Y_m \sim Q$
is
\begin{equation}\label{eq::energy2}
\hat{\cal E}(X,Y) =
\frac{2}{nm}\sum_{i,j}||X_i - Y_j|| -
\binom{n}{2}\sum_{i\neq j} ||X_i - X_j|| - 
\binom{m}{2}\sum_{i\neq j} ||Y_i - Y_j||.
\end{equation}
In fact, there are very fast approximations that speed up
the calculations; see \cite{huang2017efficient}.

To the best of our knowledge,
there is no fast way to compute barycenters with this metric.
But we can get around this by using $k$-mediods
in place of $k$-means.
This, if
$P_1,\ldots, P_r$ is a set of distributions
in a cluster,
we define the centroid to be $P_t$ where
$P_t = \argmin_{1\leq j \leq r} \sum_s {\cal E}(P_j,P_s)$.
That is, we restrict the search for a centroid to be over the
observed distributions.

Hence, if we use $k$-mediods,
distribution clustering with the energy distance is feasible.
However, the energy distance is not
shape preserving.
To see this, consider the following example.
$P_1=\delta_{-a}$ and
$P_2=\delta_{a}$ 
where $a>0$ and $\delta$ denotes a point mass.
The barycenter $P$
minimizes
${\cal E}(P,P_1) + {\cal E}(P,P_2)$.
Recall the the Wasserstein barycenter is $\delta_0$.
But this is not true
for the energy distance.
To see this, consider distributions of the form
$P = (1/2)\delta_{-b} + m(1/2)\delta_{b}$.
It is easy to show that,
over such distributions,
${\cal E}(P,P_1) + {\cal E}(P,P_2)$
is minimized by choosing
the mixture
$P = (1/2)\delta_{-a} + m(1/2)\delta_{a}$.
This implies that the barycenter cannot be $\delta_0$ as desired.

In summary, we see that when distributions are well separated,
the energy barycenter is over-dispersed compared to the Wasserstein barycenter.
This leads to situations where the centroid of a cluster
might look quite different 
than the distributions in the cluster.
It is tempting to look for a simple modification
of the energy distance that fixes this problem.
We have tried several approaches without success.
Thus, the advantage of energy distance
is that it can be computed quickly but
the disadvantage is that it does not preserve the shape
of the distributions when computing barycenters.

\subsection{Optimal Preconditioning}
The linear transformation that we used, namely,
$\tilde{X}=\Sigma_X^{-1/2}(X-\mu_X)$
matches the first two moments of
$\tilde{X}$ with $U$
where $U\sim R$ is the reference distribution.
This transformation is chosen for convenience.
An alternative approach is to choose an optimal linear
transformation.
That is, we could choose
$a$ and $A$ to minimize
$W^2(a+AX,U)$.
Unfortunately, computing the optimal $a$ and $A$ is non-trivial.
If we permit linear transformations
of both $X_j$ and $U$ then
there is a closed form expression for the optimal
transformation;
see \cite{kuang2017preconditioning}.
This requires a different transformation of $U$ for each $X_j$.
Hence, we lose the idea 
of  a single, fixed, reference distribution.
It would be possible to
construct a composition of maps
$X_j \to \tilde{X}_j \to \tilde{Z}_j \to U$
where
$X_j \to \tilde{X}_j$ is the optimal linear map on $X_j$ and
$\tilde{Z}_j \to U$ is the optimal linear map for $U$.
\cite{kuang2017preconditioning}
show that
the optimal map $T(X)$
is the same as the composition
$M^{-1}\circ T_\dagger \circ L$
where $T_\dagger$
is the optimal transport map from
$\tilde X$ to $\tilde Y$;
see Figures \ref{fig::optimal}
and \ref{fig::pairwise}.

\begin{figure}
\adjustbox{scale=2,center}{%
\begin{tikzcd}
  X \arrow[r, "T"] \arrow[d, "L"]  & Y \arrow[d, "M" ] \\
  \tilde{X} \arrow[r,  "T_\dagger" ]     & \tilde{Y} 
\end{tikzcd}
}
\caption{\em The optimal linear map, preserves the Wasserstein distnce and makes
$\tilde X$ and $\tilde Y$ as close as possible.}
\label{fig::optimal}
\end{figure}

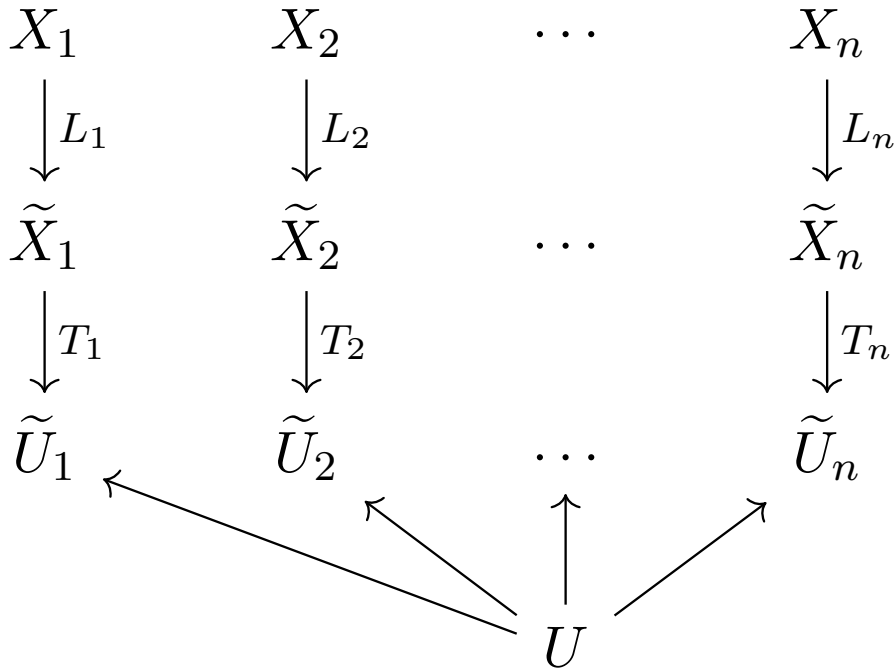
\begin{figure}
\adjustbox{scale=2,center}{
\begin{tikzcd}
X_1 \arrow[d, "L_1"] & X_2 \arrow[d, "L_2"] & \cdots & X_n \arrow[d, "L_n"]\\
\tilde{X}_1 \arrow[d, "T_1"]& \tilde{X}_2 \arrow[d, "T_2"] & \cdots & \tilde{X}_n \arrow[d, "T_n"]\\
\tilde{U}_1 & \tilde{U}_2 & \cdots & \tilde{U}_n\\
            &             & U \arrow[ull] \arrow[ul] \arrow[u]\arrow[ur] & \\
\end{tikzcd}
}
\vspace{-.5in}
\caption{\em The optimal linear preconditioning is 
applied to $X_j$ and $U$ in a pairwise fashion.}
\label{fig::pairwise}
\end{figure}

This might lead to more accurate approximations at
the expense of much more computation.
In the interest of keeping the method as simple as possible,
we have not use this more involved approach.

\subsection{The Choice of Reference Measure and Multiple Tangents}

When we used the
tangent approximation,
we approximated the distance by projecting
one a single tangent space.
An alternative, when there are a large number of diverse distributions,
is to use several tangent approximations.
The datasets could first clustered using some fast approximation, such as the
marginal hybrid distance.
A separate tangent approximation is computed for each perliminary cluster.
Then each distribution can be represented by its local tangent approximation
using the same reference measure.
Another issue is the choice of reference measures.
We do not know if there is an optimal reference measure.
It is not even clear how to define such a notion of optimality.
We conjecture that, at least for distribution clustering,
the choice of reference measure is not critical.

\subsection{Other Applications of the Hybrid Approach}
\label{section::otherapp}

The hybrid distance can be used for other tasks besides
clustering.
In this section we outline how the hybrid distance
can be used for these tasks.

{\bf Nonparametric Shrinkage.}
Suppose that
$P_1,\ldots, P_N$ are random distributions
drawn from a distribution $\Pi$.
Let ${\cal D}_j$ be $n_j$ samples drawn from $P_j$.
Suppose that we want to borrow strength from all the data
to estimate each $P_j$.
Let $R$ be a reference measure
and let
$(\mu_j,\Sigma_j,\psi_j)=\phi(P_j)$
denote the hybrid transform
as in (\ref{eq::transform}).
Let
$(\hat \mu_j,\hat \Sigma_j,\hat \psi_j)$
denote the $j^{\rm th}$ estimate.
We can apply shrinkage estimation separately to
the $\hat \mu_j$'s,
the $\hat \Sigma_j$'s
and the $\hat\psi_j$'s.
Denote these estimates by
$\overline{\mu}_j$,
$\overline{\Sigma}_j$,
$\overline{\psi}_j$.
Inverting $\phi$ 
gives the shrinkage estimator $\overline{P}_j$.
We obtain 
$\overline{\mu}_j$
using standard James-Stein shrinkage,
$\overline{\Sigma}_j$ by covariance shrinkage
as in \cite{nguyen2018distributionally} and
$\overline{\psi}_j$ is obtained by
functional shrinkage as in
\cite{guo2002functional}.

{\bf Multi-Sample Testing.}
Suppose that we want to test the null hypothesis
$H_0: P_1 = \cdots =P_N$.
The hybrid distance allows us to
split this null into three different null hypotheses, namely,
$H_0: \mu_1 = \cdots =\mu_N$,
$H_0: \Sigma_1 = \cdots = \Sigma_N$ and
$H_0: \psi_1 = \cdots = \psi_N$.
This allows us to separate the deviations from the null
in terms of location, scale and shape.

{\bf Multi-Level Clustering.}
After clustering the distributions
$P_1,\ldots, P_N$
into clusters ${\cal C}_1,\ldots, {\cal C}_k$,
we may want to further cluster the datasets
${\cal D}_1,\ldots, {\cal D}_N$.
We could apply any standard clustering algorithm to each dataset.
But we may want the distributions in a cluster
${\cal C}_j$ to have similar clusterings.
For example, we could use Normal mixture clustering on each dataset
then apply the shrinkage ideas mentioned above to
make the clusterings similar.
This is an alternative to simultaneous Wasserstein clustering as in
\cite{ho2017multilevel} which is quite expensive.

\section*{Appendix}
\label{section::theory}

{\bf Consistency.}
In this section
we show that the 1-nearest neighbor regression method
for estimating the transport map
is consistent.
Let
$X_1,\ldots, X_m \sim P$ and
$Y_1,\ldots, Y_m \sim Q$.
Assume that $P$ has density $p$ and
$Q$ has density $q$
and that both are supported on a compact subset
of $\mathbb{R}^d$.
Let $T$ be the optimal transport map from
$P$ to $Q$.
Let $P_n$ be the empirical measure based on
$X_1,\ldots, X_m$ and
let $Q_n$ be the empirical measure based on
$Y_1,\ldots, Y_m$.

Let
$V_1,\ldots, V_m$ denote the Voronoi tesselation defined by
$X_1,\ldots, X_m$.
Let $\pi$ be the permutation that minimizes
$$
\sum_i ||X_i - Y_{\pi(i)}||^2.
$$
Our estimator of $T$ is
\begin{equation}\label{eq::TT}
\hat T(x) = \sum_{s=1}^j I(x\in V_s)Y_{\pi(j(x))}
\end{equation}
where
$j(x)$ is the integer $j$ such that
$||x-X_j|| \leq ||x-X_t||$ for all $t\neq j$.

\begin{theorem}
Let $P$ and $Q$ be supported on a compact set ${\cal X}\subset \mathbb{R}^d$.
Suppose that $P$ and $Q$ have continuous densities $p$ and $q$
and that
$\inf_{x\in {\cal X}}p(x) >0$ and
$\inf_{y\in {\cal Y}}q(x) >0$.
Let
$X_1,\ldots, X_n \sim P$ and
$Y_1,\ldots, Y_n \sim Q$.
Let $T$ be the optimal transport map and let
$\hat T$ be the estimated map in (\ref{eq::TT}).
Then
$$
\int ||x-\hat T(x)||^2dP(x)= \int ||x- T(x)||^2dP(x) + O_P(\log n\, n^{-1/d})
$$
for some positive constant $a>0$.
\end{theorem}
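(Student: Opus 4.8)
The plan is to show that the empirical transport cost converges to the true transport cost, and then argue that the nearest-neighbor interpolation $\hat T$ achieves this empirical cost up to a controllable error. The statement is really an oracle-type bound: the quadratic cost $\int \|x - \hat T(x)\|^2 dP(x)$ incurred by the piecewise-constant estimator differs from the optimal cost $\int \|x - T(x)\|^2 dP(x)$ by a term of order $\log n \, n^{-1/d}$. This rate is the tell-tale signature of covering/packing arguments in $\mathbb{R}^d$, so I expect the proof to decompose the error into (i) the gap between the empirical optimal matching cost and the population optimal transport cost, and (ii) the gap between evaluating $\hat T$ at an arbitrary $x$ versus at the nearest sample point $X_{j(x)}$.

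\textbf{Step 1: Reduce to the empirical matching.} First I would note that by construction $\hat T$ is piecewise constant on the Voronoi cells $V_1,\ldots,V_m$, with $\hat T(x) = Y_{\pi(j(x))}$ on $V_{j(x)}$, where $\pi$ is the optimal assignment minimizing $\sum_i \|X_i - Y_{\pi(i)}\|^2$. The empirical optimal matching cost $\frac{1}{n}\sum_i \|X_i - Y_{\pi(i)}\|^2$ equals $W^2(P_n,Q_n)$ by property (ii) of the Wasserstein distance stated in the excerpt. I would then invoke known rates for the convergence of empirical Wasserstein distances, giving $W^2(P_n,Q_n) \to W^2(P,Q) = \int \|x-T(x)\|^2 dP(x)$, with the rate in dimension $d$ being the dominant $n^{-1/d}$ contribution (the slow, dimension-cursed term).

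\textbf{Step 2: Control the interpolation error.} The remaining work is to bound the difference between $\int \|x - \hat T(x)\|^2 dP(x)$ and the empirical cost. On each Voronoi cell $V_s$, the estimator outputs the fixed value $Y_{\pi(s)}$, whereas the ideal value at the generating point is $Y_{\pi(s)}$ as well; the error comes only from the displacement of a generic $x \in V_s$ away from its generator $X_s$. Writing $\|x - \hat T(x)\|^2 = \|x - Y_{\pi(s)}\|^2$ and comparing to $\|X_s - Y_{\pi(s)}\|^2$, I would expand and bound the cross terms by the Voronoi cell diameter. Since $P$ has a density bounded below on a compact set (the assumption $\inf_x p(x) > 0$), standard results on random Voronoi tessellations give that the maximal cell diameter is $O_P(\log n \, n^{-1/d})$ — this is exactly where the logarithmic factor enters, via a union bound over the $m$ cells to get a uniform covering guarantee. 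Integrating against $dP$ then transfers this pointwise bound into the stated error term.

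\textbf{The main obstacle} will be Step 2, specifically ensuring that the optimal empirical matching $\pi$ is \emph{stable} under the Voronoi interpolation: one must verify that assigning all of $V_s$ to $Y_{\pi(s)}$ does not blow up the cost relative to the optimal matching, which requires the density lower bounds to guarantee cells are small and that neighboring matched targets are themselves close (a regularity property inherited from continuity of the optimal map $T$). A subtlety is that $T$ need not be continuous in general, but under the stated hypotheses (absolutely continuous $P,Q$ with densities bounded away from zero on compact support) Brenier's theorem guarantees $T$ exists, is unique, and is the gradient of a convex potential, hence almost-everywhere continuous; this regularity is what lets the empirical matching approximate evaluation of $T$ on the tessellation. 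I would lean on the compactness and density-boundedness assumptions throughout to make the covering arguments uniform. The stray phrase ``for some positive constant $a>0$'' in the statement appears vestigial, as no $a$ occurs in the displayed bound, so I would simply absorb constants into the $O_P$ notation.
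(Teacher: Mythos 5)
Your proposal follows essentially the same route as the paper's proof: decompose the integral over the Voronoi cells, expand $\|x-\hat T(x)\|^2$ around the generator $X_{j(x)}$ so that the middle term recovers the empirical matching cost $W^2(P_n,Q_n)$ (handled by empirical Wasserstein convergence rates) while the remaining squared and cross terms are controlled by the maximal Voronoi cell diameter $O_P(\log n\, n^{-1/d})$. The only real difference is that your anticipated ``main obstacle'' is not actually an obstacle: the paper needs no stability of the matching $\pi$ nor any continuity of $T$ via Brenier regularity, since the cross term is killed directly by Cauchy--Schwarz using only compact support and the cell-diameter bound.
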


\begin{proof}
Let $V_1,\ldots, V_n$ denote the Voronoi diagram
defined by $X_1,\ldots, X_n$.
From Lemma 
\ref{lemma::Voronoi} below,
$\max_j |P(V_j) - 1/n| + O_P(1/n^2)$
and
$\max_j {\rm diam}(V_j) =O_P(\log n \, n^{-1/d})$.

Let $P_n$ and $Q_n$ be the empirical
distributions of
$X_1,\ldots, X_n$ and
$Y_1,\ldots, Y_n$.
From \cite{weed2017sharp}, we have that $W^2(P_n,Q_n) = O_P(n^{-2/d})$.
Now
\begin{align*}
\int ||x-\hat T(x)||^2 dP(x) &=
\sum_j \int_{V_j} ||x-\hat T(x)||^2 dP(x) =
\sum_j \int_{V_j} ||x-\hat T(X_j)||^2 dP(x)\\
&=A + B + C
\end{align*}
where
$$
A = \sum_j \int_{V_j}||x-X_j||^2 dP(x),\ 
B = \sum_j \int_{V_j}||X_j - \hat T(X_j)||^2,\ 
C = \sum_j \int_{V_j} \langle x-X_j, X_j - \hat T(X_j)\rangle dP(x).
$$
Now
$A \leq \sum_j \max_j {\rm diam}(V_j)P(V_j) = \max_j {\rm diam}(V_j)=O_P(\log n \,n^{-1/d})$
\begin{align*}
B &= \sum_j ||X_j - \hat T(X_j)||^2 P(V_j) = 
\frac{1}{n}\sum_j ||X_j - \hat T(X_j)||^2  +
\sum_j ||X_j - \hat T(X_j)||^2  (P(V_j)-1/n)\\
&=
W^2(P_n,Q_n) + \sum_j ||X_j - \hat T(X_j)||^2  (P(V_j)-1/n)\\
&=
W^2(P_n,Q_n) + O_P(1/n^2) =
W^2(P,Q) + O_P(1/n^2) + O_P(n^{-2/d}) \\
&=\int ||x-T(x)||^2 dP(x) + O_P(n^{-2/d})
\end{align*}
and
\begin{align*}
C &= \sum_j \int_{V_j} \langle x-X_j, X_j - \hat T(X_j)\rangle dP(x) \leq
\sum_j \sqrt{\int_{V_j} || x-X_j||^2} dP(x)\sum_j \sqrt{||X_j - \hat T(X_j)||^2 dP(x)}\\
&= O(\max_j {\rm diam}(V_j)) = O_P(\log n \, n^{-1/d}).
\end{align*}
\end{proof}

The following are standard facts about Voronoi cells
for strictly positive densities.

\begin{lemma}
\label{lemma::Voronoi}
Suppose that $P$ 
is supported on a compact set and has continous density $p$ that
is strictly bounded away from 0.
Let $X_1,\ldots, X_n \sim P$
and let
$V_1,\ldots, V_n$
be the Voronoi cells defined by
$X_1,\ldots, X_n$.
Then
(i) $\mathbb{E}[P(V_i)]=1/n$,
(ii) ${\rm Var}[P(V_i)] \leq 2/n^2$ and
(iii) $\max_j {\rm diam}(V_j) = O_P(\log n\, n^{-1/d})$,
(iv) $\sum_i |P(V_i)-1/n|=o_P(1)$.
\end{lemma}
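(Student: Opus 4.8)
This lemma bundles four statements, and I would treat them by different tools: (i) and (ii) from exchangeability together with the nearest-neighbour characterisation of the cells, (iii) from a covering-radius argument, and (iv) by Cauchy--Schwarz fed by (ii). I expect (iv) to be the only genuinely delicate point, and in fact the main obstacle there is that the statement, as written, appears to be too strong.

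\emph{Parts (i) and (ii).} Since $P$ is absolutely continuous, the cell boundaries are $P$-null, so $\sum_i P(V_i)=1$ almost surely; the points $X_1,\dots,X_n$ are exchangeable, hence $\E[P(V_i)]$ is the same for every $i$ and must equal $1/n$, giving (i). (Equivalently, for a fixed $x$ the event $\{x\in V_i\}$ is exactly that $X_i$ is the nearest of the $n$ points to $x$, which by symmetry has probability $1/n$; integrating against $dP(x)$ reproduces (i).) For (ii) I would use the same characterisation to write
$$
\E[P(V_i)^2]=\int\!\!\int\!\!\int \bigl(1-P(A(x,y,z))\bigr)^{n-1}\,dP(z)\,dP(x)\,dP(y),
$$
where $A(x,y,z)=B\bigl(x,\|x-z\|\bigr)\cup B\bigl(y,\|y-z\|\bigr)$ is the region the other $n-1$ points must avoid for the point at $z$ to be the common nearest neighbour of both $x$ and $y$. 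Because $p\ge c>0$ on the compact support (and under the mild regularity of $\mathcal{X}$ that gives $P(B(w,r)\cap\mathcal{X})\ge c'r^d$ for small $r$), one has $P(A)\gtrsim \max(\|x-z\|,\|y-z\|)^d$, so the integrand is of order $e^{-c''n\max(\|x-z\|,\|y-z\|)^d}$ and is negligible unless both $x$ and $y$ lie within $O(n^{-1/d})$ of $z$. The $P\times P$-mass of that neighbourhood is $O(1/n^2)$ uniformly in $z$, so $\E[P(V_i)^2]=O(1/n^2)$; extracting the explicit constant $2/n^2$ in the variance then requires a careful accounting, most cleanly via the half-space lower bound on $P(A)$ valid near $z$.

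\emph{Part (iii).} The key observation is that every $x\in V_j$ has $X_j$ as its nearest sample point, so $\|x-X_j\|\le \rho_n:=\sup_{x\in\mathcal{X}}\min_i\|x-X_i\|$; hence ${\rm diam}(V_j)\le 2\rho_n$ for all $j$ and it suffices to bound the covering radius $\rho_n$. I would cover $\mathcal{X}$ by $N\lesssim \delta^{-d}$ balls of radius $\delta/2$, note each has $P$-mass at least $c''\delta^d$, so it is empty of samples with probability at most $(1-c''\delta^d)^n\le e^{-c''n\delta^d}$, and a union bound gives $\P(\rho_n>\delta)\le C\delta^{-d}e^{-c''n\delta^d}$. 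Taking $n\delta^d\asymp \log n$, i.e.\ $\delta\asymp (\log n/n)^{1/d}$, sends this to $0$, so $\rho_n=O_P\bigl((\log n)^{1/d}n^{-1/d}\bigr)$, which is in particular the claimed $O_P(\log n\,n^{-1/d})$.

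\emph{Part (iv) --- the main obstacle.} The natural route is Cauchy--Schwarz together with (ii):
$$
\sum_i \bigl|P(V_i)-\tfrac1n\bigr|\le \sqrt{n}\,\Bigl(\sum_i \bigl(P(V_i)-\tfrac1n\bigr)^2\Bigr)^{1/2},
\qquad
\E\sum_i\bigl(P(V_i)-\tfrac1n\bigr)^2=n\,{\rm Var}[P(V_1)]\le \tfrac2n,
$$
which delivers only $\sum_i|P(V_i)-1/n|=O_P(1)$. Strengthening this to $o_P(1)$ is where the difficulty genuinely lies, and I do not expect it to follow from any second-moment bound: the estimate in (ii) is order-sharp, since the normalised masses $nP(V_i)$ converge to a nondegenerate law (in $d=1$, to the average of two independent $\mathrm{Exp}(1)$ spacings), whence $\sum_i|P(V_i)-1/n|$ has a strictly positive limit $\E|W-1|$. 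I would therefore flag that the literal $o_P(1)$ cannot hold and that the provable --- and, for the theorem, sufficient --- conclusion is $\sum_i|P(V_i)-1/n|=O_P(1)$, obtained exactly as above; this already controls the remainder in term $B$ of the theorem once $\|X_j-\hat T(X_j)\|$ is bounded by ${\rm diam}(\mathcal{X})$.
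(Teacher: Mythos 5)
The paper offers no proof of this lemma at all: it is introduced with the sentence ``The following are standard facts about Voronoi cells for strictly positive densities'' and left unproved, so there is no argument of the authors' to compare yours against. On its own terms, your proposal is essentially sound where the lemma is sound. Parts (i) and (iii) are proved correctly (your covering argument in fact yields the sharper rate $O_P((\log n/n)^{1/d})$, which implies the stated one), and your triple-integral argument for (ii) correctly gives ${\rm Var}[P(V_i)]=O(1/n^2)$, though, as you concede, not the literal constant $2$; you are also right that both (ii) and (iii) need a regularity condition on the support (something like $P(B(x,r))\geq c r^d$ for $x\in{\cal X}$ and small $r$) that the paper's hypotheses do not strictly guarantee. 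Most importantly, your refutation of (iv) is correct and identifies a genuine error in the paper: already for $P$ uniform on $[0,1]$, the normalized cell masses $nP(V_i)=n(S_i+S_{i+1})/2$ converge in law to the average $W$ of two independent ${\rm Exp}(1)$ spacings, and a law of large numbers for this $1$-dependent stationary array gives $\sum_i|P(V_i)-1/n|=\frac1n\sum_i|nP(V_i)-1|\to\E|W-1|>0$ in probability. So (iv) as printed is false, and the provable replacement is $\sum_i|P(V_i)-1/n|=O_P(1)$, or equivalently the $\ell_2$ statement $\sum_i(P(V_i)-1/n)^2=O_P(1/n)$.

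The one flaw is your closing claim that the weakened bound $\sum_i|P(V_i)-1/n|=O_P(1)$ is ``for the theorem, sufficient'' once $\|X_j-\hat T(X_j)\|$ is bounded by ${\rm diam}({\cal X})$. That route bounds the remainder in term $B$ by ${\rm diam}({\cal X})^2\sum_j|P(V_j)-1/n|=O_P(1)$, which is not even $o_P(1)$, so it destroys the consistency asserted in the theorem, let alone the claimed $O_P(\log n\, n^{-1/d})$ rate. What does suffice --- using only ingredients you have already established --- is the $\ell_2$ form: by (ii) and Markov's inequality $\sum_j(P(V_j)-1/n)^2=O_P(1/n)$, and then Cauchy--Schwarz gives
$$
\Bigl|\sum_j \|X_j-\hat T(X_j)\|^2\bigl(P(V_j)-\tfrac1n\bigr)\Bigr|
\leq \Bigl(\sum_j\|X_j-\hat T(X_j)\|^4\Bigr)^{1/2}\Bigl(\sum_j\bigl(P(V_j)-\tfrac1n\bigr)^2\Bigr)^{1/2},
$$
where $\sum_j\|X_j-\hat T(X_j)\|^4\leq {\rm diam}({\cal X})^2\, n\, W^2(P_n,Q_n)$ because $\frac1n\sum_j\|X_j-\hat T(X_j)\|^2=W^2(P_n,Q_n)$ by the optimality of the matching. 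The product is then $O_P\bigl(W(P_n,Q_n)\bigr)=O_P(n^{-1/d})$, within the theorem's error budget. This repair is needed in any case: the paper's own proof of the theorem invokes the lemma in the garbled form ``$\max_j|P(V_j)-1/n| + O_P(1/n^2)$'' and claims the cross term is $O_P(1/n^2)$, neither of which follows from the lemma even after correction.
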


\vspace{1cm}

{\bf Wasserstein $k$-means in one dimension.}
Now we give the steps
for exact Wasserstein $k$-means clustering in one dimension.
Given datasets
${\cal D}_1,\ldots, {\cal D}_N$
let $F_1,\ldots, F_N$
be the empirical cdf's.
We use the trimmed Wasserstein distance
$$
W^2(F_j,F_k) = \frac{1}{1-2\delta}\int_\delta^{1-\delta} (F_j^{-1}(s) -F_k^{-1}(s))^2 ds
$$
where $\delta>0$ is some specified positive trimming constant.

First, we use $k-{\rm means}^{++}$ seeding to get starting centroids:

\rule{2in}{1mm}\\
{\sf Wasserstein $k$-means seeding}\\
\rule{2in}{1mm}\\
\begin{enum}
\item Input: integer $k$ and data sets
${\cal D}_1,\ldots, {\cal D}_N$.
\item Compute all parwise Wasserstein distances
$$
D_{jk}^2 = W^2(F_j,F_k).
$$
\item Let $F_1,\ldots, F_N$ be the empirical cdf's of the data sets.
\item Find the starting values:
\begin{enum}
\item Let $c_1= F_j$ where $j$ is chosen randomly from $1,\ldots, N$.
Set ${\cal C} = \{j\}$.
\item Choose $c_2$ randomly from
$\{F_1,\ldots, F_N\}$ where $F_s$ is chosen with probability
proportional to $\min_{j\in {\cal C}}D_{sj}^2$.
\item Set ${\cal C} \leftarrow {\cal C} \bigcup\{j\}$.
\item Repeat the last two steps until ${\cal C}=\{c_1,\ldots, c_k\}$ has $k$ elements.
\end{enum}
\end{enum}

The main algorithm is as follows.

\rule{2in}{1mm}\\
{\sf Wasserstein $k$-means}\\
\rule{2in}{1mm}\\
\begin{enum}
\item Run {\sf Wasserstein $k$-means seeding} to get
centroids ${\cal C}$.
\item Compute
$D_{js} = W(F_j,c_s)$ for $j=1,\ldots, N$ and
$s=1,\ldots, k$.
\item Assign each $F_j$ to its nearest centroid:
let ${\cal C}_j = \{F_r:\ W(F_r,F_j) < W(F_r,F_t)\ t\neq j\}$.
\item Let $c_j$ be the centroid of ${\cal C}_j$ using {\sf Wasserstein centroid}.
\item Repeat last two steps until convergence.
\end{enum}

\rule{2in}{1mm}\\
{\sf Wasserstein centroid}\\
\rule{2in}{1mm}\\
\begin{enum}
\item Given one-dimensional cdf's
$F_1,\ldots, F_N$.
\item Let
$c(s) = \frac{1}{N}\sum_j F_j^{-1}(s).$
\item Return $F(x) = c^{-1}(x)$.
\end{enum}

\bibliographystyle{plainnat}
\bibliography{paper}

\end{document}